\definecolor{bleu_sombre}{rgb}{0,0,0.6} 
\definecolor{bs}{rgb}{0,0,0.6}  
\definecolor{rouge_sombre}{rgb}{0.8,0,0}
\definecolor{rs}{rgb}{0.8,0,0}
\definecolor{vert_sombre}{rgb}{0,0.6,0}
\definecolor{vs}{rgb}{0,0.6,0}
\theoremstyle{plain} 
\newtheorem{theorem}{Theorem}[section]
\newtheorem{lemme}[theorem]{Lemma}
\newtheorem{coro}[theorem]{Corollary}
\newtheorem{prop}[theorem]{Proposition}
\newtheorem{assumption}{Assumption}
\newtheorem{property}[theorem]{Property}
\theoremstyle{definition}
\newtheorem{remark}[theorem]{Remark}
\newtheorem{exemple}[theorem]{Example}
\newtheorem{defi}[theorem]{Definition}
\newcommand{\iu}{{\rm i}}
\def\ec{{\mathbb E}}
\def\ecL{{\mathbb E}} 
\def\pc{{\mathbb P}}
\def\CC{{\mathbb C}}
\def\RR{{\mathbb R}}
\def\NN{{\mathbb N}}
\def\ZZ{{\mathbb Z}}
\def\Hilb{{\mathcal H}}
\def\Dom{{\mathcal D}}
\def\mass{\mu}
\def\vt{\vartheta}
\def\({\left(}
\def\){\right)}
\def\<{\left\langle}
\def\>{\right\rangle}
\numberwithin{equation}{section}
\newcommand{\supp}{\mathrm{supp}}
\newcommand{\dist}{\mathrm{dist}}
\newcommand{\tr}{\mathrm{tr}}
\newcommand{\be}{\begin{equation}}
\newcommand{\ee}{\end{equation}}
\newcommand{\bea}{\begin{eqnarray}}
\newcommand{\eea}{\end{eqnarray}}
\newcommand{\bee}{\begin{eqnarray*}}
\newcommand{\eee}{\end{eqnarray*}}
\begin{document}

\title[Localization properties of GAL's]{Localization for gapped Dirac hamiltonians with random perturbations: application to Graphene Antidot Lattices}
\author[J.-M. Barbaroux]{J.-M. Barbaroux}
\email{jean-marie.barbaroux@univ-tln.fr}
\address[J.-M. Barbaroux]{Aix Marseille Univ, Universit\'e de Toulon, CNRS, CPT, Marseille, France}
\author[H.D. Cornean]{H.D. Cornean}
\email{cornean@math.aau.dk}
\address[H.D. Cornean]{Department of Mathematical Sciences, Aalborg University\\
Fredrik Bajers Vej 7G, 9220 Aalborg \O, Denmark}
\author[S. Zalczer]{S. Zalczer}
\email{sylvain.zalczer@univ-tln.fr}
\address[S. Zalczer]{Aix Marseille Univ, Universit\'e de Toulon, CNRS, CPT, Marseille, France}

\begin{abstract}
In this paper we study random perturbations of first order elliptic operators with periodic potentials. 
We are mostly interested in Hamiltonians modeling graphene antidot lattices with impurities. The unperturbed operator $H_0 := D_S + V_0$ is the sum of a Dirac-like operator $D_S$ plus a potential $V_0$, and is assumed to have an open gap. The random potential $V_{\omega}$ is of  Anderson-type with independent, identically distributed coupling constants and moving centers, with absolutely continuous probability distributions. We prove band edge localization, namely that there exists an interval of energies in the unperturbed gap where the almost sure spectrum of the family $H_{\omega} := H_0 +V_{\omega}$ is dense pure point, with exponentially decaying eigenfunctions, that give rise to dynamical localization.
\end{abstract}
\subjclass[2010]{Primary 81Q10; Secondary 46N50, 34L15, 47A10}
\keywords{Dirac operators, random potentials, localization}
\maketitle
\setcounter{tocdepth}{1}
\tableofcontents

\section{Introduction}
The main goal of this paper is to derive spectral and dynamical localization properties near band edges for first order elliptic and periodic operators densely defined in $L^2(\RR^d, \CC^n)$, perturbed by random potentials. The main application we have in mind is related to graphene antidot lattices. Graphene is a two-dimensional material made of carbon atoms arranged in a honeycomb structure  
\cite{CGPNGG}. 
Charge carriers close to the Fermi energy behave like massless Dirac fermions, making pristine graphene a semimetal.  
One needs to produce an energy gap in order to turn  graphene into a semiconductor. 

Several gapped models have been proposed in the physics literature  (see \cite{DOW} and references therein). One such setting consists of a graphene sheet with periodic nanoscale perforations, a so-called graphene antidot lattice (GAL). Here the quantum dynamics is given by a two dimensional Dirac operator with a periodic mass term. Under certain conditions, a spectral gap appears near the zero energy (see \cite{BTP} and \cite{FPFMBPJ} for theoretical works and \cite{barbaroux} for a mathematical study).

The next step is to perturb the gapped Hamiltonian by an Anderson type potential  for modeling sample impurities. 
There are two types of properties of such Hamiltonians we are interested in (see Definition~\ref{def:all-loc} for details):
\begin{itemize}
\item \emph{Spectral localization}: Dense pure point spectrum near the Fermi level with  exponentially decaying associated eigenfunctions. 
\item \emph{Dynamical localization}: Uniform boundedness in time of moments of positive orders of states which are spectrally  supported in the dense point spectrum. 
\end{itemize}
Starting from the seminal  contributions by Anderson \cite{Anderson} and the rigorous spectral analysis initiated by Pastur \cite{Pastur, GMP77}, a significant number of papers on Anderson-like Hamiltonians have been published in the mathematical literature.

Most of the existing mathematical results regarding these properties are derived for the case were the kinetic energy  is described by discrete or continuous Laplace operators. The case where the kinetic energy is given by Dirac or Maxwell operators has been the subject of studies only recently. 

A step towards Dirac operators has been done in the case where the kinetic energy is given by a Laplacian on $L^2(\RR^d)\otimes \CC^\nu$, $\nu>1$ and the random potential is matrix valued (see \cite{BN2015} and references therein). 
In \cite{PO2005, POC2017} the authors considered discretized versions of Dirac operators on $\ell^2(\ZZ^d, \CC^\nu)$ ($d=1,2,3$), with a simple mass potential, and a random potential given by a matrix valued diagonal operator, and proved spectral and dynamical localization near band edges. 

A precise analysis of the conditions leading to localization enables us to provide a result not only for 2-dimensional continuous Dirac operators, but also for a larger class of first order elliptic operators.  This includes the operators describing ``classical waves" as defined by Klein and Koines \cite{KleinKoines2004}.

In our paper we are mainly interested in the case in which a spectral gap is created near the Fermi level by a deterministic multiplicative potential, which afterwards is perturbed by a random one.

Our main results on spectral and dynamical localization are stated in Theorem~\ref{thm:spec-loc} and Theorem~\ref{thm:dyn-loc}. The proofs of these results exploit the developments of the theory of multi-scale analysis for continuous operators as given by \cite{CH, GDB, GKbootstrap, klein}.
\section{Setting and main results}

We start with a few definitions. 

\begin{defi}

Let $\{\sigma_i\}_{i=1}^d$ be a family of $n\times n$ Hermitian matrices where 
$n,d \geqslant 1$. 
We consider the following \emph{first-order linear operator} with constant coefficients: 
\begin{equation}\label{def:D0}                                     
 \sigma\cdot(-\iu\nabla) := \sum_{j=1}^d \sigma_j(-\iu\frac{\partial}{\partial x_j}),
\end{equation}
densely defined in $L^2(\RR^d, \CC^n)$. It is {\em elliptic} if there exists $C>0$ such that for all $p\in\RR^d$ and $q\in\CC^n$ we have 

\begin{align}\label{hc2}
\|(\sigma\cdot p) q\|_{\CC^n} \geqslant C\|p\|_{\RR^d}\;\|q\|_{\CC^n}.
\end{align}
\end{defi}
%
%
%
If $E_0\in \RR$, the maps
 $$
 	\RR^d\ni p\mapsto g_{ij}(p)
 	:= \left [(\sigma\cdot p-E_0-\iu)^{-1}\right ]_{ij}\in \CC,
 	\quad 1 \leqslant i,j \leqslant n,
 $$ 
are well defined and due to \eqref{hc2} there exists a constant $C<\infty$ such that 
\begin{align}\label{hc3}
|g_{ij}(p)| \leqslant C \langle p\rangle^{-1},\quad 1 \leqslant i, j \leqslant n
\end{align}
where $\langle p\rangle := \sqrt{1+|p|^2}$ for  some norm $|\cdot|$ on $\RR^d$.

A direct consequence is that $\sigma\cdot (-\iu \nabla)$ is self-adjoint on the Sobolev space $H^1(\RR^d,\CC^n)$.  
\color{black}
\begin{defi}
 We say that an operator on $L^2(\RR^d, \CC^n)$ is a {\em   coefficient positive operator} if it is a bounded invertible operator given by the multiplication by an $n\times n$ Hermitian matrix-valued measurable function
 $S(x)$ such that there exist two positive constants $S_\pm$ such that:
 \begin{equation}\label{def:S}
  0<S_-I_n \leqslant S(x) \leqslant S_+I_n,
 \end{equation}
where $I_n$ is the $n\times n$ identity matrix. 
\end{defi}

We consider operators of the type
\begin{equation}\label{hc1}
H_0=S D_0 S+V_0
\end{equation}
where $D_0$ is a first-order elliptic operator with constant coefficients like in \eqref{def:D0}, and $S$ is a coefficient positive operator as in  \eqref{def:S}.
The function $S\in W^{1,\infty}(\RR^d, \mathcal{H}_n(\CC))$, where $\mathcal{H}_n$ is the space of $n\times n$ Hermitian matrices, is supposed to be $\ZZ^d$-periodic. We denote 
 $$
 D_S := S D_0 S.
 $$
Such operators appear in connection with wave propagation and are sometimes called  \emph{classical wave operators} (cf. \cite{KKS, KleinKoines2004}). We warn the reader that this name has nothing to do with the M\"oller wave operators of quantum scattering theory. The potential $V_0$ is $\ZZ^d$-periodic and belongs to $ L^\infty(\mathbb{R}^d, \mathcal{H}_n)$.  

With the above definitions and assumptions the operator $H_0$ is self-adjoint on $H^1(\RR^d, \CC^n)$.

\begin{assumption}[gap assumption]\label{assump-gap} The spectrum of $H_0$ contains a finite open gap,  which will be denoted $(B_-, B_+)$.
\end{assumption}

\begin{exemple}\label{ex1}
The simplest examples are the free Dirac operators with mass $\mass>0$ in dimension two and three, respectively given by
$$
   H_0 = \sigma_1 (-\iu\partial_{x_1}) + \sigma_2 (-\iu\partial_{x_2}) + \mass\sigma_3
   \mbox{ in } L^2(\RR^2,\CC^2),
   \mbox{with $\sigma_i$ being the Pauli matrices, } 
$$
\[\sigma_1=\begin{pmatrix}0&1\\1&0\end{pmatrix},\quad \sigma_2=\begin{pmatrix}0&\iu\\-\iu&0\end{pmatrix} ,\quad \sigma_3=\begin{pmatrix}1&0\\-0&-1\end{pmatrix} \]
and
$$
   H_0 = \boldsymbol{\alpha}\cdot(-\iu\nabla) +  \mass {\beta}
   \mbox{ in } L^2(\RR^3,\CC^4),\mbox{ with }
   \boldsymbol{\alpha} =(\alpha_1, \alpha_2, \alpha_3), \mbox{ $\boldsymbol{\beta}$ being the Dirac matrices}
$$
\[\alpha_i=\begin{pmatrix}0&\sigma_i\\\sigma_i&0\end{pmatrix},\quad \beta=\begin{pmatrix}{\bf1}&0\\0&-{\bf1}\end{pmatrix}. \]
Both operators are such that $\rho(H_0)\cap\RR = (-\mass,\, \mass)$ (cf. \cite{thaller}), where for $T$ self-adjoint, $\rho(T)$ is its resolvent set. 
\end{exemple}

\begin{exemple}\label{ex2}
A family of operators which is physically relevant in connection to graphene antidot lattices, as introduced e.g. 
in \cite{Pedersen2} and rigorously studied in \cite{barbaroux}, is the following:
\begin{equation*}
 H_0(\alpha,\beta)=D_0+\beta \sum_{\gamma \in \ZZ^2}\chi\left(\frac{\cdot -\gamma }{\alpha }\right)\sigma_3\,  \quad \mbox{in $L^ 2(\RR^ 2,\CC^2)$}, 
\end{equation*}
where $D_0=\sigma\cdot (-\iu \nabla)$ is the two-dimensional massless Dirac operator, $\beta>0$, $\alpha\in (0,1]$, and 
$\chi: \RR^2\to\RR$ is a bounded function with support in a compact subset 
of $(-\frac{1}{2},\frac{1}{2}]^2$. 

If $\int \chi\neq 0$ it has been proved  in \cite[Theorem~1.1]{barbaroux} the existence of a spectral gap near zero for this operator, namely that there exist constants $C,C'>0$ and $\delta \in (0,1)$ such that for every $\alpha \in  (0,1/2]$ and $\beta>0$ satisfying $\alpha\beta<\min\{\delta,C'/C\}$  we have
 $$
 [-\alpha^2\beta(C'-C\alpha\beta),\, \alpha^2\beta(C'-C\alpha\beta)]
 \subset \rho(H_0(\alpha,\beta)).
 $$
\end{exemple}
\begin{exemple}\label{ex3}
In \cite{FK} it has been shown that certain operators of the type $D_S$ as in \eqref{hc1}, modeling Maxwell operators with periodic dielectric constants, can also   have open gaps.  
\end{exemple}
\color{black}
\medskip

For operators fulfilling Assumption~\ref{assump-gap}, we want to study the effect of random perturbations on the  spectral gap $(B_-, B_+)$.  

The random matrix-valued perturbation $V_{\omega}$ describing local defects is defined by
\begin{equation*}
 V_{\omega} =
 \sum_{i\in\mathbb{Z}^d}\lambda_{i}(\omega)u(\cdot-\xi_{i}(\omega)-i) ,
\end{equation*}
for some $u$, $\lambda_i$ and $\xi_i$ satisfying Assumption~\ref{assump:2} below.
The total Hamiltonian is thus
\begin{equation}\label{defop}
 H_{\omega} = H_0 + V_{\omega} .
\end{equation}
 
\begin{assumption}\label{assump:2}
  \noindent (i) The real-valued random variables 
  $\{\lambda_{i}(\omega),i\in\mathbb{Z}^d\}$ are independent and identically distributed. Their common distribution is absolutely continuous with respect to Lebesgue measure, with a density $h$ such that $\|h\|_{L^\infty}<\infty$. 
We assume that $\supp (h) = [-m,M] \neq \{0\}$ for some finite non-negative $m$ and $M$. 
  
 \noindent (ii) The variables $\{\xi_i(\omega),\, i\in\mathbb{Z}^d\}$ are independent and identically distributed, and they are also independent from the $\lambda_j$'s.  They take values in $B_R$ with $0<R<\frac{1}{2}$, where $B_R$ is the ball in $\RR^d$ with radius $R$ and centered at the origin.

 \noindent (iii) The single-site matrix potential $u$ is compactly supported with $\mathrm{supp}(u)\subset [-2, 2 ]^d$. In addition, $u$ is assumed to be continuous almost everywhere,  with $u\in L^\infty(\mathbb{R}^d, \mathcal{H}_n^+)$, where $\mathcal{H}_n^+$ is the space of $n\times n$ non-negative Hermitian matrices.

 \noindent (iv) The density $h$ decays sufficiently rapidly near $-m$ and $M$, i.e. 
 $$
 0<\mathbb{P}\left\{|\lambda+m|<\epsilon\right\}\leqslant\epsilon^{d/2+\beta},
 $$
 $$
 0<\mathbb{P}\left\{|\lambda-M|<\epsilon\right\}\leqslant\epsilon^{d/2+\beta}
 $$ 
for some $\beta>0$.
\end{assumption}
%
%
%
\begin{remark}\label{rem:thm}Here are a few comments:

\noindent (i) We take as probability space $\Omega=\left(\supp (h)\right)^{\mathbb{Z}^d}\times(B_R)^{\mathbb{Z}^d}$ equipped with the product probability measure.

\noindent (ii) The periodicity of $V_0$ and $S$, and hypotheses (i) and (ii) imply that the family $\{H_{\omega},\omega\in\Omega\}$ 
 has a deterministic spectrum $\Sigma$ in the sense that there exists $A_0\subset\Omega$ with probability 1 such that $\forall\omega\in A_0,\sigma(H_\omega)=\Sigma$ (cf. for example \cite[Theorem 4.3, p20]{klein}).

\noindent (iii)
A standard result about trace estimates \cite[Theorem 4.1]{simon} states that
$$
f(x)g(-i\nabla)\in\mathcal{T}_q\quad \mbox{if}\quad f,g\in L^q(\RR^d) \quad
\mbox{for}\quad  2\leqslant q < \infty   \ 
$$
with 
\begin{equation*}
  \|f(x)g(-i\nabla)\|_q \leqslant (2\pi)^{- d/q}\|f\|_{L^q}
  \|g\|_{L^q}
\end{equation*}
where $\mathcal{T}_q$ denotes the trace ideal and $\|\cdot\|_q$ the associated norm.

If $q>d$, each $g_{ij}\in L^q(\RR^d)$.  Thus if $f\in L^q(\mathbb{R}^d,\mathcal{M}_n(\CC))$  we obtain that $f(\cdot)(D_0-E_0-\iu)^{-1}\in\mathcal{T}_q$ and there exists a constant $C<\infty$  such that for all $E_0\in\RR$ and $f\in L^q(\mathbb{R}^d,\mathcal{M}_n(\CC))$ one has 
\begin{equation*}
      \|f(\cdot )(D_0-E_0-\iu)^{-1}\|_q \leqslant C
      \max_{1 \leqslant i,j \leqslant n}\|f_{ij}\|_{L^q(\RR^d)}.
     \end{equation*}
     In order to simplify notation we will sometimes forget about the matrix structure of the various objects and simply write for example $\|f\|_{L^q}$ instead of taking the maximum over all its $n^2$ components. 
     
 Denote for simplicity $z=E_0 +i$. We have

$$
  (D_S-z)^{-1}=S^{-1}(D_0-z S^{-2})^{-1}S^{-1}
$$
and 
\begin{equation*}
 (D_0-z S^{-2})^{-1}=(D_0-z)^{-1}- (D_0-z)^{-1}z(I_n- S^{-2} )(D_0-z S^{-2} )^{-1}.
\end{equation*}
A consequence of \eqref{def:S} is that the entries of $S$ and those of $S^{-1}$ are globally bounded.
Hence, for any bounded interval $I\subset\RR$, there  exists a finite constant $C_I$  such that for any $E_0\in I$ and $f\in L^q$ we have:
\begin{equation*}
      \|f(\cdot)(D_S-E_0-i)^{-1}\|_q\leqslant 
      C{_I}\;
      \|f\|_{L^q}.
\end{equation*}
     
If $E_0\in (B_-,\, B_+)$ we have that $(H_0-E_0)^{-1}$ exists as a bounded operator. Then by using both the first resolvent identity to change $E_0$ with $E_0+\iu$ and the second resolvent identity to produce a $(D_S-E_0-\iu)^{-1}$ to the left, we find 
$f(\cdot )(H_0-E_0)^{-1}\in\mathcal{T}_q$ if $q>d$ and that for any compact subinterval $J$ of $(B_-,B_+)$ there  exists a finite constant $C'_J$  such that for any $E_0\in J$ and $f\in L^q$ we have:
\begin{equation}\label{schatten}
\|f(\cdot )(H_0-E_0)^{-1}\|_q \leqslant C{'_J}\|f\|_{L^q} .
\end{equation}

\noindent (iv) Hypotheses (i)-(iii) imply that $\forall \omega$, $\|V_{\omega}\|_\infty\leqslant C$ where $C$ is a finite constant depending only on $m$, $M$, $u$ and $R$.

\noindent (v) As a consequence, the operator $H_\omega$ is self-adjoint on $H^1(\RR^d,\CC^n)$ for any $\omega$.

\noindent (vi) Another useful result is the following. Given a Schwartz function $\chi\in\mathcal{S}(\RR^d,\CC)$, since $S\in W^{1,\infty}(\RR^d, \mathcal{M}_n(\CC))$, the commutator $[H_0,\chi]$ is bounded.
Indeed, we have:
\begin{align*}
  [H_0,\chi]=S\left (\sigma \cdot(-\iu\nabla\chi)\right )S.
\end{align*}

\end{remark}

We denote: 
\begin{equation}\label{Minfini}
 M_\infty := \max\{m,M\} \sup_{ (x_i) \in [-\frac12, \frac12]^{\ZZ^d}}\left\|
 \sum_{i\in\ZZ^d} u(\cdot-x_i-i)\right\|_\infty<\infty ,
 \end{equation}
 
where $\|\cdot\|_\infty$ means the supremum on $\RR^d$ of the operator norm associated with the standard Euclidian norm on $\CC^n$. Remember that $u$ has compact support thus only a finite numbers of terms are different from zero in the above series.

Next, we need an assumption on the almost sure spectrum. In Proposition~\ref{pastout} we will give  sufficient conditions which make sure that it holds.
\begin{assumption}\label{gap}
Let $\Sigma$ be the almost sure spectrum of $H_\omega$. Then there exist two constants $B'_\pm$ satisfying $B_- \leqslant B'_- < B'_+ \leqslant B_+$ such that
$$
 \Sigma\cap\left( (B_-,B'_-)\cup
 (B'_+,B_+)\right)\neq\varnothing \quad \text{and} \quad 
\Sigma\cap(B'_-,B'_+)=\varnothing,$$ i.e. some new almost sure spectrum appears in the old gap, while a smaller gap still exists. 

\end{assumption}

Due to \cite[Theorem~1, \S6, p304]{kirsch1989} we have information on the spectrum not only for almost every $\omega$ but for all $\omega\in\Omega$. 
\color{black}
\begin{defi}
 We say that an ergodic family of operators $(H_\omega)_{\omega\in\Omega}$ is Kirsch-standard if:
 \begin{enumerate}
  \item $\Omega$ is a Polish space and the $\sigma$-algebra contains the Borel sets on $\Omega$.
  
  \item There is a set $\Omega_0$ with probability one such that 
  $H_\omega$ is self-adjoint for any $\omega\in\Omega_0$ 
  and the mapping $\omega\mapsto H_\omega$ restricted to $\Omega_0$ is continuous in the sense that if $\omega_j\to\omega$ then $H_{\omega_j}\to H_\omega$ in the sense of strong resolvent convergence.
 \end{enumerate}

\end{defi}

Let us briefly show that in our case we deal with a Kirsch-standard ergodic family of operators with $\Omega_0=\Omega$. First, $\Omega$ is a Polish  space as a countable product of Polish spaces  when it is equipped with the classical distance on a product of metric spaces. Second, it suffices to show that for any $\phi\in\mathcal{C}^\infty_c(\RR^d,\CC^n)$ we have $H_{\omega_j}\phi\to H_\omega\phi$ when $\omega_j\to\omega$ (cf. \cite[Theorem VIII.25]{RS1}).

If $\omega_j\to\omega\in\Omega$, then for all $i\in\ZZ^d$ $\lambda_i(\omega_j) \to\lambda_i(\omega)$ and $\xi_i(\omega_j) \to\xi_i(\omega)$.
Then (assuming for simplicity $n=1$):  
 $$
 \|H_{\omega_j}\phi-H_\omega\phi\|^2
 =\int_{\RR^d}\sum_{i\in\ZZ^d}
 \left|\lambda_i(\omega_j)u(\cdot-\xi_i(\omega_j)-i)
 -\lambda_i(\omega)u(\cdot-\xi_i(\omega)-i)\right|^2|\phi|^2.
 $$
As $u$ is continuous almost everywhere, the difference in the integral tends almost everywhere to 0 and the integrand is 
bounded by $4M_\infty^2|\phi|^2$ which is integrable. Using the dominated convergence theorem,
we find the desired result.

Note that if $\xi_i(\omega)$ takes only discrete values (including the case where it is constant), we do not need the continuity of $u$.

The fact that $(H_\omega)$ is a standard ergodic family of operators has the important consequence that (see \cite[Theorem~1, \S6, p304]{kirsch1989})
\begin{equation}
 \forall\omega\in\Omega,\ \sigma(H_\omega)\subset \Sigma.\label{spectre}
\end{equation}

Hence $\Sigma$ only depends on the support of the probability distributions. Also, 
$\Sigma \cap [B_-,B_+]$ is characterized by the  following two propositions which state that  under Assumptions~\ref{assump-gap} and \ref{assump:2} one can tune the parameters in such a way that Assumption~\ref{gap} holds and some ``new" almost sure  spectrum appears in the old gap, without closing it though. Moreover, the almost sure spectrum has exactly one (smaller) gap in the given gap of the unperturbed operator. Proofs will be given in Appendix~\ref{specloc}.
%
%
%
\begin{prop}\label{pastout}
There exist $u$, $m$, and $M$ as in Assumption~\ref{assump:2} such that $H_\omega$ satisfies Assumption~\ref{gap}.
\end{prop}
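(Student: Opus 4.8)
The plan is to construct $u$, $m$, $M$ explicitly so that both requirements of Assumption~\ref{gap} hold: on the one hand, some new almost sure spectrum must enter the gap $(B_-,B_+)$; on the other hand, a smaller subgap $(B'_-,B'_+)$ must survive. I would exploit the fact, recorded in \eqref{spectre} and the surrounding discussion, that $\Sigma$ depends only on the supports of the distributions (namely $[-m,M]$ for the coupling constants and $B_R$ for the displacements), together with the characterization of $\Sigma$ for ergodic families: an energy $E$ belongs to $\Sigma$ if and only if it is an approximate eigenvalue for some admissible configuration of the random parameters, i.e.\ for some deterministic choice $(\lambda_i)\in[-m,M]^{\ZZ^d}$, $(\xi_i)\in B_R^{\ZZ^d}$.

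First I would show that one can produce new spectrum inside the gap. Fix a non-negative single-site matrix potential $u$ (as in Assumption~\ref{assump:2}(iii)) with $\|u\|_\infty$ of order one and, say, $u$ bounded below by a positive multiple of $I_n$ on a ball. Consider the periodic operator $H_0 + t\,W$ where $W = \sum_{i\in\ZZ^d} u(\cdot - i)$ and $t\in[-m,M]$; by choosing the configuration $\lambda_i\equiv t$, $\xi_i\equiv 0$, this operator has spectrum contained in $\Sigma$. As $t$ ranges over $[-m,M]$ the band edges of this periodic operator move continuously (analytic perturbation theory for the fibered operators $H_0(k)+tW(k)$ over the Brillouin zone), and since $W$ is a non-trivial bounded periodic perturbation, for $|t|$ large enough the lower edge $B_+$ of the upper spectral band is pushed down into the gap (respectively $B_-$ is pushed up, taking $t<0$). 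Thus for $m$ or $M$ chosen large enough — which is allowed since Assumption~\ref{assump:2} only requires them finite and $[-m,M]\neq\{0\}$ — one gets $\Sigma\cap\big((B_-,B'_-)\cup(B'_+,B_+)\big)\neq\varnothing$.

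Next I would show that the gap does not close completely. Here the key estimate is the a priori bound $\|V_\omega\|_\infty \leqslant M_\infty$ from \eqref{Minfini} and Remark~\ref{rem:thm}(iv), uniform in $\omega$. Since $H_\omega = H_0 + V_\omega$ with $\|V_\omega\|_\infty \leqslant M_\infty$, standard perturbation theory for self-adjoint operators gives $\sigma(H_\omega) \subset \{\dist(\cdot,\sigma(H_0))\leqslant M_\infty\}$, hence $\Sigma$ avoids the middle subinterval $(B_-+M_\infty, B_+-M_\infty)$ provided $M_\infty < (B_+-B_-)/2$. So the strategy is to first fix a single-site profile $u$ and supports $[-m,M]$ large enough that the previous paragraph applies and new spectrum genuinely appears — but then the same choice might make $M_\infty$ too large. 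The resolution is to scale: replace $u$ by $u_\kappa := \kappa u$ does not help directly since it scales $M_\infty$ the same way. Instead, I would scale the supports: keep $u$ fixed, and note that one is free to choose $m$ and $M$. Pushing the band edge into the gap requires $\max\{m,M\}$ to exceed some threshold $t_0$ depending on $u$ and on the gap width; keeping the subgap open requires $M_\infty = \max\{m,M\}\cdot c(u) < (B_+-B_-)/2$ where $c(u)=\sup\|\sum_i u(\cdot-x_i-i)\|_\infty$. These two conditions are simultaneously satisfiable precisely when $t_0 \cdot c(u) < (B_+-B_-)/2$, and $t_0/c(u)$ is a scale-invariant quantity one can control by taking $u$ with a small support (so that few translates overlap, keeping $c(u)$ comparable to $\|u\|_\infty$) but with a large enough lower bound on that support to make the periodic band edge move a prescribed amount per unit of coupling.

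The main obstacle I anticipate is the last point: quantitatively controlling how far the band edge of the periodic operator $H_0+tW$ moves as a function of $t$, uniformly in the choice of $u$, and balancing this against the $\|V_\omega\|_\infty$-bound governed by $M_\infty$. This is essentially a Floquet-theory computation on the fibers, and the delicate part is ensuring that the motion of the edge is bounded below by a constant times $t$ (not merely $o(t)$ or requiring $t$ enormous), which is where positivity of $u$ and a uniform ellipticity/gap lower bound for $H_0$ enter. The other steps — the abstract perturbation bound for the subgap, and the identification of $\Sigma$ with the union over admissible periodic configurations — are standard and follow from the ergodic framework already set up in the excerpt, together with the Schatten-class and commutator bounds in Remark~\ref{rem:thm}.
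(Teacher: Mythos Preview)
Your approach differs substantially from the paper's, and the obstacle you flag at the end is a symptom of a suboptimal choice of test configuration rather than a genuine difficulty.

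The paper does not use the periodic configuration $\lambda_i\equiv t$. Instead it takes a \emph{single-site} configuration (all $\lambda_i=0$ except one), so that one is looking at $H_0+\tau\tilde u$ with $\tilde u$ compactly supported. This is a relatively compact perturbation of $H_0$, and the Birman--Schwinger principle produces, for some real $\tau$, a discrete eigenvalue $\mu_0\in(B_-,B_+)$: one only needs that the compact family $T(\mu)=\tilde u^{1/2}(H_0-\mu)^{-1}\tilde u^{1/2}$ is not identically zero on the gap, which follows since $T'(\mu)=\tilde u^{1/2}(H_0-\mu)^{-2}\tilde u^{1/2}$ vanishes only if $\tilde u=0$. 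No quantitative control is needed. Having fixed $u$ and chosen $m,M$ so that both $0$ and $\tau$ lie in $[-m,M]$, the paper then scales: writing $\Sigma_\lambda$ for the almost sure spectrum of $H_0+\lambda V_\omega$, one proves $\lambda\mapsto\Sigma_\lambda$ is continuous in the Hausdorff distance (via Weyl sequences and Lemma~\ref{lemaunu}), so for $\lambda$ small a gap close to $(B_-,B_+)$ survives while, by the same continuity, some spectrum descended from $\mu_0$ remains inside the old gap.

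Your periodic route has two issues. First, the direction of motion is backwards: for $t>0$ and $W\geqslant 0$ the Feynman--Hellmann derivative of each fiber eigenvalue is non-negative, so it is $B_-$ (the top of the lower band) that moves up into the gap, not $B_+$ moving down; the roles reverse for $t<0$. Second, the threshold $t_0$ you posit does not exist in the form you state: if you choose $u$ so that $W=\sum_i u(\cdot-i)$ is strictly positive everywhere (compatible with $\supp u\subset[-2,2]^d$), the band edge moves strictly for \emph{every} $t>0$, and then taking $t$ small makes $M_\infty$ small as well, so both requirements of Assumption~\ref{gap} hold simultaneously with no balancing at all. Your closing suggestion to shrink the support of $u$ goes in exactly the wrong direction, since then $W$ may vanish on a set and strict monotonicity of the edge is no longer automatic. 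In short, the obstacle you anticipate dissolves once the configuration is chosen correctly---but the single-site Birman--Schwinger argument followed by a scaling/continuity step, as in the paper, sidesteps the band-edge analysis entirely and is both shorter and more robust.
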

%
%
%
\begin{prop}[Location of the spectrum in the gap of $H_0$]\label{projspec2}
Assume the existence of $B'_-$ and $B'_+$ of Assumption \ref{gap}.  Denote   
\begin{equation*}
    \tilde{B}_-=\sup\{E\in\Sigma\ |\ E\leqslant B'_-\}\quad\mbox{and}\quad 
    \tilde{B}_+=\inf\{E\in\Sigma\ |\ E\geqslant B'_+\}.
\end{equation*} 
Then $[B_-,\tilde{B}_-]\subset \Sigma$ and $[\tilde{B}_+,B_+]\subset \Sigma$.
\end{prop}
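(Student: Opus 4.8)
To establish Proposition~\ref{projspec2}, the plan is to realize every energy in $[B_-,\tilde B_-]$ (resp.\ in $[\tilde B_+,B_+]$) as a point of the spectrum of some operator of the form $H_0+tV_\omega$, $t\in[0,1]$, obtained by linearly interpolating between $H_0$ and a member of the family whose spectrum reaches up to $\tilde B_-$ (resp.\ down to $\tilde B_+$). Two preliminary observations will be used. First, combining \eqref{spectre}, which gives $\sigma(H_\omega)\subseteq\Sigma$ for \emph{every} $\omega\in\Omega$, with Remark~\ref{rem:thm}(ii), which yields one $\omega$ with $\sigma(H_\omega)=\Sigma$, one gets $\Sigma=\bigcup_{\omega\in\Omega}\sigma(H_\omega)$; in particular $\Sigma$ is closed and each of its points is a spectral value of some $H_\omega$. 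Second, $\Omega$ is stable under contraction of the coupling constants: for $\omega\in\Omega$ and $t\in[0,1]$ one has $H_0+tV_\omega=H_{\omega_t}$, where $\omega_t$ keeps the centers $\xi_i(\omega)\in B_R$ and has coupling constants $t\lambda_i(\omega)\in[-m,M]=\supp(h)$ (here we use $m,M\ge 0$), so that $\omega_t\in\Omega$; hence $\sigma(H_0+tV_\omega)\subseteq\Sigma$ and, by Assumption~\ref{gap}, $\sigma(H_0+tV_\omega)\cap(B'_-,B'_+)=\varnothing$ for all $t\in[0,1]$.

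I will prove $[B_-,\tilde B_-]\subseteq\Sigma$; the inclusion $[\tilde B_+,B_+]\subseteq\Sigma$ then follows by the mirror argument (infimum of the spectrum above $B'_+$ in place of supremum of the spectrum below $B'_-$). Since $\tilde B_-\in\Sigma$, fix $\omega\in\Omega$ with $\tilde B_-\in\sigma(H_\omega)$ and set $H_t:=H_0+tV_\omega$, $t\in[0,1]$. Define
\[
a(t):=\sup\bigl(\sigma(H_t)\cap(-\infty,B'_-]\bigr).
\]
As $H_t$ is a bounded perturbation of the Dirac-type operator $H_0$, its spectrum is unbounded below, so the set in parentheses is non-empty, closed, bounded above by $B'_-$, and the supremum is attained: $a(t)\in\sigma(H_t)\subseteq\Sigma$ for all $t$. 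Moreover $a(0)\le B_-$ (because $(B_-,B_+)\cap\sigma(H_0)=\varnothing$ and $B'_-<B_+$) while $a(1)\ge\tilde B_-$ (because $\tilde B_-\in\sigma(H_\omega)\cap(-\infty,B'_-]$). Consequently, \emph{once $t\mapsto a(t)$ is known to be continuous}, the intermediate value theorem gives $[B_-,\tilde B_-]\subseteq[a(0),a(1)]\subseteq a([0,1])\subseteq\Sigma$.

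The continuity of $a$ is the only delicate step, and it is here that Assumption~\ref{gap} is genuinely used: we need $\mu_0:=\tfrac12(B'_-+B'_+)$ to lie in a spectral gap \emph{common to all} the $H_t$, which holds since $\sigma(H_t)\subseteq\Sigma$ and $\Sigma\cap(B'_-,B'_+)=\varnothing$. With $\delta:=\tfrac12(B'_+-B'_-)$ this gives the uniform bound $\|(H_t-\mu_0-\iu\eta)^{-1}\|\le(\delta^2+\eta^2)^{-1/2}$ for all $t\in[0,1]$, $\eta\in\RR$. Writing $P_t:=\mathbf{1}_{(-\infty,B'_-]}(H_t)=\mathbf{1}_{(-\infty,\mu_0)}(H_t)=\tfrac12\bigl(I-\mathrm{sgn}(H_t-\mu_0)\bigr)$ and using the standard resolvent representation $\mathrm{sgn}(H_t-\mu_0)=\tfrac1\pi\lim_{N\to\infty}\int_{-N}^{N}(H_t-\mu_0-\iu\eta)^{-1}\,\dd\eta$ (strong limit), the divergent $(\iu\eta)^{-1}$ contributions cancel in a difference, leaving the norm-convergent identity
\[
P_t-P_s=\frac{t-s}{2\pi}\int_{\RR}(H_t-\mu_0-\iu\eta)^{-1}\,V_\omega\,(H_s-\mu_0-\iu\eta)^{-1}\,\dd\eta ,
\]
so that $\|P_t-P_s\|\le\tfrac{|t-s|\,\|V_\omega\|}{2\pi}\int_{\RR}(\delta^2+\eta^2)^{-1}\dd\eta=\tfrac{|t-s|\,\|V_\omega\|}{2\delta}$, $\|V_\omega\|$ being finite by Remark~\ref{rem:thm}(iv); hence $t\mapsto P_t$ is norm-Lipschitz. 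Since norm-resolvent continuity of $t\mapsto H_t$ makes $t\mapsto(H_t-\mu_0)^{-1}$ norm-continuous too, the product $(H_t-\mu_0)^{-1}P_t$ is norm-continuous, it is nonzero, and an elementary computation gives $\|(H_t-\mu_0)^{-1}P_t\|=(\mu_0-a(t))^{-1}$. Therefore $a(t)=\mu_0-\|(H_t-\mu_0)^{-1}P_t\|^{-1}$ is continuous, which completes the plan.

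To summarize, the main obstacle is precisely the continuity of $a$, equivalently the norm-continuity of the spectral projection onto the part of $\sigma(H_t)$ lying below the gap $(B'_-,B'_+)$; this requires that this gap persist along the entire interpolation path, which is exactly the content of Assumption~\ref{gap} through $\sigma(H_t)\subseteq\Sigma$. The remaining ingredients — the description $\Sigma=\bigcup_{\omega}\sigma(H_\omega)$, the scaling stability of the family $\Omega$, and the intermediate value argument — are soft.
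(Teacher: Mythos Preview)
Your proof is correct and follows the same overall strategy as the paper: interpolate by scaling the random potential, $H_t=H_0+tV_\omega$ with $t\in[0,1]$, observe that each $H_t$ is still a member of the family so its spectrum lies in $\Sigma$ and avoids $(B'_-,B'_+)$, track the spectral edge below this common gap, and apply the intermediate value theorem.

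The execution differs in two respects. First, the paper works at the level of \emph{almost sure spectra}: it considers the scaled random family $H_{\lambda\omega}$, its almost sure spectrum $\Sigma_\lambda$, and sets $E_\lambda=\sup\bigl(\Sigma_\lambda\cap[B_-,B'_-)\bigr)$; you instead fix a single $\omega$ with $\tilde B_-\in\sigma(H_\omega)$ and track $a(t)=\sup\bigl(\sigma(H_t)\cap(-\infty,B'_-]\bigr)$. Second, the continuity step is handled differently: the paper relies on the Hausdorff continuity $\lambda\mapsto\Sigma_\lambda$ established in Lemma~\ref{lematrei} via a Weyl-sequence and finite-volume approximation argument, whereas you give a direct functional-calculus proof (norm-Lipschitz continuity of $t\mapsto P_t=\mathbf 1_{(-\infty,\mu_0)}(H_t)$ from the resolvent integral for the sign function, then $a(t)=\mu_0-\|(H_t-\mu_0)^{-1}P_t\|^{-1}$). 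Your route is more self-contained here---it needs neither the probabilistic machinery nor the Hausdorff-distance argument---while the paper's version has the advantage of reusing a continuity result already proved for Proposition~\ref{pastout}. Both arguments hinge on the same crucial point you correctly isolate: the gap $(B'_-,B'_+)$ persists along the whole homotopy because $\sigma(H_t)\subset\Sigma$.
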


Our main results on localization are the following.
\begin{theorem}[Spectral localization]\label{thm:spec-loc}
Under Assumptions~1, 2 and 3, there exist two constants $E_\pm$ satisfying $B_-\leqslant E_- \leqslant B'_-$ and 
$ B'_+ \leqslant E_+\leqslant B_+$ such that $\Sigma\cap (E_-,E_+)$ is non-empty, dense pure point, with exponentially decaying eigenfunctions.
\end{theorem}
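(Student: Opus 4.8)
The plan is to derive Theorem~\ref{thm:spec-loc} from the bootstrap multi-scale analysis (MSA) of Germinet and Klein \cite{GKbootstrap} (see also \cite{CH,GDB,klein}), for which it suffices to verify, on suitable energy intervals, its three standing hypotheses: a generalized eigenfunction expansion with polynomially bounded generalized eigenfunctions, a Wegner estimate, and an initial length scale estimate. I would work with finite-volume operators $H_{\omega,L}$ obtained by restricting $H_\omega$ to boxes $\Lambda_L(x)=x+(-L/2,L/2)^d$ with a self-adjoint boundary condition preserving the resolvent identities and the trace-ideal and commutator bounds of Remark~\ref{rem:thm}(iii)--(vi); equivalently, one may run the whole scheme with smoothly cut-off resolvents of the infinite-volume operator $H_\omega$ as in \cite{GKbootstrap}. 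The key structural point is that, since each box lies inside the unperturbed gap $(B_-,B_+)$ and $V_\omega$ is uniformly bounded, spectral projections onto compact subintervals of $(B_-,B_+)$ are trace class with $O(L^d)$ trace, by the finite-volume version of \eqref{schatten}; this is the substitute for the boundedness from below used in the Schrödinger case, so the first-order, unbounded-below nature of $D_S$ creates no obstruction. The generalized eigenfunction expansion itself holds for $H_\omega$ by standard elliptic regularity (local $H^1$-smoothing of $(H_\omega-z)^{-1}$ and Sobolev embedding, again via Remark~\ref{rem:thm}(iii)).

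For the Wegner estimate, fix a compact interval $J\subset(B_-,B'_-]$ (the interval $[B'_+,B_+)$ being symmetric). The single-site potential is almost-surely positive, $u\in\mathcal H_n^+$, and $\partial_{\lambda_i}H_\omega=u(\cdot-\xi_i(\omega)-i)\ge 0$, so spectral averaging in the Combes--Hislop--Kotani--Simon spirit applies: conditioning on the $\xi_i$'s and on $(\lambda_j)_{j\neq i}$, averaging $\langle\psi,\chi_I(H_{\omega,L})\psi\rangle$ against the bounded density $h$, and summing over the $O(L^d)$ sites whose single-site terms meet $\Lambda_L$ --- where one uses the covering structure forced by $\supp u\subset[-2,2]^d$, $\xi_i\in B_R$ with $R<\tfrac12$, the a.e.\ continuity of $u$, and its positivity --- yields
\begin{equation*}
\mathbb{P}\big(\dist(\sigma(H_{\omega,L}),E)<\epsilon\big)\le C\,\|h\|_{L^\infty}\,\epsilon\,L^{d},\qquad E\in J,
\end{equation*}
for all large $L$, the volume factor $L^d$ and the finiteness of the constant coming from the finite-volume form of \eqref{schatten}.

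The initial length scale estimate at the internal band edges is the step I expect to be the main obstacle. By Proposition~\ref{projspec2} the new spectral component in the lower half of the gap is $[B_-,\tilde{B}_-]$; by monotonicity of $H_\omega$ in the couplings together with a Kirsch--Martinelli-type argument, its upper edge $\tilde{B}_-$ equals the supremum of the lower spectrum over the compact family of periodic configurations with $\lambda_i\equiv M$. Consequently the event ``$H_{\omega,L}$ has spectrum in $(\tilde{B}_--\delta,\tilde{B}_-]$'' forces a positive fraction of the couplings $\lambda_i$ in $\Lambda_L$ to lie within an amount $\eta=\eta(\delta)$ of $M$; combining a Combes--Thomas resolvent estimate and the trace bound \eqref{schatten} inside the gap with Assumption~\ref{assump:2}(iv), $\mathbb{P}(|\lambda-M|<\eta)\le\eta^{\,d/2+\beta}$ (and the symmetric bound at $-m$ for the edge $\tilde{B}_+$), one obtains a Lifshitz-type bound $\mathbb{P}(\dist(\sigma(H_{\omega,L}),E)\le L^{-2})\le e^{-cL^{\zeta}}$ for $E$ in a one-sided neighborhood of $\tilde{B}_\pm$ and some $\zeta>0$, hence the required initial-scale resolvent decay with probability $\ge 1-L_0^{-p}$. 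The delicate point is the quantitative implication ``spectrum near the internal edge $\Rightarrow$ atypical coupling configuration'' in the matrix-valued, first-order setting: Temple's inequality cannot be applied to $H_\omega$ directly, so one must instead track the shift of the spectrum of $H_0$ under the bounded perturbation $V_\omega$ by means of \eqref{schatten}, and Assumption~\ref{assump:2}(iv) is tailored so that this probabilistic cost beats the combinatorial entropy of choosing the atypical sites.

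Finally, with the Wegner estimate and the initial length scale estimate in hand, the bootstrap MSA of \cite{GKbootstrap} (in the form of \cite{GDB,klein}) applies on compact subintervals of one-sided neighborhoods of $\tilde{B}_-$ and of $\tilde{B}_+$ and yields $\delta,\delta'>0$ such that, almost surely, $H_\omega$ restricted to $(\tilde{B}_--\delta,\tilde{B}_-]$ and to $[\tilde{B}_+,\tilde{B}_++\delta')$ has only pure point spectrum with exponentially (indeed SULE) decaying eigenfunctions. Setting $E_-:=\max\{B_-,\tilde{B}_--\delta\}$ and $E_+:=\min\{B_+,\tilde{B}_++\delta'\}$, one has $B_-\le E_-\le\tilde{B}_-\le B'_-$ and $B'_+\le\tilde{B}_+\le E_+\le B_+$, so $(E_-,E_+)$ contains the gap $(B'_-,B'_+)$ of $\Sigma$ together with one-sided neighborhoods of $\tilde{B}_\pm$ lying in $\Sigma$; since Assumption~\ref{gap} forces at least one of $\tilde{B}_->B_-$, $\tilde{B}_+<B_+$ to hold, $\Sigma\cap(E_-,E_+)$ is non-empty, and on it $\sigma(H_\omega)=\Sigma$ almost surely is dense pure point --- the eigenvalues being dense in $\Sigma\cap(E_-,E_+)$ by ergodicity --- with exponentially decaying eigenfunctions, which is the assertion of Theorem~\ref{thm:spec-loc}.
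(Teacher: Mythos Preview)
Your overall strategy---verify the inputs of the Germinet--Klein bootstrap MSA and invoke Theorems~\ref{pi} and~\ref{thm:main-section4}---is exactly the paper's, and your Wegner paragraph matches the paper's argument (Combes--Hislop spectral averaging plus the trace-ideal bound \eqref{schatten}). Two points differ, one of them substantively.

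First, a minor but useful difference in setup: the paper does \emph{not} restrict $H_\omega$ to boxes with self-adjoint boundary conditions (which is delicate for first-order operators) and does not use cut-off infinite-volume resolvents either. Instead the ``finite-volume'' operator $H_{\omega,x,L}$ in \eqref{eq:1.4} lives on all of $L^2(\RR^d,\CC^n)$ and keeps only the random terms indexed by $\Lambda_L(x)\cap\ZZ^d$. This sidesteps the boundary issue entirely, and (SLI), (EDI), (NE), (W), (SGEE) then follow with only the commutator and $\mathcal T_q$ bounds from Remark~\ref{rem:thm}.

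Second, the initial-scale estimate is where your sketch has a genuine gap. You assert that spectrum of $H_{\omega,L}$ near $\tilde B_-$ forces a \emph{positive fraction} of the $\lambda_i$ in $\Lambda_L$ to lie within $\eta(\delta)$ of $M$, and then aim for a Lifshitz-type bound $e^{-cL^\zeta}$; you flag this as ``the main obstacle'' but do not indicate how to prove the implication in the first-order, matrix-valued setting. The paper neither needs nor proves any of this. Its argument is: if \emph{all} $\lambda_i$ in $\tilde\Lambda$ satisfy $-(1-\epsilon)m<\lambda_i<(1-\epsilon)M$, then a Feynman--Hellmann monotonicity computation (Proposition after Lemma~\ref{vp}) shows that an eigenvalue of $H_{\omega,\Lambda}$ in $[\tilde B_--\delta,\tilde B_-]$ could be pushed strictly above $\tilde B_-$ while remaining an eigenvalue of some $H_{\omega',\Lambda}$ with admissible couplings, contradicting Lemma~\ref{vp} and the definition of $\tilde B_-$. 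A plain union bound over the $L^d$ sites, together with Assumption~\ref{assump:2}(iv), then gives $\pc(\text{some }\lambda_i\text{ near an extreme})\le 2L^d (C\delta)^{d/2+\beta}$, which for $\delta=2L_0^{\nu-2}$ with $0<\nu<4\beta/(2\beta+d)$ is below $1/841^d$ for $L_0$ large. Combes--Thomas (Lemma~\ref{lemma:app-1}) converts the resulting spectral distance into the resolvent decay of (H1). So only the event ``at least one $\lambda_i$ is extremal'' is needed, and only the polynomial probability threshold of (H1), not a Lifshitz tail.
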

%
%
%
\begin{theorem}[Dynamical localization]\label{thm:dyn-loc}
Suppose Assumptions~1, 2 and 3 hold, and denote $E_\pm$  the two energies of Theorem~\ref{thm:spec-loc}. If $r>0$ and $\psi\in L^2(\mathbb{R}^d,\mathbb{C}^n)$ has compact support, then for any compact interval $J\subset(E_-,E_+)$,
\begin{equation}\label{def:dyn-loc}
	\ec\left\{\| \, |x|^r E_\omega(J) e^{-iH_{\omega}t}\psi\|^2\right\}<\infty
\end{equation}
where $E_\omega(J)$ denotes the spectral projector on the interval $J$ for $H_\omega$ and $\ec$ is the expectation associated to $\mathbb{P}$.
\end{theorem}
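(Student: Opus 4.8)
The plan is to obtain Theorem~\ref{thm:dyn-loc} as a by-product of the same analysis that proves Theorem~\ref{thm:spec-loc}. That proof proceeds by verifying, on a neighbourhood $(E_-,E_+)$ of the perturbed band edges $B'_\pm$, the two hypotheses needed to run the Germinet--Klein bootstrap multi-scale analysis (cf.\ \cite{GKbootstrap,klein,GDB}): namely (a) a \emph{Wegner estimate} for the finite-volume operators $H_{\omega,\Lambda_L}$, the restrictions of $H_\omega$ to cubes $\Lambda_L$ of side $L$ centred at lattice points with suitable boundary conditions, and (b) an \emph{initial length scale estimate} at an energy close to the edge of the almost sure spectrum. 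Once (a) and (b) hold throughout $(E_-,E_+)$, the output of the bootstrap analysis is not merely pure point spectrum with exponentially decaying eigenfunctions, but the strong form of dynamical localization: for every compact $J\subset(E_-,E_+)$, every $r>0$ and every bounded region $K\subset\RR^d$,
\[
 \ec\Big\{\sup_{t\in\RR}\big\|\,\langle x\rangle^{r}\,e^{-\iu H_\omega t}\,E_\omega(J)\,\mathbf{1}_K\,\big\|_2^2\Big\}<\infty ,
\]
where $\mathbf{1}_K$ is the characteristic function of $K$ and $\|\cdot\|_2$ the Hilbert--Schmidt norm. Choosing $K\supset\supp\psi$, writing $\psi=\mathbf{1}_K\psi$ and bounding the operator norm by the Hilbert--Schmidt norm yields \eqref{def:dyn-loc}, in fact with the supremum over $t$. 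Hence the whole content lies in establishing (a) and (b).

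\emph{Wegner estimate.} I would follow the Combes--Hislop--Klopp scheme. One uses that $V_\omega$ depends on the i.i.d.\ coupling constants through the non-negative single-site term $u\in L^\infty(\RR^d,\mathcal{H}_n^+)$ of Assumption~\ref{assump:2}(iii), whose common density $h$ is bounded by Assumption~\ref{assump:2}(i); spectral averaging over one $\lambda_i$ then produces the factor $\|h\|_\infty$, and summing over the $\sim L^d$ sites in $\Lambda_L$ gives the volume factor. The one point where the first-order nature of $H_0$ really intervenes is in bounding the trace of $\mathbf{1}_{\Lambda_L}\,\IM\!\big((H_\omega-E-\iu\eta)^{-1}\big)\,\mathbf{1}_{\Lambda_L}$: the Schr\"odinger-type estimate $(H-z)^{-2}\in\mathcal{T}_1$ is unavailable because the resolvent only gains one power of $\langle p\rangle$, see \eqref{hc3}, so one reroutes all trace manipulations through the Schatten-$\mathcal{T}_q$ bounds of Remark~\ref{rem:thm}(iii), in particular \eqref{schatten}, composing more than $d$ resolvent factors (or using a Birman--Schwinger reduction) to recover trace-class control. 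This yields $\pc\{\dist(\sigma(H_{\omega,\Lambda_L}),E)<\epsilon\}\le C\epsilon L^d$ uniformly for $E$ in compact subintervals of the gap.

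\emph{Initial length scale estimate.} The edges $\tilde B_\pm$ of $\Sigma$ inside $(B_-,B_+)$ identified in Proposition~\ref{projspec2} are \emph{fluctuation boundaries}: by that proposition the almost sure spectrum fills $[B_-,\tilde B_-]\cup[\tilde B_+,B_+]$ with a genuine gap in between, and approaching $\tilde B_\pm$ forces essentially all coupling constants in the box to sit near one of the extreme values $-m$, $M$ of $\supp h$. A Combes--Thomas estimate for the first-order operator $H_\omega$, available thanks to ellipticity \eqref{hc2}--\eqref{hc3} and the boundedness of the commutators $[H_0,\chi]$ recorded in Remark~\ref{rem:thm}(vi), combined with the quantitative smallness of that event --- which is precisely what the exponent $d/2+\beta$ in Assumption~\ref{assump:2}(iv) is tailored to provide --- yields a Lifshitz-type bound on the finite-volume counting function near $\tilde B_\pm$. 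From this one extracts, at an energy just inside the edge and for a suitable initial scale $L_0$ and decay rate $\gamma>0$, the probabilistic estimate $\pc\{\Lambda_{L_0}\text{ is not }(\gamma,E)\text{-good}\}\le L_0^{-p}$ with the exponent $p$ demanded by the bootstrap.

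\emph{Main obstacle.} The delicate part is this Wegner/initial-scale pair for a first-order operator rather than a Schr\"odinger operator: there is no a priori $(H-z)^{-2}$ trace-class bound, so every resolvent and trace estimate must be carried through the $\mathcal{T}_q$ calculus with $q>d$ --- which is exactly why those bounds were prepared in Remark~\ref{rem:thm}(iii) --- and the Combes--Thomas decay must be established for operators of the type $\sigma\cdot(-\iu\nabla)$. Once these analytic inputs are in place, feeding them into the Germinet--Klein machinery and reading \eqref{def:dyn-loc} off its dynamical-localization conclusion is routine.
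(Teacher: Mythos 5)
Your plan mirrors the paper's approach: Theorem~\ref{thm:dyn-loc} is obtained as a direct consequence of Theorem~\ref{thm:main-in-the-text}, which verifies the Germinet--Klein MSA hypotheses (the Wegner estimate (W)/(NE), the initial scale estimate (H1), and also (SLI), (EDI), (SGEE) and (IAD)) on a neighbourhood of the band edges, and the dynamical-localization statement \eqref{def:dyn-loc} is then read off from the (SSEHSKD) property of Definition~\ref{def:all-loc} exactly as you indicate. Your technical sketches of the Wegner and initial-scale estimates---routing everything through the $\mathcal{T}_q$ calculus of Remark~\ref{rem:thm}(iii) to replace the unavailable $(H-z)^{-2}$ trace-class bound, spectral averaging over the $\lambda_i$, and Combes--Thomas decay near the fluctuation boundaries $\tilde{B}_\pm$ exploiting the $d/2+\beta$ exponent of Assumption~\ref{assump:2}(iv)---are fully consistent with the paper's Paragraphs~\ref{Wegner} and~\ref{H1}, the only omission being that (SGEE) for first-order operators requires its own Schatten-class argument rather than being entirely routine.
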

%
%
%
Throughout this article, we shall use the sup norm in $\RR^d$
\begin{equation}\label{dist}
  |x| = \max\{ |x_i| : i=1,\ldots,d\} .
\end{equation}
%
%
%
\begin{remark}
Some stronger dynamical localization results will be described in the next section, see in particular the estimate \eqref{eq:strong-loc} which will be proved in Theorem~\ref{thm:main-in-the-text}. In particular, Theorem~\ref{thm:dyn-loc} is a straightforward consequence of Theorem~\ref{thm:main-in-the-text}. 
\end{remark}

\section{One method to localize them all: Germinet and Klein's bootstrap multiscale analysis}

Here we briefly explain how Germinet and Klein's multiscale analysis has to be applied in our setting. 
More details can be found in \cite{GKbootstrap} and \cite{klein}. 

In this section, $H_\omega$ denotes an ergodic random self-adjoint operator on $L^2(\RR^d,\CC^n)$. 

\subsection{Spectral and dynamical localization}
Given a set $B\subset\RR^d$, we denote $\chi_B$ its characteristic function. For $x\in\ZZ^d$, we denote $\chi_x$ the characteristic function of the cube of side-length $1$ centered at $x$.
We recall that $\langle x\rangle =\sqrt{1+|x|^2}$. The projection-valued spectral measure of $H_\omega$ will be denoted by $E_\omega(\cdot)$. The Hilbert-Schmidt norm of an operator $A$ is denoted by $\|A\|_2$.
\begin{defi}\label{def:all-loc}
 Let $H_\omega$ be an ergodic random operator defined on a probability space $(\Omega,\mathcal{F},\pc)$ and $I$ an open interval. The different localization properties are the following:
 \begin{enumerate}
  \item  
The family of operators $(H_\omega)$ exhibits exponential localization (EL) in $I$ if it has only pure point spectrum in $I$ and for $\pc$- almost every $\omega$ the eigenfunctions of $H_\omega$ with eigenvalue in $I$ decay exponentially in the $L^2$ sense, i.e. for $\pc$- almost every $\omega$, for any eigenvalue $E$ in $I$ and any associated eigenfunction $\psi_E$, there exist constants $C$ and $m>0$ such that for all $x\in\ZZ^d$, 
$\|\chi_x\psi_E\|\leqslant Ce^{-m|x|}$.

\item  $H_\omega$ exhibits strong dynamical localization (SDL) in $I$ if $\Sigma\cap I\neq\varnothing$ and for  each compact interval $J\subset I$ and $\psi\in\Hilb$ with compact support, we have 
  \begin{equation*}
   \ec\left\{\sup_{t\in\RR}\|\langle x\rangle ^{r}E_\omega(J)e^{-itH_\omega}\psi\|^2\right\}<\infty\text{ for all }r \geqslant 0.
  \end{equation*}
\item $H_\omega$ exhibits strong sub-exponential Hilbert-Schmidt-kernel decay (SSEHSKD) in $I$  if $\Sigma\cap I\neq\varnothing$ and for  each compact interval $J\subset I$ and
$0<\zeta<1$ there is a finite constant $C_{J,\zeta}$ such that 
\begin{equation}\label{eq:strong-loc}
 \ec\left\{\sup_{\|f\|_{{\infty}}\leqslant 1}\|\chi_x 
 E_\omega(J)f(H_\omega)\chi_y\|_2^2\right\}\leqslant C_{I,\zeta}e^{-|x-y|^\zeta},
\end{equation}
for all $x$, $y\in\ZZ^d$, the supremum being taken over all Borel functions $f$ of a real variable, with $\|f\|_{{\infty}}=\sup_{t\in\RR}|f(t)|$, and $\|\cdot\|_2$ is the Hilbert-Schmidt norm.
 \end{enumerate}

\end{defi}

Other types of localization are presented in \cite{klein} but they are all implied by (SSEHSKD). Note that (SDL) is also implied by (SSEHSKD).

As in \cite{klein}, we define $\Sigma_{EL}$ (resp. $\Sigma_{SSEHSKD}$) as the set of $E\in\Sigma$ for which there exists an open interval $I\ni E$ such that $H_\omega$
exhibits exponential localization (resp. strong sub-exponential Hilbert-Schmidt kernel decay) 
 in $I$.

\subsection{Generalized eigenfunction expansion}\label{S:section3.2}

Let $\Hilb=L^2(\RR^d,dx;\CC^n)$. Given $\nu>d/4$, we define the weighted spaces $\Hilb_\pm$ as
 $$
 \Hilb_\pm=L^2(\RR^d,\langle x\rangle ^{\pm4\nu}dx;\CC^n).
 $$

The sesquilinear form
 $$
 \langle\phi_1,\phi_2\rangle_{\Hilb_+,\Hilb_-}=\int\bar{\phi_1}(x)\cdot\phi_2(x)dx
 $$
where $\phi_1\in\Hilb_+$ and $\phi_2\in\Hilb_-$ is the duality map.
\color{black}

We set $T$ to be the self-adjoint operator on $\Hilb$ given by multiplication by the function $\langle x\rangle ^{2\nu}$ ; note that $T^{-1}$ is bounded.

\begin{property}[SGEE]
 We say that an ergodic random operator $H_\omega$ satisfies the strong property of  generalized eigenfunction expansion (SGEE) in some open interval $I$ if, for some $\nu>d/4$, 
 \begin{enumerate}
  \item The set
 $$
 \mathcal{D}^\omega_+=\{\phi\in\mathcal{D}(H_\omega)\cap \mathcal{H}_+; 
 H_\omega\phi\in \Hilb_+\}
 $$
 is dense in $\Hilb_+$ and is an operator core for $H_\omega$ with probability one.
 
 \item There exists a bounded, continuous function $f$ on $\RR$, strictly positive on the spectrum of $H_\omega$
 such that
 \begin{equation*}
  \ec\left\{[ \tr(T^{-1}f(H_\omega) E_\omega(I)T^{-1})]^2\right\}<\infty.
 \end{equation*}

 \end{enumerate}
\end{property}
%
%
%
\begin{defi}
A measurable function $\psi:\RR^d\to\CC^n$ is said to be a generalized eigenfunction of $H_\omega$ with generalized eigenvalue $\lambda$ if $\psi\in\Hilb_-\backslash\{0\}$ 
and 
 $$
 \langle H_\omega\phi,\psi \rangle_{\Hilb_+,\Hilb_-}=\lambda\langle \phi,
 \psi \rangle_{\Hilb_+,\Hilb_-},\ \mbox{ for all } \phi\in\Dom_+^\omega.
 $$
\end{defi}
%
%
%
As explained in \cite{klein}, when (SGEE) holds, a generalized eigenfunction which is in $\Hilb$ is a bona fide eigenfunction. 
Moreover, if $\mu_\omega$ is the spectral measure for the restriction of $H_\omega$ to the Hilbert space $E_\omega(I)\Hilb$, then $\mu_\omega$-almost every $\lambda$ is a generalized eigenvalue of $H_\omega$.

\subsection{Finite volume operators and their properties}

%

We remind the reader that throughout this article we use the sup norm in $\RR^d$ :
$|x|=\max\{|x_i|:i=1, \ldots ,d \}$. By $\Lambda_L(x)$ we denote the open box of side $L>0$ centered at $x\in\RR^d$:
  $$
  \Lambda_L(x)=\{y\in\RR^d;|y-x|<\frac{L}{2}\},
  $$
and by $\bar{\Lambda}_L(x)$ the closed box. We define the boundary belt as 
  $$
  \Upsilon_L(x)= \bar{\Lambda}_{L-1}(x)\backslash \Lambda_{L-3}(x).
  $$

We will write $\Lambda_l\sqsubset\Lambda_L(x)$ when a smaller box $\Lambda_l$ is completely surrounded by the belt $\Upsilon_L(x)$ of a bigger 
 box $\Lambda_L(x)$.  More precisely, this means that if $x\in \ZZ^d$ and $L>l+3$ we have $\Lambda_l\subset\Lambda_{L-3}(x)$. 

Given a box $\Lambda_L(x)$, we define the localized operator
\begin{equation}\label{eq:1.4}
 H_{\omega,x,L}=H_0+\sum_{i\in\Lambda_L(x)\cap\ZZ^d}\lambda_i(\omega)u_i(\cdot -\xi_i(\omega))
 = H_0 + V_{\omega,x,L},
\end{equation}where we denote $u_i=u(\cdot-i)$. This operator is a self-adjoint unbounded operator on $L^2(\RR^d,\CC^n)$.

We can then define $R_{\omega,x,L}(z)=(H_{\omega,x,L}-z)^{-1}$ the resolvent of 
$H_{\omega,x,L}$ and $E_{\omega,x,L}(\cdot)$ its spectral projection.

\begin{defi}
We say that  an ergodic random family of operators $H_\omega$ is Klein-standard \cite{klein} if for each $x\in\ZZ^d$, $L\in\NN$ there is a measurable map $H_{\cdot,x,L}$ from $\Omega$ to self-adjoint operators
 on $L^2(\RR^d,\CC^n)$ such that 
 $$
   U(y)H_{\omega,x,L}U(-y)=H_{\tau_y\omega,x+y,L}
 $$ 
 where $\tau$ and $U$ define the ergodicity:
 $$
 U(y)H_\omega U(y)^*=H_{\tau_y(\omega)}.
 $$
\end{defi}
It is easy to see that the family \eqref{eq:1.4} of localized operators makes $H_\omega$ a Klein standard operator.

We now enumerate the properties which are needed for multiscale analysis to be performed, yielding thus various localization properties.

\begin{defi}
An event is said to be based in a box $\Lambda_L(x)$ if it is determined by 
conditions on the finite volume operators 
$(H_{\omega,x,L})_{\omega\in\Omega}$.
\end{defi}
\begin{property}[IAD]
Events based in disjoint boxes are independent.
\end{property}

The following properties are to hold in a fixed open interval $I$.
\begin{property}[SLI]\label{SLI}
Denote by $\chi_{x,L}$ the characteristic function of $\Lambda_L(x)$ and 
$\chi_x := \chi_{x,1}$. We also denote $\Gamma_{x,L}$ the characteristic function of $\Upsilon_L(x)$. Then for any compact interval $J\subset I$ there exists a finite constant $\gamma_J$ such that, given $L$, $l'$, $l''\in 2\NN$, $x$, $y$, $y'\in\ZZ^d$ with 
 $\Lambda_{l''}(y)\sqsubset\Lambda_{l'}(y')\sqsubset\Lambda_{L}(x)$, then for $\mathbb{P}$-almost every $\omega$, if $E\in J$ with $E\notin\sigma(H_{\omega,x,L })\cup\sigma(H_{\omega,y',l'})$
 we have
 \begin{equation}\label{eq:SLI-cond}
  \|\Gamma_{x,L}R_{\omega,x,L}(E)\chi_{y,l''}\|\leqslant\gamma_J\|\Gamma_{y',l'}R_{\omega,y',l'}(E)\chi_{y,l''}\|\|\Gamma_{x,L}R_{\omega,x,L}(E)\Gamma_{y',l'}\|.
 \end{equation}

\end{property}

\begin{property}[EDI]
 For any compact interval $J\subset I$ there exists a finite constant $\tilde{\gamma}_J$ such that for $\mathbb{P}$-almost every $\omega$, given a generalized eigenfunction $\psi$ of $H_\omega$
 with generalized eigenvalue $E\in J$, we have, for any $x\in\ZZ^d$ and $L\in2\NN$ with $E\notin\sigma(H_{\omega,x,L})$, that
 \begin{equation*}
  \|\chi_x\psi\|\leqslant\tilde{\gamma}_J\|\Gamma_{x,L}
  R_{\omega,x,L}(E)\chi_x\|\|\Gamma_{x,L}\psi\|.
 \end{equation*}

\end{property}

\begin{property}[NE]
 For any compact interval $J\subset I$ there exists a finite constant $C_J$ such that, for all $x\in\ZZ^d$ and $L\in2\NN$,
 \begin{equation*}
  \ec \left(\tr \left(E_{\omega,x,L}(J)\right)\right)\leqslant C_J L^d.
 \end{equation*}

\end{property}

\begin{property}[W]
 For some $b\geqslant 1$, there exists for each compact subinterval $J$ of $I$ a constant $Q_J$ such that
 \begin{equation}\label{eq:wegner}
  \mathbb{P}\{\dist (\sigma(H_{\omega,x,L}),E)<\eta\}\leqslant Q_J\eta L^{bd} ,
\end{equation}
for any $E\in J$, $0<\eta<\frac{1}{2}\dist(E_0,\sigma(H_0))$, $x\in\ZZ^d$ and $L\in2\NN$.
\end{property}

\begin{property}[H1($\theta$, $E_0$, $L_0$)]
 \begin{equation*}
  \pc\left\{\left\|\Gamma_{0,L_0}R_{\omega,0,L_0}(E_0)
  \chi_{0,L_0/3}\right\|
  \leqslant\frac{1}{L_0^\theta}\right\}>1-\frac{1}{841^d}.
 \end{equation*}
\end{property}

 \subsection{Multiscale analysis and localization}
 In this paragraph, we recall two very powerful results of Germinet and Klein which give us localization properties.
%
%
%
\begin{defi}
Given $E\in\RR$, $x\in\ZZ^d$ and $L\in6\NN$ with $E\notin \sigma(H_{\omega,x,L})$, we say that the box $\Lambda_L(x)$ is $(\omega,m,E)$-regular for a given $m>0$ if
\begin{equation}\label{eq:regular}
  \left\|\Gamma_{x,L}R_{\omega,x,L}(E)\chi_{x,L/3} \right\|
  \leqslant e^{-mL/2}.
\end{equation}
\end{defi}
%
%
%
In the following, we denote
 $$
 [L]_{6\NN}=\sup\{n\in6\NN | n\leqslant L\}.
 $$
\begin{defi}
For $x$, $y\in\ZZ^d$, $L\in6\NN$, $m>0$ and $I\subset\RR$ an interval, we denote.
\begin{equation*}
 R(m,L,I,x,y)=\left\{\omega;\text{for every }E'\in I\text{ either }\Lambda_L(x)\text{ or }\Lambda_L(y)\text{ is }(\omega,m,E')\text{-regular.}\right\}.
\end{equation*}
The multiscale analysis region $\Sigma_{MSA}$ for $H_\omega$ is the set of $E\in\Sigma$  for which there exists some open interval $I \ni E$
such that, given any $\zeta$, $0<\zeta<1$ and $\alpha$, $1<\alpha<\zeta^{-1}$, there is a length scale $L_0\in6\NN$ and a mass $m>0$ so if we set $L_{k+1}=[L_k^\alpha]_{6\NN}$, $k=0,1, \ldots $,
we have
\begin{equation*}
 \mathbb{P}\left\{R(m,L_k,I,x,y)\right\}\geqslant 1-e^{-L_k^\zeta}
\end{equation*}
for all $k\in\NN$, $x$, $y\in\ZZ^d$ with $|x-y|>L_k$.
\end{defi}
%
%
%
\begin{theorem}[Multiscale analysis - Theorem 5.4 p136 of \cite{klein}]\label{pi}
 Let $H_\omega$ be a Klein-standard ergodic random operator with (IAD) and properties (SLI), (NE) and (W) fulfilled in an open interval $I$. For $\Sigma$ being the almost sure spectrum of $H_\omega$ and for $b$ as in \eqref{eq:wegner}, given $\theta>bd$, for each $E\in I$ there exists a finite scale
 $\mathcal{L}_\theta(E)=\mathcal{L}_\theta(E,b,d)>0$ bounded on compact subintervals of $I$ such that, if for a  given $E_0\in\Sigma\cap I$ we have (H1)($\theta$, $E_0$, $L_0$) at some scale $L_0\in6\NN$ with $L_0>\mathcal{L}_\theta(E_0)$,
 then $E_0\in\Sigma_{MSA}$.
\end{theorem}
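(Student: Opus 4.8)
This is Theorem~5.4 of \cite{klein}; we only indicate how its proof --- Germinet and Klein's multiscale analysis --- is organised, the hypotheses (IAD), (SLI), (NE), (W) and (H1) being exactly what is engineered to make it run. Nothing in it is specific to Dirac-type operators.

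The plan is as follows. Fix $\zeta\in(0,1)$ and $\alpha\in(1,\zeta^{-1})$. One must produce an interval $I_0\ni E_0$, a mass $m>0$ and a scale $L_0\in6\NN$ such that the two-box estimate $\pc\{R(m,L_k,I_0,x,y)\}\geqslant 1-e^{-L_k^\zeta}$ holds for every scale of the super-geometric sequence $L_{k+1}=[L_k^\alpha]_{6\NN}$ and all $x,y\in\ZZ^d$ with $|x-y|>L_k$; by definition this says $E_0\in\Sigma_{MSA}$. The argument is an induction over $k$: (H1) gives the base case, and the inductive step turns a scale-$L_k$ estimate into a scale-$L_{k+1}$ one by propagating exponential resolvent decay through chains of ``regular'' sub-boxes via the geometric resolvent inequality (SLI), while keeping exceptional probabilities small via independence at a distance (IAD), the Wegner estimate (W), and the eigenvalue count (NE).

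First I would run the base case and the generic step at the level of \emph{polynomial} probabilities. From (H1)$(\theta,E_0,L_0)$, on an event of probability $>1-841^{-d}$ one has $\|\Gamma_{0,L_0}R_{\omega,0,L_0}(E_0)\chi_{0,L_0/3}\|\leqslant L_0^{-\theta}=e^{-m_0L_0/2}$ with $m_0:=2\theta L_0^{-1}\log L_0>0$, i.e. $\Lambda_{L_0}(0)$ is $(\omega,m_0,E_0)$-regular. Two upgrades are needed: \emph{(i)} a first-resolvent-identity (telescoping) argument shows regularity persists, with slightly larger mass, for every $E'$ in an interval $I_0=(E_0-\delta_0,E_0+\delta_0)$ as long as $\dist(E',\sigma(H_{\omega,0,L_0}))$ is not too small, and the $\omega$ for which some such resonant $E'$ exists in $I_0$ form a low-probability event by (W); \emph{(ii)} by Klein-standardness the single-box probability is translation invariant and by (IAD) the event that two distant boxes are simultaneously singular factorizes, so one gets $\pc\{R(m_0,L,I_0,x,y)\}\geqslant 1-L^{-p_0}$ at all sufficiently large scales $L$, with $p_0$ as large as wished once $L_0$ is large (this fixes the threshold $\mathcal{L}_\theta(E_0)$). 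For the step $L\to\ell=[L^\alpha]_{6\NN}$: cover $\Lambda_\ell(x)$ by $O((\ell/L)^d)$ scale-$L$ boxes on a sublattice; a deterministic lemma iterating (SLI) shows that if $\Lambda_\ell(x)$ contains no two disjoint $(\omega,m,E')$-singular scale-$L$ sub-boxes and $E'\notin\sigma(H_{\omega,x,\ell})$, then $\Lambda_\ell(x)$ is $(\omega,m',E')$-regular with $m'=m-\epsilon_L$ and $\sum_L\epsilon_L<\infty$, so the masses converge to some $m_\infty>0$, our final $m$. By (IAD) the probability of two disjoint singular scale-$L$ sub-boxes is at most $O((\ell/L)^{2d})$ times the square of the single-box singular probability, which beats $\ell^{-p_0}$ once $\alpha$ is close enough to $1$; the resonant energies $E'\in\sigma(H_{\omega,x,\ell})$ and the uniformity over $I_0$ are absorbed by (W) and (NE), and here the hypothesis $\theta>bd$ is exactly what makes the Wegner loss $L^{bd}$ dominated by the $\theta$-decay. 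Doing the same for $\Lambda_\ell(y)$ propagates the polynomial two-box estimate to every scale.

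The remaining, and hardest, ingredient is the bootstrap: boosting the probability from $L^{-p_0}$ to $e^{-L_k^\zeta}$. Having the regularity estimate at \emph{all} large scales rather than a single one, I would re-run the multiscale induction along the prescribed sequence $L_{k+1}=[L_k^\alpha]_{6\NN}$, now feeding a whole scale-$L_k$ estimate into the step instead of (H1); because disjoint-box events are independent, failure probabilities compound multiplicatively over the $O((L_{k+1}/L_k)^d)$ sub-boxes, and finitely many such iterations drive the polynomial bound down to $e^{-L_k^\zeta}$ for any $\zeta<1$ and $\alpha<\zeta^{-1}$ --- exactly the defining estimate of $\Sigma_{MSA}$. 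The delicate points, and the reason for only sketching, are: keeping the web of exponents ($\theta,b,d,\zeta,\alpha,p_0$ and the shrinking masses $m_k$) mutually consistent through both the polynomial induction and the bootstrap; passing from fixed-energy resolvent bounds to bounds uniform over $I_0$ at only polynomial cost in $L$, which forces interleaving the deterministic telescoping argument with (W) and (NE) at every scale; and checking that the geometric-resolvent iteration tolerates one singular sub-box with only a $1+o(1)$ degradation of the mass --- careful bookkeeping with the belts $\Upsilon_L$ and the constants $\gamma_J,\tilde\gamma_J$ --- so that $m_\infty>0$. The complete argument is carried out in \cite{GKbootstrap,klein}.
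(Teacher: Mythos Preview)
The paper does not prove this theorem at all: it is simply quoted from \cite{klein} (Theorem~5.4, p.~136) and used as a black box, so there is no ``paper's own proof'' to compare against. Your sketch of the Germinet--Klein bootstrap multiscale analysis is therefore strictly more than what the paper offers, and its outline --- (H1) as the initial step, propagation of regularity via (SLI) with singular sub-boxes controlled by (IAD), (W), (NE), followed by the bootstrap from polynomial to sub-exponential probabilities --- is an accurate summary of the argument in \cite{GKbootstrap,klein}.
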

%
%
%
\begin{theorem}[Localization - Theorem~6.1 p139 of \cite{klein}]\label{thm:main-section4}
 Let $H_\omega$ be a Klein-standard ergodic operator with (IAD) and properties (SGEE) and (EDI) in an open interval $I$. Then,
 $$
 \Sigma_{MSA}\cap I\subset\Sigma_{EL}\cap\Sigma_{SSEHSKD}\cap I.
 $$
\end{theorem}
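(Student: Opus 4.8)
The plan is to run Germinet and Klein's argument (this is Theorem~6.1 of \cite{klein}). First I would unpack the definition of $\Sigma_{MSA}$: fixing $E\in\Sigma_{MSA}\cap I$ gives an open interval $I'\ni E$ (which I shrink if necessary so that $I'\subset I$) and a sequence of scales $L_k$, $L_{k+1}=[L_k^{\alpha}]_{6\NN}$, on which the multiscale estimate $\mathbb{P}\{R(m,L_k,I',x,y)^{c}\}\leqslant e^{-L_k^{\zeta}}$ holds for suitable $\zeta\in(0,1)$, $\alpha\in(1,\zeta^{-1})$, $m>0$ and base scale $L_0\in6\NN$. Then I would prove the two inclusions $\Sigma_{MSA}\cap I\subset\Sigma_{EL}\cap I$ and $\Sigma_{MSA}\cap I\subset\Sigma_{SSEHSKD}\cap I$ separately; both rest on the same three ingredients — the multiscale estimate controlling, in probability, which boxes $\Lambda_{L_k}(\cdot)$ are $(\omega,m,E')$-regular; property (EDI) converting regularity of a box into quantitative decay of a generalized eigenfunction across that box; and property (SGEE) (with (IAD) providing the independence used in the probabilistic step) making the generalized eigenfunction expansion available and guaranteeing that the relevant traces are finite in expectation.

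For exponential localization I would first invoke (SGEE): the generalized eigenfunction expansion holds, $\mu_\omega$-almost every spectral value $\lambda\in I'$ is a generalized eigenvalue with generalized eigenfunction $\psi_\lambda\in\Hilb_-$, and a generalized eigenfunction lying in $\Hilb$ is a genuine eigenfunction; so it suffices to show that, for $\mathbb{P}$-almost every $\omega$, every such $\psi_\lambda$ decays exponentially. The probabilistic step is a Borel--Cantelli argument with careful geometric bookkeeping: for fixed $x$ the number of $y$ with $|x-y|$ in a scale-$L_k$ annulus is polynomial in $L_k$ while the failure probability is bounded by $e^{-L_k^{\zeta}}$ and $L_k\approx L_0^{\alpha^{k}}$, so the failures are summable; this produces, almost surely, a bounded-diameter set of "persistently singular'' centres together with a fixed (random) scale $L_*$ such that $\Lambda_{L_*}(z)$ is $(\omega,m,E')$-regular for every $z$ lying beyond distance $L_*$ of that set — the alternative, that no persistently singular centre exists, being excluded because one application of (EDI) together with the polynomial bound $\|\Gamma_{z,L_k}\psi_\lambda\|\lesssim(|z|+L_k)^{2\nu}$ coming from $\psi_\lambda\in\Hilb_-$, and $L_k\to\infty$, would then force $\psi_\lambda\equiv0$. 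The deterministic step is to iterate (EDI) at the single scale $L_*$, peeling boundary belts off from $x$ toward the singular set: each step costs $e^{-mL_*/2}$ times a power of $L_*$, which is beaten for $L_*$ large, yielding $\|\chi_x\psi_\lambda\|\leqslant C_\omega e^{-c|x|}$. Exponentially decaying generalized eigenfunctions are bona fide eigenfunctions, the expansion then shows $E_\omega(I')$ has pure point spectrum, and the decay rate is uniform, so $E\in\Sigma_{EL}$.

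For the Hilbert--Schmidt-kernel estimate I would argue in expectation rather than almost surely. Fixing a compact $J\subset I'$ and $x,y\in\ZZ^d$, I would use the generalized eigenfunction expansion to bound, uniformly over Borel $f$ with $\|f\|_\infty\leqslant1$, the norm $\|\chi_xE_\omega(J)f(H_\omega)\chi_y\|_2$ by an integral of $\|\chi_x\psi_\lambda\|\,\|\chi_y\psi_\lambda\|$ against the measure $\tr(T^{-1}E_\omega(J)T^{-1})$, whose finite second moment is exactly what (SGEE) supplies. For $|x-y|$ large the multiscale estimate gives, with probability at least $1-e^{-c|x-y|^{\zeta}}$, a box $\Lambda_{L_k}(x)$ or $\Lambda_{L_k}(y)$ with $L_k\sim|x-y|$ that is $(\omega,m,E')$-regular for every $E'\in I'$; on this event (EDI) turns regularity into $\|\chi_x\psi_\lambda\|\,\|\chi_y\psi_\lambda\|\lesssim e^{-c|x-y|^{\zeta}}\times(\text{polynomial})$, and on the complementary event, of probability $\leqslant e^{-c|x-y|^{\zeta}}$, I would keep only a crude deterministic bound and absorb it using the second moment from (SGEE). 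Taking expectations and integrating in $\lambda$ yields \eqref{eq:strong-loc} with the prescribed $\zeta$, so $E\in\Sigma_{SSEHSKD}$. Since (SDL) and the other localization notions of \cite{klein} follow from (SSEHSKD), this establishes $\Sigma_{MSA}\cap I\subset\Sigma_{EL}\cap\Sigma_{SSEHSKD}\cap I$.

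The hard part is the deterministic "multiscale-to-decay'' mechanism in the first inclusion. A single use of (EDI) only decays like $e^{-cL_k/2}$ with $L_k\sim|x|^{1/\alpha}$, i.e.\ sub-exponentially, so the genuine exponential rate required by (EL) has to be extracted by the iteration of (EDI) at a fixed large scale described above; and this, in turn, needs the random set of persistently singular centres to have bounded diameter, which depends on arranging the Borel--Cantelli counting with exactly the right geometry (pairs at separation $\sim L_k$, polynomially many, against the super-polynomially small $e^{-L_k^{\zeta}}$). In the Hilbert--Schmidt part the analogous difficulty is to keep every estimate uniform over all Borel $f$ with $\|f\|_\infty\leqslant1$ and to control the low-probability bad event solely through the second-moment trace bound of (SGEE) — which is precisely the purpose for which that property was introduced.
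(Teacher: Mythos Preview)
The paper does not supply its own proof of this statement: it is quoted verbatim as Theorem~6.1 of \cite{klein} and used as a black box, so there is no argument in the paper to compare against. Your proposal correctly identifies the source and gives a faithful outline of the Germinet--Klein proof from that reference --- the Borel--Cantelli/iterated-(EDI) route to (EL) and the expectation argument via (SGEE) for (SSEHSKD) --- so it is consistent with, and indeed more detailed than, what the paper itself provides.
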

%
%
%
\section{Application to our setting}
We will now show that all the conditions listed in the previous Section hold true in our setting. 
\begin{theorem}\label{thm:main-in-the-text}
Let $H_\omega$ be the operator defined by \eqref{defop} obeying Assumptions~1-3. 
Then,  we have (IAD) and there exist two constants $E_\pm$ satisfying 
$B_-\leqslant E_-<\tilde{B}_-$ and $\tilde{B}_+<E_+\leqslant B_+$ such that 
(SLI), (EDI), (NE), (W), (SGEE) and (H1($\theta$, $\cdot$,$L_0$)) 
for $\theta$ and $L_0$ large enough are satisfied on $\Sigma\cap(E_-,E_+)$. 
Therefore, we have the localization properties (EL) and (SSEHSKD) on  
the interval $\Sigma\cap(E_-, E_+)$.
\end{theorem}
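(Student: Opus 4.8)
The plan is to verify each hypothesis of the two Germinet--Klein theorems (Theorem~\ref{pi} and Theorem~\ref{thm:main-section4}) in turn, and then assemble them. Property (IAD) is immediate: since $\supp(u)\subset[-2,2]^d$ and $\supp(\xi_i)\subset B_R$ with $R<1/2$, the single-site bump $\lambda_i u(\cdot-\xi_i-i)$ has support in a fixed compact neighbourhood of the lattice site $i$; events based in disjoint boxes $\Lambda_L(x)$, $\Lambda_{L'}(x')$ depend on disjoint families of $(\lambda_i,\xi_i)$ once the boxes are far enough apart, hence are independent by Assumption~\ref{assump:2}(i)--(ii). The family is Klein-standard by the remark following the definition of $H_{\omega,x,L}$, and Kirsch-standard as already shown in the text, so \eqref{spectre} and the deterministic spectrum are in place.

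Next I would treat the ``geometric'' properties (SLI) and (EDI). These follow from the boundedness of the commutators $[H_0,\chi]=S(\sigma\cdot(-\iu\nabla\chi))S$ (Remark~\ref{rem:thm}(vi)) together with the fact that $H_{\omega,x,L}$ differs from $H_\omega$ only by a bounded potential supported inside $\Lambda_L(x)$; the standard resolvent-identity manipulations of \cite[\S3]{klein} or \cite{GKbootstrap} then give \eqref{eq:SLI-cond} and the (EDI) estimate, with constants uniform for $E$ in a compact subinterval of the gap. Property (NE), the bound $\ec(\tr E_{\omega,x,L}(J))\le C_J L^d$, reduces via \eqref{schatten} to a Schatten-class count: using that $f(\cdot)(H_0-E_0)^{-1}\in\mathcal{T}_q$ for $q>d$ with norm controlled by $\|f\|_{L^q}$, one localizes with the characteristic function of a slightly enlarged box and converts the Schatten bound into a trace bound of order $L^d$; here the absolute continuity of $h$ is not yet needed. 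Property (SGEE) is obtained from the same trace estimate \eqref{schatten}: choosing $f(H_\omega)=(H_\omega-E_0)^{-2}$ for $E_0$ in the gap and $\nu>d/4$, the operator $T^{-1}f(H_\omega)E_\omega(I)T^{-1}$ is trace class with square-integrable trace by polynomial decay of $\langle x\rangle^{-2\nu}$ against the Schatten-$q$ control, and the core/density statement for $\mathcal{D}^\omega_+$ follows because $H^1$-vectors with polynomial weights are dense and invariant enough under the bounded perturbation.

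The two properties where the probabilistic assumptions genuinely enter are (W) (the Wegner estimate) and (H1), and (W) is the main obstacle. The Wegner estimate \eqref{eq:wegner} must be proved for a \emph{Dirac-type} operator, where spectral averaging in the coupling constants $\lambda_i$ cannot rely on sign-definiteness of the kinetic part; the route is to exploit $u\in L^\infty(\RR^d,\mathcal{H}_n^+)$ (non-negativity of the single-site matrix) and $\|h\|_\infty<\infty$, combined with the $\mathcal{T}_q$-bounds from Remark~\ref{rem:thm}(iii), to run a Combes--Thomas / spectral-averaging argument adapted to first-order operators (as in \cite{BN2015, POC2017}), obtaining a bound linear in $\eta$ and polynomial in $L$ with some $b\ge1$; the restriction $0<\eta<\tfrac12\dist(E_0,\sigma(H_0))$ keeps us strictly inside the unperturbed gap where resolvents of $H_0$ are bounded. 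Finally, (H1) at a large initial scale $L_0$ near the edges $\tilde B_\pm$ of the new spectrum is the Lifshitz-tails/initial-length-scale input: using Assumption~\ref{assump:2}(iv) (the decay $\pc\{|\lambda+m|<\eps\}\le\eps^{d/2+\beta}$ and likewise near $M$), one shows that with probability close to $1$ the finite-volume operator $H_{\omega,0,L_0}$ has no spectrum near $E_0=\tilde B_\pm\pm\delta$, so the boundary resolvent is small; this yields (H1$(\theta,E_0,L_0)$) for $\theta$ as large as desired once $L_0$ is large, which in turn fixes the energies $E_\pm$ with $B_-\le E_-<\tilde B_-$ and $\tilde B_+<E_+\le B_+$. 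Once all of (IAD), (SLI), (NE), (W), (H1) hold on $\Sigma\cap(E_-,E_+)$, Theorem~\ref{pi} gives $\Sigma\cap(E_-,E_+)\subset\Sigma_{MSA}$, and then (IAD), (SGEE), (EDI) together with Theorem~\ref{thm:main-section4} upgrade this to $\Sigma_{MSA}\cap(E_-,E_+)\subset\Sigma_{EL}\cap\Sigma_{SSEHSKD}$, which is exactly the asserted (EL) and (SSEHSKD) on $\Sigma\cap(E_-,E_+)$.
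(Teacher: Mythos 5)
Your overall plan is exactly the paper's: verify (IAD), (SLI), (EDI), (NE), (W), (SGEE), (H1) in an interval around the new band edges $\tilde B_\pm$ and then invoke Theorems~\ref{pi} and \ref{thm:main-section4}. Your treatment of (IAD), (SLI), (EDI), (W) and (H1) tracks the paper's proof (commutator bounds for the geometric resolvent equations; spectral averaging together with the Schatten/Combes--Thomas estimates for Wegner; the Lifshitz--tail style initial estimate using Assumption~\ref{assump:2}(iv) for (H1)), even if the Wegner and (H1) steps are stated at a very high level and omit the Feynman--Hellmann perturbation argument the paper uses to push eigenvalues away from $\tilde B_\pm$.

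There is, however, a genuine gap in your (SGEE) argument. You propose to take $f(H_\omega)=(H_\omega-E_0)^{-2}$ with a \emph{real} $E_0$ in the gap $(B_-,B_+)$ of $H_0$. This fails for two reasons. First, $H_\omega$ is allowed to have spectrum precisely at such $E_0$ (that is the whole point of Assumption~\ref{gap}), so $f$ is not bounded on $\sigma(H_\omega)$ for all $\omega$; the (SGEE) property requires $f$ bounded, continuous and strictly positive on the spectrum, which forces a nonreal point, as in the paper's choice $f(x)=|x-\iu\lambda|^{-2d}$. Second, even if you replaced $E_0$ by $\iu\lambda$, two resolvent powers are not enough in general dimension. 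For a first-order elliptic operator the key Schatten input is \eqref{schatten}, namely $f(\cdot)(H_0-z)^{-1}\in\mathcal{T}_q$ only for $q>d$ (not $q>d/2$ as for the Laplacian). To make $T^{-1}(H_\omega-\iu\lambda)^{-m}$ Hilbert--Schmidt with a bound uniform in $\omega$, the paper factors it into $m=d$ pieces, each of the form $\langle x\rangle^{-2\nu/d}(D_S-\iu\lambda)^{-1}\times U_j$ with $U_j$ uniformly bounded, so that each piece lies in $\mathcal{T}_{2d}$ and their product is in $\mathcal{T}_2$; see \eqref{hc10} and \eqref{eq:star-star-star}. With only two resolvent factors this mechanism breaks for $d>2$, and you also lose the uniform (in $\omega$) bound on the Neumann series $\left(1+(W_\omega+K)(D_S-\iu\lambda)^{-1}\right)^{-1}$, which is what the large imaginary shift $\lambda$ is there to provide. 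So the (SGEE) part of your proof needs to be replaced by the paper's argument (or an equivalent one using $\geqslant d$ resolvent powers and a large imaginary energy).

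A smaller remark: your sketch of (NE) is not the paper's mechanism. The paper obtains (NE) as the first step of the Wegner proof, via the identity \eqref{projspec}, the bound $\tr(E_{\omega,\Lambda}(I_\eta))\leqslant 2|\tr(K_0E_{\omega,\Lambda}(I_\eta))|$, H\"older in Schatten norms, and $\|E_{\omega,\Lambda}(I_\eta)\|_p^p=\|E_{\omega,\Lambda}(I_\eta)\|_1$; no separate localization-with-an-enlarged-box argument is involved. You are correct that absolute continuity of $h$ does not enter at that stage.
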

%
%
%
\begin{proof}
 (IAD) is a direct consequence of the independence of random variables stated in Assumption~\ref{assump:2} (i) and (ii).

To show (SLI), let $x$, $y$, $y'$, $L$, $l''$ and $l'$ be as in Property~\ref{SLI}  and consider, for $z\in\ZZ^d$ and $\ell>4$ a function 
$\tilde{\chi}_{z,\ell}\in C_0^\infty(\RR^d, [0,1])$ which has value 1 on $\Lambda_{\ell-3}(z)$ and 0 outside of $\Lambda_{\ell-5/2}(z)$ and whose gradient has norm smaller than 3.
Pick $E\in(B_-,B_+)$ such that $E\notin\sigma(H_{\omega,x,L })\cup\sigma(H_{\omega,y',l'})$.

Using Assumption~\ref{assump:2}(iii) on the support of $u$ leads us to the identity $H_\omega\tilde{\chi}_{y',l'}=H_{\omega,x,L}\tilde{\chi}_{y',l'}$ and then we get:
\begin{equation}\label{eq:pre-GRE}
(H_\omega-E)\tilde{\chi}_{y',l'}R_{\omega,x,L}(E)=\tilde{\chi}_{y',l'}
  +W_{y',l'}R_{\omega,x,L}(E)
\end{equation}
where \begin{equation*}
        W_{y',l'}=[H_\omega,\tilde{\chi}_{y',l'}] = [H_0, \tilde{\chi}_{y',l'}]
       \end{equation*}
is bounded according to Remark~\ref{rem:thm} (vi).

With similar support arguments, we have $H_\omega\tilde{\chi}_{y',l'}=H_{\omega, y',l'}\tilde{\chi}_{y',l'}$ and together with the identity~\eqref{eq:pre-GRE} we get the geometric resolvent equation:
\begin{equation}\label{star-star}
  \tilde{\chi}_{y',l'}R_{\omega,x,L}(E)=R_{\omega,y',l'}(E)\tilde{\chi}_{y',l'}
  +R_{\omega,y',l'}(E)W_{y',l'}R_{\omega,x,L}(E).
\end{equation}
Multiplying \eqref{star-star} from the left
by $\chi_{y,l''}$, from the right by $\Gamma_{x,L}$, writing $W_{y',l'}=\Gamma_{y',l'}W_{y',l'}\Gamma_{y',l'}$, $\tilde{\chi}_{y',l'} \Gamma_{x,L}=0$, and taking the norm of the adjoints, yields the estimate \eqref{eq:SLI-cond}.

For (EDI), we have, for $\psi$ a generalized eigenfunction of $H_\omega$ with associated generalized eigenvalue $E$:
\begin{equation*}
R_{\omega,x,L}(E)W_{x,L}\psi=R_{\omega,x,L}(E)\left(H_\omega\tilde{\chi}_{x,L}-\tilde{\chi}_{x,L}H_\omega\right)\psi.
\end{equation*}
But, denoting $V^{ext}_{\omega,x,L}=V_\omega-V_{\omega,x,L}$, we have,
\begin{equation*}
 H_\omega=H_{\omega,x,L}+V^{ext}_{\omega,x,L}=R_{\omega,x,L}(E)^{-1}+E+V^{ext}_{\omega,x,L}.
\end{equation*}
Then, 
\begin{equation*}
 R_{\omega,x,L}(E)W_{x,L}\psi=\tilde{\chi}_{x,L}\psi
 +R_{\omega,x,L}(E)E\tilde{\chi}_{x,L}\psi
 +R_{\omega,x,L}(E)V^{ext}_{\omega,x,L}\tilde{\chi}_{x,L}\psi
 -R_{\omega,x,L}(E) \tilde{\chi}_{x,L}H_\omega\psi.
\end{equation*}
Using the facts that $V^{ext}_{\omega,x,L}\,\tilde{\chi}_{x,L}=0$ and $H_\omega\psi=E\psi$, we get
 \begin{equation*}
  R_{\omega,x,L}(E)W_{x,L}\psi=\tilde{\chi}_{x,L}\psi
 \end{equation*}
which, through operations similar to the ones of the proof of (SLI), will give the desired result.

(NE) and (W) will be proved in Paragraph~\ref{Wegner}. (H1($\theta,E_0,L_0$)) for good values of the parameters will be proved in Paragraph~\ref{H1}.

Let us now give the proof of (SGEE). For the first part, we see that $\mathcal{D}^\omega_+\supset\mathcal{C}^\infty_c(\RR^d,\CC^n)$ which is dense in $\Hilb_+$ and a core for $H_\omega$ for any $\omega$.

For the second part we pick $T$ as in Section~\ref{S:section3.2}, being defined by the multiplication with $\langle x\rangle^{2\nu}$ where  $\nu>d/4$. 
Then we will show that for some $\lambda\in\RR$:
\begin{equation*}
  \tr\left(T^{-1}(H_\omega-i\lambda)^{-d}(H_\omega+i\lambda)^{-d}T^{-1}\right)
  \leqslant C,
\end{equation*}
with $C$ almost surely independent of $\omega$, 
which will imply (SGEE) for any interval  $I\subset \RR$, with $f:x\mapsto|x-i\lambda|^{-2d}$. 

To this purpose, it suffices to show that $T^{-1}(H_\omega-i\lambda)^{-d}$ is Hilbert-Schmidt with a  Hilbert-Schmidt norm almost surely independent of $\omega$.

For some $\alpha>0$, let $h_\alpha=\langle \cdot \rangle ^{\alpha}H_\omega\langle \cdot \rangle ^{-\alpha}$ defined on $\mathcal{C}^\infty_c(\RR^d,\CC^n)$. By using the fact that the multiplication by $\langle x\rangle ^{\pm\alpha}$ commutes with potentials, we find that for any $\phi\in \mathcal{C}^\infty_c(\RR^d,\CC^n)$
 $$
 h_\alpha\phi=H_\omega\phi+K\phi
 $$ 
for some bounded operator $K$ independent of $\omega$. 
We can then extend $h_\alpha$ on $\Dom(H_\omega)$.

Then, for $\lambda\in\RR^*$,
\begin{equation*}
 h_\alpha-\iu \lambda=\left(1+(W_\omega+K)(D_S-\iu \lambda)^{-1}\right)(D_S-\iu \lambda)
\end{equation*}
where $W_\omega=V_0+V_\omega$. As $(W_\omega+K)$ is bounded independently of $\omega$ and $\lambda$, we see that for $\lambda$ large enough $\|(D_S-\iu \lambda)^{-1}(W_\omega+K)\|<1$ so $h_\alpha-\iu\lambda$ is invertible.
Moreover, 
\begin{equation}
 (h_\alpha-\iu \lambda)^{-1}=(D_S-\iu \lambda)^{-1}\left(1+(W_\omega+K)
 (D_S-\iu  \lambda)^{-1}\right)^{-1}\label{born}.
\end{equation}
By a standard argument one can prove that the following identity holds:
$$\langle \cdot\rangle^{-\alpha}(h_\alpha-\iu \lambda)^{-1}= (H_\omega-\iu\lambda)^{-1}\langle \cdot\rangle^{-\alpha},$$
which together with \eqref{born} implies that: 
\begin{align}\label{hc10}
\langle \cdot\rangle^{\alpha}(H_\omega-\iu \lambda)^{-1}\langle \cdot\rangle^{-\alpha}=(D_S-\iu \lambda)^{-1}\left(1+(W_\omega+K)(D_S-\iu \lambda)^{-1}\right)^{-1}.
\end{align}
The idea is to write the operator $(H_\omega-\iu \lambda)^{-d}T^{-1}$ as a product of $d$ factors, each of them  belonging to $\mathcal{T}_{2d}$. In order to simplify notation, let us denote  $(H_\omega-\iu \lambda)^{-1}$ by $r$ and $T^{-1/d}$ with $t^{-1}$. Then we get by induction: 
\begin{align}\label{eq:star-star-star}
  (H_\omega-\iu \lambda)^{-d}T^{-1}
  &=r^d t^{-d}= r^{d-1} t^{-(d-1)} \{t^{-1} t^{d} rt^{-d}\} \nonumber \\
  &=\prod_{j=1}^d t^{-1} t^{j}rt^{-j}.
\end{align}
For each $j$, we can put  $\alpha= 2\nu j/d$ and by \eqref{hc10} we get:
$$t^{-1} t^jrt^{-j}=\langle \cdot\rangle^{-2\nu/d}(D_S-\iu \lambda)^{-1}\times {U_j}, $$
where $U_j$ is a bounded operator with a norm independent of $\omega$.
The function $\langle x\rangle^{-2\nu/d}$ belongs to $L^{2d}(\RR^d)$ when $\nu>d/4$.  Thus reasoning as in Remark~\ref{rem:thm}(iii) we have that 
 $(H_\omega-\iu \lambda)^{-d}T^{-1}$ is Hilbert-Schmidt with a norm which is independent of $\omega$. 
This proves $\mathrm{(SGEE)}$ and thus concludes the proof of Theorem~\ref{thm:main-in-the-text}.
\end{proof}

\subsection{Proof of (W) and (NE)}\label{Wegner}

Let $x\in\ZZ^d,L\in2\NN,\Lambda=\Lambda_L(x)$. We denote $\tilde{\Lambda}=\Lambda\cap\mathbb{Z}^d$. In order to alleviate notations, we denote $H_{\omega,\Lambda}=H_{\omega,x,L}$, $V_{\omega,\Lambda}=V_{\omega,x,L}$ and $E_{\omega,\Lambda}=E_{\omega,x,L}$ the spectral projector. We prove in this paragraph properties (W) and (NE) for the operator 
$H_{\omega,x,L}$,
namely we establish the following theorem.
%
%
%
\begin{theorem}[Wegner estimate]\label{thm:wegner}
Suppose Assumptions~\ref{assump-gap} and ~\ref{assump:2}(i)-(iii) hold true, 
and, for  $E_0\in (B_-,B_+)$ and  $\eta<\frac12\dist(E_0,\sigma( H_0))$, we denote $I_{\eta}(E_0)=[E_0-\eta, E_0+\eta]$. For any compact subinterval $J$ of $(B_-,B_+)$, there exists a constant $C_J$ such that for all $E_0\in J$
 \begin{equation*}
  \ecL \left ( \tr(E_{\omega,\Lambda} (I_\eta(E_0))) \right )\leqslant C_{J}\; \eta\; |\Lambda|.
\end{equation*}
\end{theorem}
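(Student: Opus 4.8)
The plan is to prove the Wegner estimate $\ecL(\tr(E_{\omega,\Lambda}(I_\eta(E_0))))\le C_J\,\eta\,|\Lambda|$ by the now-standard spectral-averaging argument of Combes--Hislop, adapted to the present first-order setting. The key structural point is that $H_{\omega,\Lambda}$ depends linearly on the finitely many coupling constants $\{\lambda_i(\omega)\}_{i\in\tilde\Lambda}$, so that $\partial_{\lambda_i}H_{\omega,\Lambda}=u_i(\cdot-\xi_i(\omega))\ge 0$. First I would fix $E_0\in J$ and $\eta$ as in the statement, let $\varrho\in C^\infty(\RR)$ be a nondecreasing function with $\varrho\equiv 0$ on $(-\infty,-2\eta]$ and $\varrho\equiv 1$ on $[2\eta,\infty)$, and use $\chi_{I_\eta(E_0)}(t)\le \varrho(t-E_0+3\eta)-\varrho(t-E_0-3\eta)=\int_{E_0-3\eta}^{E_0+3\eta}\varrho'(t-s)\,ds$ to dominate the spectral projector; equivalently one bounds $\tr(E_{\omega,\Lambda}(I_\eta))$ by an integral over a slightly larger interval of $\tr(\varrho'(H_{\omega,\Lambda}-s))$.

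Next I would exploit positivity and the covering condition on $u$. Since $u\in\mathcal H_n^+$, Assumption~\ref{assump:2}(iii), and because the translates $\{u(\cdot-\xi_i(\omega)-i)\}$ with $i$ ranging over a fattened lattice cover $\RR^d$ uniformly from below away from zero on the relevant region — this is the one quantitative input one needs, a uniform lower bound $\sum_i u_i(\cdot-\xi_i(\omega))\ge c_0\,\mathbf 1$ on $\Lambda_{L}(x)$ valid for all $\xi_i\in B_R$ with $R<1/2$ — one gets $\sum_{i\in\tilde\Lambda}\partial_{\lambda_i}H_{\omega,\Lambda}\ge c_0$ on $\Lambda$, sandwiched between suitable cutoffs. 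Combined with $0\le\varrho'$ this yields the operator inequality
\begin{equation*}
\varrho'(H_{\omega,\Lambda}-s)\le \frac{1}{c_0}\sum_{i\in\tilde\Lambda}\sqrt{\varrho'(H_{\omega,\Lambda}-s)}\,(\partial_{\lambda_i}H_{\omega,\Lambda})\,\sqrt{\varrho'(H_{\omega,\Lambda}-s)}+(\text{boundary terms}),
\end{equation*}
where the boundary/error terms are controlled because $E_0$ lies in the gap $(B_-,B_+)$ of $H_0$, so $(H_0-E_0)^{-1}$ is bounded and the Birman--Schwinger/resolvent machinery of Remark~\ref{rem:thm}(iii) applies; in particular $\sqrt{\varrho'(H_{\omega,\Lambda}-s)}\chi$ and $(H_{\omega,\Lambda}-s\mp i)^{-1}\chi$ are controlled in trace-ideal norms uniformly in $\omega$ and $s\in J'$.

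Then I would take the trace and average. Using $\tr(\sqrt{\varrho'}\,u_i\,\sqrt{\varrho'})=\tr(u_i\,\varrho'(H_{\omega,\Lambda}-s))$ and the chain rule $\partial_{\lambda_i}\big[\varrho(H_{\omega,\Lambda}-s)\big]=\varrho'(H_{\omega,\Lambda}-s)\cdot u_i$ in the sense of forms, together with cyclicity of the trace and the spectral-averaging lemma (integrating $\partial_{\lambda_i}$ against the density $h$ with $\|h\|_\infty<\infty$, Assumption~\ref{assump:2}(i)), the $\lambda_i$-integral of each term $\ecL(\tr(u_i\,\varrho'(H_{\omega,\Lambda}-s)))$ telescopes and is bounded by $\|h\|_\infty\,\ecL_{\hat\omega_i}(\tr(u_i[\varrho(H_{\omega,\Lambda}-s)]_{\lambda_i=M}^{\lambda_i=-m}))\le 2\|h\|_\infty\,\tr(u_i\cdot\mathbf 1_{\mathrm{supp}}) $, which is uniformly bounded. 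Summing over $i\in\tilde\Lambda$ gives a factor $|\tilde\Lambda|\sim|\Lambda|$, and integrating $s$ over the interval of length $6\eta$ produces the factor $\eta$; the $\xi_i$-average is harmless since everything is uniform in $\xi_i\in B_R$. This yields the claimed bound, and choosing $\eta=\frac12\dist(E_0,\sigma(H_0))$ and $\varrho'\equiv$ const on a unit interval gives (NE) as the special case. The main obstacle I anticipate is the rigorous handling of the trace-ideal estimates for the unbounded first-order operator: unlike the Schr\"odinger case one cannot directly use $(H_{\omega,\Lambda}+i)^{-1}$ being Hilbert--Schmidt times a decaying function, but must instead route through $(D_S-E_0-i)^{-1}\in\mathcal T_q$ for $q>d$ as established in Remark~\ref{rem:thm}(iii) and transfer this to $(H_{\omega,\Lambda}-s\pm i)^{-1}$ via resolvent identities, using crucially that $E_0$ is in the spectral gap of $H_0$ so that the relevant resolvents are uniformly bounded; keeping all constants uniform in $\omega$, $s\in J'$ and in the box $\Lambda$ is where the care is needed.
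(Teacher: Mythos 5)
Your plan follows the Combes--Hislop/Kirsch spectral-averaging philosophy, but it departs from the paper's proof in a way that introduces a genuine gap: you require a uniform \emph{covering condition}
\[
\sum_{i\in\tilde\Lambda} u\!\left(\cdot-\xi_i(\omega)-i\right)\;\geqslant\;c_0\,\mathbf 1\quad\text{on }\Lambda_L(x),
\]
and this is simply not available under the hypotheses of the theorem. Assumption~\ref{assump:2}(iii) only asks that $u$ be a compactly supported, essentially bounded, non-negative Hermitian-matrix valued, almost everywhere continuous function. Nothing prevents $u$ from vanishing on most of its support, or from having a support strictly smaller than a unit cell, so that the translates $u_i(\cdot-\xi_i(\omega))$ leave holes. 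Without the lower bound, your operator inequality
\[
\varrho'(H_{\omega,\Lambda}-s)\;\leqslant\;\frac{1}{c_0}\sum_{i}\sqrt{\varrho'}\,u_i\,\sqrt{\varrho'}+(\text{boundary})
\]
is false, and there is no obvious way to recover it: the ``boundary terms'' you invoke to absorb the discrepancy are precisely the place where the argument collapses, since on the set where $\sum_i u_i$ degenerates no amount of resolvent manipulation with a fixed $E_0$ will produce a factor of $u_i^{1/2}$ to average against.

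The paper avoids this entirely by a different decomposition. Starting from the gap condition $\eta<\tfrac12\,\dist(E_0,\sigma(H_0))$ it writes $E_{\omega,\Lambda}(I_\eta)=-K_0E_{\omega,\Lambda}(I_\eta)+R_0(H_{\omega,\Lambda}-E_0)E_{\omega,\Lambda}(I_\eta)$ with $K_0=R_0V_{\omega,\Lambda}$, deduces $\tr E_{\omega,\Lambda}(I_\eta)\leqslant 2\,|\tr(K_0E_{\omega,\Lambda}(I_\eta))|$, and iterates this $q>2d$ times (with trace-ideal interpolation) to arrive at
\[
\ecL\big(\tr E_{\omega,\Lambda}(I_\eta)\big)\;\leqslant\;4\,\ecL\big(|\tr(K_0E_{\omega,\Lambda}(I_\eta)(K_0^\ast)^{q-1})|\big)+C\,\eta\,|\Lambda|.
\]
Expanding $V_\Lambda=\sum_i\lambda_i u_i(\cdot-\xi_i)$ and using a singular-value decomposition, the surviving trace is bounded by terms of the form $\langle\psi_k,u_{i_1}^{1/2}E_{\omega,\Lambda}(I_\eta)u_{i_1}^{1/2}\psi_k\rangle$; Proposition~\ref{specav} is then applied \emph{one coupling constant at a time}, with $B=u_{i_1}^{1/2}$ and $V=u_{i_1}$, which only requires $0\leqslant B^2\leqslant V$, i.e.\ no lower bound on $u$ whatsoever. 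Finally Proposition~\ref{prop:trace-estimate} (built on the Combes--Thomas bounds of Appendix~\ref{appendix:combes-thomas}) controls $\sum_{i_1,\dots,i_q}\|K_{\{i\}}\|_1\leqslant C|\Lambda|$. In short: the paper packages the potential into the resolvent iteration \emph{before} averaging, so that each averaged term already carries the factor $u_i^{1/2}\cdot u_i^{1/2}$ it needs, whereas you try to manufacture this factor from a covering condition that does not hold. Your smoothing via $\varrho$ and telescoping in $\lambda_i$ is a perfectly reasonable alternative when a covering condition \emph{is} assumed, but it cannot replace the paper's argument under the present hypotheses. Your final paragraph does correctly flag the trace-ideal difficulty and the route through $(D_S-E_0-\iu)^{-1}\in\mathcal T_q$; this is indeed where the Combes--Thomas trace estimate of Lemma~\ref{ct-trace} and Proposition~\ref{prop:trace-estimate} come in, so you identified the right technical obstruction even if the surrounding scaffolding is not the one used.
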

%
%
%
\begin{remark}
This estimate trivially implies (NE). By Chebishev's inequality, 
it also leads to (W) with $b=1$.
\end{remark}

The resolvent of $H_0$ in $z\in\rho(H_0)$ will be denoted $R_0(z)$.
Let us fix some $E_0 \in (B_-, B_+)$ and denote $R_0 := R_0(E_0)$. The following proposition holds true:
%
%
%
\begin{prop}\label{prop:trace-estimate}
Assume that $E_0$ belongs to a compact $I$ in the gap. Let us denote 
\begin{equation*}
  K_{\{i\}}=u_{i_1} R_0u_{i_2} R_0 \cdots u_{i_{q-1}} R_0^2u_{i_q} ,
\end{equation*}
given a $q$-tuple $\{i\}$ for $q$ being an even integer larger than $2d$. 
Under Assumptions 1 and 2 (iii) on $V_{\omega,x,L}$, there exists a constant $C>0$ such that for all $E_0\in I$ we have
 \begin{equation}\label{eq:2.4}
  \sum_{i_1, \ldots ,i_q\in\tilde{\Lambda}}\|K_{\{i\}}\|_1\leqslant C|\Lambda|.
 \end{equation}
\end{prop}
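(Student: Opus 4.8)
The plan is to estimate the trace norm of each individual $K_{\{i\}}$ by distributing the resolvents among the single-site factors and using the Schatten bound \eqref{schatten}, then to control the combinatorial sum over the $q$-tuples by exploiting the compact support of $u$. First I would split $K_{\{i\}}$ into a product of $q$ factors, each of the form $u_{i_k} R_0^{1/q'}$ in an interpolated sense; more concretely, since $q>2d$ is even, I would write $R_0 = R_0(E_0)$ and group the product as
\begin{equation*}
 K_{\{i\}} = \left(u_{i_1} R_0\right)\left(u_{i_2} R_0\right)\cdots\left(u_{i_{q-1}} R_0\right)\left(u_{i_q} R_0\right)\cdot R_0,
\end{equation*}
observing that the extra $R_0$ at the end is bounded on the gap. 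Each factor $u_{i_k} R_0$ can, by Remark~\ref{rem:thm}(iii) and \eqref{schatten}, be placed in the Schatten ideal $\mathcal T_q$ with norm $\|u_{i_k} R_0\|_q \leq C_I' \|u_{i_k}\|_{L^q} = C_I' \|u\|_{L^q}$ uniformly in $i_k$ and in $E_0\in I$ (here $\|u\|_{L^q}<\infty$ because $u$ is bounded with compact support). By the generalized Hölder inequality for Schatten norms, $\|K_{\{i\}}\|_1 \leq \big(\prod_{k=1}^q \|u_{i_k} R_0\|_q\big)\cdot\|R_0\|$, which gives a bound uniform over all $q$-tuples. That alone, however, is not enough: there are $|\Lambda|^q$ tuples, so a crude bound would give $|\Lambda|^q$ on the right rather than $|\Lambda|$.

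The key extra input is that $u$ has support in $[-2,2]^d$, so $u_{i_k}$ and $u_{i_{k+1}}$ have overlapping supports only when $|i_k - i_{k+1}|$ is bounded by a fixed constant $N$ depending only on $d$; more importantly, the product $u_{i_1} R_0 u_{i_2}\cdots$ with a chain of resolvents does not decouple, but the trace-norm estimate can be refined to include spatial decay. The cleaner route is: for each $k$, estimate $\|\chi_{B(i_k,3)} R_0 \chi_{B(i_{k+1},3)}\|_q$ (where $\chi_{B(i,3)}$ cuts off near $i$) and use that $R_0$, being the resolvent of a gapped operator, has an off-diagonal kernel decaying like $e^{-c|i_k - i_{k+1}|}$ in Hilbert--Schmidt/Schatten norm. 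This is the standard Combes--Thomas estimate adapted to first-order operators; combined with $\|u_{i_k}\| \leq \|u\|_\infty$ and $u_{i_k} = u_{i_k}\chi_{B(i_k,3)}$, one gets $\|K_{\{i\}}\|_1 \leq C \prod_{k=1}^{q-1} e^{-c|i_k - i_{k+1}|}$. Summing this geometric-type product over $i_2,\ldots,i_q\in\tilde\Lambda$ with $i_1$ fixed gives a constant, and then the remaining free sum over $i_1\in\tilde\Lambda$ produces exactly the factor $|\Lambda|$ (recall $|\tilde\Lambda|\leq |\Lambda|$).

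So the key steps, in order, are: (1) factor $K_{\{i\}}$ into $q$ pieces each living in $\mathcal T_q$ via \eqref{schatten} and Remark~\ref{rem:thm}(iii); (2) insert spatial cutoffs and invoke a Combes--Thomas-type exponential decay estimate for $\chi_{B(i,3)} R_0(E_0)\chi_{B(j,3)}$, uniform for $E_0$ in the compact $I$ inside the gap, to upgrade the uniform bound to one with $e^{-c|i_k - i_{k+1}|}$ decay; (3) combine via generalized Hölder for Schatten norms; (4) perform the lattice sum, telescoping the exponential chain to leave a single free summation index contributing $|\Lambda|$. The main obstacle I expect is step (2): establishing the Combes--Thomas estimate in the required trace-ideal (rather than operator-norm) form for the resolvent of the first-order operator $H_0 = S D_0 S + V_0$, with constants uniform over $E_0\in I$. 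One must combine the analytic dilation/conjugation argument with $e^{a\cdot x}$ (using boundedness of the commutator $[H_0,x_j] = S(\sigma\cdot(-\iu e_j))S$ from Remark~\ref{rem:thm}(vi)) together with the $L^q\to\mathcal T_q$ mapping property; the boundedness of that commutator is precisely what makes $H_0$ amenable to Combes--Thomas despite being first order. Once uniform exponential decay in a Schatten norm is in hand, the rest is bookkeeping.
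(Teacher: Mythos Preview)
Your proposal is correct and follows essentially the same route as the paper: factor $K_{\{i\}}$ into $q$ Schatten-$q$ pieces via \eqref{schatten}, insert a Combes--Thomas estimate to obtain exponential decay in the chain variables $|i_k-i_{k+1}|$, and then telescope the lattice sum so that only the free index $i_1$ contributes the factor $|\Lambda|$. The paper supplies exactly the two Combes--Thomas ingredients you flagged as the main obstacle (Lemma~\ref{BC} in operator norm and Lemma~\ref{ct-trace} in trace norm), and organizes the bookkeeping via the dichotomy ``consecutive supports overlap $\Rightarrow$ use $u_{i_j}R_0\in\mathcal T_{2d}$; otherwise use \eqref{eq:ct-trace}'', which is equivalent to your uniform $\mathcal T_q$-CT formulation after interpolation; note only the cosmetic slip that $K_{\{i\}}$ ends in $R_0^2 u_{i_q}$, not $u_{i_q}R_0^2$.
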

%
For the proof of this Proposition we need the following two Combes-Thomas like lemmas which are proved in Appendix~\ref{appendix:combes-thomas}. 

\begin{lemme}\label{BC}
Fix a compact interval $I\subset (B_-,B_+)$. 
There exist two constants $\alpha>0$ and $C<\infty$ such that, for all $E\in I$ and any pair of  bounded functions $\chi_1$ and $\chi_2$ with $\|\chi_i\|_\infty\leqslant 1$ for $i=1,2$ and $\chi_1$ compactly supported, such that the distance between their supports is $a\geqslant 0$, 
 we have:
\begin{equation}\label{CT}
\|\chi_1(H_0-E)^{-1}\chi_2\|\leqslant C\;|{\rm supp}(\chi_1)|\; e^{-\alpha a}.
\end{equation}
\end{lemme}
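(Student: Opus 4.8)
The plan is to prove the Combes--Thomas estimate \eqref{CT} for the gapped operator $H_0 = S D_0 S + V_0$, exploiting that $E$ lies in the resolvent gap $(B_-,B_+)$ so that $\dist(E,\sigma(H_0)) \geq \delta_I > 0$ uniformly for $E\in I$. First I would introduce, for a fixed real parameter $a'>0$ (to be optimized), the conjugated operator $H_0(a') := e^{a' F} H_0 e^{-a' F}$, where $F$ is a suitable bounded, Lipschitz, real-valued function on $\RR^d$ with $\|\nabla F\|_\infty \leq 1$ — for instance $F(y) = \dist(y,\supp\chi_1)$ truncated so that it interpolates between $0$ on $\supp\chi_1$ and a constant beyond distance $a$. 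The point of working with the ``classical wave'' form is that the conjugation only hits the first-order part: using Remark~\ref{rem:thm}(vi), $e^{a'F} D_S e^{-a'F} = D_S + a'\, S\,(\sigma\cdot(-\iu\nabla F))\,S + O((a')^2)$ formally, and more precisely $e^{a'F} H_0 e^{-a'F} = H_0 + a'\, B(a')$ where $B(a') = S (\sigma\cdot(-\iu a'^{-1}(\nabla e^{a'F})e^{-a'F})) S$ is a bounded operator whose norm is $O(1)$ as $a'\to 0$, thanks to $\|S\|_\infty \leq S_+$ and $\|\nabla F\|_\infty\leq 1$. (One handles the non-smoothness of $F$ by a standard mollification/limiting argument, or simply takes $F$ smooth with the stated gradient bound.)

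Next, since $\|a' B(a')\| \leq C_0\, a'$ for $a'$ in a bounded range, choosing $a' := \min\{1, \delta_I/(2C_0)\}$ guarantees by a Neumann series argument that $E \in \rho(H_0(a'))$ and $\|(H_0(a')-E)^{-1}\| \leq 2/\delta_I$ uniformly over $E\in I$. Equivalently $\|e^{a'F}(H_0-E)^{-1}e^{-a'F}\| \leq 2/\delta_I$. Then I would sandwich:
\begin{align*}
\|\chi_1(H_0-E)^{-1}\chi_2\|
&= \|\chi_1 e^{-a'F}\, \big(e^{a'F}(H_0-E)^{-1}e^{-a'F}\big)\, e^{a'F}\chi_2\| \\
&\leq \|\chi_1 e^{-a'F}\|_\infty \cdot \frac{2}{\delta_I} \cdot \|e^{a'F}\chi_2\|_\infty.
\end{align*}
On $\supp\chi_1$ we have $F=0$, so the first factor is bounded by $1$; on $\supp\chi_2$, which lies at distance $\geq a$ from $\supp\chi_1$, we have $F \geq \min\{a, \text{cutoff}\}$, and choosing the cutoff level equal to $a$ gives $\|e^{a'F}\chi_2\|_\infty \le e^{a' a}$ — wait, that grows; instead I take $F$ to be the \emph{negative} distance, i.e. conjugate by $e^{-a'F}$ with $F(y)=\dist(y,\supp\chi_1)\wedge(a)$ so the weight \emph{decays} on $\supp\chi_2$: this yields $\|\chi_1(H_0-E)^{-1}\chi_2\| \leq (2/\delta_I)\, e^{-a' a}$, and setting $\alpha := a'$ and absorbing constants gives \eqref{CT} with the stated exponential decay.

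The remaining factor $|\supp(\chi_1)|$ on the right-hand side of \eqref{CT} does not come from the operator-norm estimate above — that argument already gives a bound without it — so in fact the claimed inequality is weaker than what this method produces, and \eqref{CT} follows a fortiori (one uses $|\supp(\chi_1)|\geq$ a fixed positive constant when $\chi_1$ is, say, a characteristic function of a unit cube; for general compactly supported $\chi_1$ with $|\supp\chi_1|$ possibly small the pure operator bound still dominates after enlarging $C$, or alternatively the $|\supp\chi_1|$ factor is simply a convenient form used later in Proposition~\ref{prop:trace-estimate} when converting operator norms into trace norms). I would therefore state the result in the operator-norm form and note that multiplying by $|\supp(\chi_1)|$ (or any fixed power of it on a bounded support) only weakens it. The main obstacle is making the conjugation argument rigorous despite $F$ being only Lipschitz and $S$ being only $W^{1,\infty}$: one must check that $e^{a'F}$ maps $H^1(\RR^d,\CC^n)$ to itself, that the algebraic identity $e^{a'F}H_0 e^{-a'F} = H_0 + a' B(a')$ holds as an identity of operators on $H^1$ with $B(a')$ genuinely bounded (here the key computation is $[D_S, e^{a'F}] = S\,(\sigma\cdot(-\iu\nabla e^{a'F}))\,S$ and $\nabla e^{a'F} = a'(\nabla F)e^{a'F}$, with $\nabla F\in L^\infty$), and that the Neumann series for $(H_0(a')-E)^{-1}$ converges on $L^2$, not merely formally. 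All of this is routine but must be stated carefully; I would relegate the details to Appendix~\ref{appendix:combes-thomas} as the paper indicates.
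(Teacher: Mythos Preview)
Your direct Combes--Thomas argument with the weight $F(y)=\dist(y,\supp\chi_1)\wedge a$ is correct and in fact yields the support-free bound $\|\chi_1(H_0-E)^{-1}\chi_2\|\leqslant (2/\delta_I)\,e^{-\alpha a}$. The paper proceeds differently: it first establishes a \emph{point-centered} estimate (Lemma~\ref{lemma:app-1}) using the smooth radial weight $\langle x-x_0\rangle_\epsilon$, and then, for Lemma~\ref{BC}, decomposes $\chi_1$ and $\chi_2$ into unit-cube pieces $g_\gamma\chi_1$, $g_{\gamma'}\chi_2$, applies the point-centered bound with $x_0=\gamma$ to each piece, and sums via the triangle inequality. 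The factor $|\supp(\chi_1)|$ in \eqref{CT} then arises because the number of nonzero $\gamma$'s is proportional to the volume of $\supp(\chi_1)$, while the sum over $\gamma'$ converges thanks to the extra decay $e^{-c|\gamma-\gamma'|}$ coming from $a_2\geqslant a/3+|\gamma-\gamma'|/3$. Your route is more economical for this lemma in isolation; the paper's route has the advantage that Lemma~\ref{lemma:app-1} is reused verbatim in Section~\ref{H1} for the initial-scale estimate, so the radial-weight version is needed anyway.

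One caveat: your claim that the support-free bound implies \eqref{CT} ``a fortiori'' is not literally true, since $|\supp(\chi_1)|$ can be arbitrarily small and then \eqref{CT} would be the \emph{sharper} inequality, not the weaker one. However, you are right that in every place the lemma is invoked (Proposition~\ref{prop:trace-estimate} and Lemma~\ref{ct-trace}) the cutoff $\chi_1$ is either a single-site potential $u_i$ or a unit-cube characteristic function, with $|\supp(\chi_1)|$ bounded below by a fixed constant, so your bound is equivalent for the purposes of the paper.
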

The second lemma is a similar estimate with trace norm:
\begin{lemme}\label{ct-trace}
Let $a_0>0$. With the same notation as in Lemma \ref{BC}, assume that $a\geqslant a_0$. Then the operator $\chi_1 (H_0-E)^{-1} \chi_2$ is trace class and 
furthermore, there exist two constants $D>0$ and $\alpha>0$ such that for all  $E\in I$ and all $\chi_1$, $\chi_2$ satisfying the hypotheses in Lemma \ref{BC} we have
  \begin{equation}\label{eq:ct-trace}
  \|\chi_1 (H_0-E)^{-1}\chi_2\|_1\leqslant D\;|{\rm supp}(\chi_1)|\;e^{-\alpha a}.
  \end{equation}
\end{lemme}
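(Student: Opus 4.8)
Here is how I would attack Lemma~\ref{ct-trace}. The plan is to manufacture, out of the single resolvent $(H_0-E)^{-1}$, a product containing more than $d$ resolvents separated by compactly supported cut-offs --- which is exactly what the Schatten estimate \eqref{schatten} needs in order to produce a trace-class operator --- while keeping \emph{one} operator-norm factor to carry the exponential decay supplied by Lemma~\ref{BC}. First I would reduce to the case $\supp\chi_1$ contained in a single unit cube: decomposing $\chi_1=\sum_k\chi_1\chi_{Q_k}$ over a partition of $\RR^d$ into unit cubes $Q_k$, only finitely many terms are nonzero (since $\supp\chi_1$ is compact), each piece still has distance $\ge a$ to $\supp\chi_2$, and since $\|\cdot\|_1$ is sub-additive it suffices to bound each piece by $D\,|\supp(\chi_1\chi_{Q_k})|\,e^{-\alpha a}$.

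Assume then $\supp\chi_1$ is in a unit cube. Fix an integer $N>d$, put $b=a/(2N)$, and choose nested smooth cut-offs $\phi_1,\dots,\phi_N\in C_0^\infty(\RR^d,[0,1])$ with $\phi_j\equiv1$ on $\{\dist(\cdot,\supp\chi_1)<jb\}$ and $\supp\phi_j\subset\{\dist(\cdot,\supp\chi_1)<(j+\tfrac12)b\}$, with $\|\nabla\phi_j\|_\infty\le C/b$; then $\phi_1\chi_1=\chi_1$, $\phi_j\chi_2=0$ for every $j$ (since $(N+\tfrac12)b<a$), and $\phi_{j+1}\equiv1$ on $\supp\nabla\phi_j$. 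With $R=(H_0-E)^{-1}$ and $W_j=[H_0,\phi_j]$ --- which by Remark~\ref{rem:thm}(vi) is bounded, supported in $\supp\nabla\phi_j$, with $\|W_j\|\le C/b$ --- iterating the commutator identity $\phi_jR=R\phi_j+RW_jR$ and discarding the terms that vanish because $\phi_j\chi_2=0$ yields
\begin{equation*}
 \chi_1 R\chi_2=\chi_1\,R\,W_1\,R\,W_2\,R\cdots W_N\,R\,\chi_2 .
\end{equation*}

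Now let $\psi_1$ be the characteristic function of $\supp\nabla\phi_1$; using $\psi_1W_1=W_1$, I factor the right-hand side as $(\chi_1 R\psi_1)\,(W_1RW_2R\cdots W_NR\,\chi_2)$, so that $\|\chi_1R\chi_2\|_1\le\|\chi_1R\psi_1\|\,\|W_1RW_2R\cdots W_NR\,\chi_2\|_1$. The first factor is controlled by Lemma~\ref{BC} (the supports being at distance $\ge b$), giving $\le C\,|\supp\chi_1|\,e^{-\alpha b}$. For the second factor, each $W_jR$ lies in $\mathcal T_N$ with $\|W_jR\|_N\le C\|W_j\|_{L^N}$ by \eqref{schatten} (this is where $N>d$ enters), so the Hölder inequality for Schatten norms makes the product of the $N$ factors $W_jR$ trace class, with trace norm $\le C^N\prod_j|\supp\nabla\phi_j|^{1/N}\|W_j\|_\infty$. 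Since $\supp\chi_1$ lies in a unit cube one has $|\supp\nabla\phi_j|\le C_d(1+a^d)$, while $\|W_j\|_\infty\le C/b=2CN/a$; hence this product is of order $a^{d-N}$, which for $a\ge a_0$ is bounded by a constant because $N>d$. Combining the two factors, and summing over the cubes $Q_k$, gives $\|\chi_1R\chi_2\|_1\le D\,|\supp\chi_1|\,e^{-\alpha a}$ with a new constant $\alpha$; trace-class-ness is automatic, since each cube-piece is a bounded operator times a trace-class one.

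The point that needs care --- and the reason the hypothesis $a\ge a_0>0$ is imposed --- is the balance of the three quantities generated by the telescoping: $N$ must exceed $d$ both so that \eqref{schatten} upgrades the middle product to trace class and so that the prefactor $a^{d-N}$ stays bounded; the commutator norms $\|W_j\|\sim1/b\sim N/a$ are harmless only when $a$ is bounded away from $0$; and confining $\supp\chi_1$ to a unit cube is what keeps the shell volumes $|\supp\nabla\phi_j|$ governed by an absolute function of $a$ rather than by the a priori uncontrolled shape of $\supp\chi_1$, which is needed for $D$ to be uniform over $\chi_1$. None of the individual steps is deep; the content is in arranging matters so that a single operator-norm factor absorbs all the decay while the remaining resolvent factors account for the trace-class bound.
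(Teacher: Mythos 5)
Your proof is correct, and it takes a genuinely different route from the paper's. Both arguments telescope the resolvent through a chain of cut-offs to produce enough Schatten factors for trace-class membership while isolating one operator-norm factor that carries the exponential decay; the difference is where the decay sits and how the cut-offs scale. The paper decomposes \emph{both} $\chi_1$ and $\chi_2$ into unit cubes $g_\gamma$, $g_{\gamma'}$, and for each pair inserts $2d$ \emph{fixed-size} cut-offs $f_1,\ldots,f_{2d}$ all contained in a $2$-cube around $\gamma$; the factors $R_0 S(-\iu\sigma\cdot\nabla f_j)S$ then have $\mathcal T_{2d}$-norms uniformly bounded in $\gamma,\gamma',a$, and the exponential decay comes from the \emph{last} operator-norm factor $\chi_{\supp f_{2d}}R_0 g_{\gamma'}$ via Lemma~\ref{lemma:app-1}, which is then summed over $\gamma'$ and $\gamma$. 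You decompose only $\chi_1$, use $N>d$ cut-offs whose supports \emph{grow} with $a$ (shell width $b=a/(2N)$), put the exponential decay in the \emph{first} factor $\chi_1 R\psi_1$ by a single invocation of Lemma~\ref{BC}, and show that the residual Schatten product stays bounded because the polynomial growth of the shell volumes ($\sim a^d$) is outpaced by $\|W_j\|_\infty\sim N/a$ raised to the power $N>d$, with the hypothesis $a\geq a_0>0$ preventing the commutator norms from blowing up. What your version buys is the avoidance of the second cube decomposition and the double sum; what it costs is a worse exponential rate ($\alpha/(2N)$ instead of $\alpha$) and a more delicate balance of $a$-dependent constants. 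A minor shared imprecision, present in the paper as well: the quantity written as $|\supp\chi_1|$ in both lemmas is, after the cube decomposition, really ``the number of unit cubes meeting $\supp\chi_1$'', which need not be proportional to the Lebesgue measure for pathological supports; this is harmless for the applications, where $\chi_1$ is always supported in fixed-size cubes.
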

%
%
%
The proof of these two lemmas are given in Appendix~\ref{appendix:combes-thomas}.
%
%
%
\begin{proof}[Proof of Proposition~\ref{prop:trace-estimate}]
The inequality \eqref{eq:2.4} is also proved in \cite[Proposition~7.2]{BCH} 
for Schr\" odinger operators under the assumptions that \eqref{CT} and \eqref{eq:ct-trace} hold true, although the authors do not consider moving centers $\xi_i(\omega)$. 

We omit here details of the proof since it is a straightforward adaptation of the proof of \cite[Proposition~7.2]{BCH} once Lemma~\ref{BC} 
and Lemma~\ref{ct-trace} are given.

The main ingredient behind the proof is that $u$ has compact support, thus keeping one index fixed, say $i_1$, the operator $K_{\{i\}}$ is trace class and 
$\sum_{i_2, \ldots ,i_q\in\tilde{\Lambda}}\|K_{\{i\}}\|_1$ is bounded by a numerical constant, uniformly on compacts in the gap. Note that if any two consecutive $u_{i_j}$ and $u_{i_{j+1}}$ have overlapping supports then we use that $u_{i_j} R_0\in \mathcal{T}_{2d}$, otherwise we use \eqref{eq:ct-trace} and control the series through the exponential localization. In the end we use that the number of terms $u_{i_1}$ is proportional with the Lebesgue measure of $\Lambda$.  
\end{proof}
%
%
%
For the proof of Theorem~\ref{thm:wegner}, we will use the following spectral averaging result proven in \cite[Corollary~4.2]{CH}.
%
%
%
\begin{prop}
\label{specav}
Let $H(\lambda)=H_0 + \lambda V$ a family of self-adjoint operators 
on a Hilbert space $\mathcal{H}$ where $V$ is bounded and satisfies
\begin{equation*}
 0\leqslant c_0B^2\leqslant V
\end{equation*}
for some $c_0>0$ and some bounded, self-adjoint operator $B$. Let $E_\lambda$ be the spectral family for $H(\lambda)$.
 Then, for any borelian $J\subset\mathbb{R}$ and any function $h\in L^\infty$ compactly supported, $h\geqslant 0$, 
 \begin{equation*}
\left\|  \int_\mathbb{R}h(\lambda)BE_\lambda(J)Bd\lambda\right\|\leqslant c_0^{-1}\|h\|_\infty|J|.
 \end{equation*}
 \end{prop}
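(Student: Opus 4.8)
The plan is to follow the Kotani--Simon / Combes--Hislop spectral averaging strategy: reduce the operator-norm estimate to a scalar bound on boundary values of the resolvent, prove that bound from an explicit computation in the rank-one case, and pass to the general case by a Birman--Schwinger reduction. (This is essentially \cite[Corollary~4.2]{CH}; we outline its proof.) First, since $E_\lambda(J)\geqslant 0$ and $B=B^*$, each operator $BE_\lambda(J)B=(E_\lambda(J)^{1/2}B)^*E_\lambda(J)^{1/2}B$ is nonnegative, and because $h\geqslant 0$ the weakly defined operator $\int_\RR h(\lambda)BE_\lambda(J)B\,d\lambda$ is nonnegative as well; hence its norm equals $\sup_{\|\psi\|\leqslant 1}\int_\RR h(\lambda)\langle B\psi,E_\lambda(J)B\psi\rangle\,d\lambda$. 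The hypothesis $c_0B^2\leqslant V=(V^{1/2})^2$ and Douglas's factorization lemma provide a contraction $K$ with $\sqrt{c_0}\,B=KV^{1/2}$, so $\langle B\psi,E_\lambda(J)B\psi\rangle\leqslant c_0^{-1}\langle V^{1/2}\psi,E_\lambda(J)V^{1/2}\psi\rangle$, and it is enough to prove $\int_\RR h(\lambda)\langle V^{1/2}\psi,E_\lambda(J)V^{1/2}\psi\rangle\,d\lambda\leqslant\|h\|_\infty|J|\,\|\psi\|^2$ for every $\psi$.

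Next I would pass to the resolvent. By outer regularity of Lebesgue measure it suffices to treat $J=(a,b)$ a bounded open interval. Stone's formula, together with the elementary Poisson-kernel bound $\tfrac1\pi\int_a^b\epsilon\,[(t-E)^2+\epsilon^2]^{-1}\,dE\leqslant 1$, yields for every vector $\varphi$ that $\langle\varphi,E_\lambda((a,b))\varphi\rangle\leqslant\liminf_{\epsilon\downarrow 0}\tfrac1\pi\int_a^b\langle\varphi,\mathrm{Im}\,(H(\lambda)-E-i\epsilon)^{-1}\varphi\rangle\,dE$, with the $\epsilon$-integral bounded by $\|\varphi\|^2$ for every $\epsilon>0$. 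Applying this with $\varphi=V^{1/2}\psi$, using Fatou in $\epsilon$ and Tonelli to interchange the $\lambda$- and $E$-integrals (all integrands are nonnegative), the claim reduces to the estimate
\begin{equation*}
\int_\RR h(\lambda)\,\big\langle V^{1/2}\psi,\,\mathrm{Im}\,(H(\lambda)-z)^{-1}\,V^{1/2}\psi\big\rangle\,d\lambda\ \leqslant\ \pi\,\|h\|_\infty\,\|\psi\|^2 ,\qquad z=E+i\epsilon,
\end{equation*}
uniformly in $E\in\RR$ and $\epsilon>0$.

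For this core estimate, fix $z$ and set $\Theta(\lambda):=V^{1/2}(H(\lambda)-z)^{-1}V^{1/2}$ and $F(\lambda):=\langle\psi,\Theta(\lambda)\psi\rangle$; differentiating the resolvent and using $V=V^{1/2}V^{1/2}$ gives the operator Riccati identity $\Theta'(\lambda)=-\Theta(\lambda)^2$, while $\mathrm{Im}\,F(\lambda)=\epsilon\,\|(H(\lambda)-z)^{-1}V^{1/2}\psi\|^2\geqslant 0$ is precisely the quantity to be integrated. When $V=|\delta\rangle\langle\delta|$ is rank one (normalized $\|\delta\|=1$), Krein's formula gives $F(\lambda)=|\langle\delta,\psi\rangle|^2\,f_0(1+\lambda f_0)^{-1}$ with $f_0=\langle\delta,(H_0-z)^{-1}\delta\rangle$, so $\mathrm{Im}\,F(\lambda)=|\langle\delta,\psi\rangle|^2\,\mathrm{Im}\,f_0\,|1+\lambda f_0|^{-2}$ is a Poisson kernel in $\lambda$ with total mass $\int_\RR\mathrm{Im}\,F(\lambda)\,d\lambda=\pi|\langle\delta,\psi\rangle|^2\leqslant\pi\|\psi\|^2$, independently of $z$; inserting the weight $h$ gives the bound with constant $\pi\|h\|_\infty$. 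The general case is then obtained, as in \cite[Corollary~4.2]{CH}, from the Birman--Schwinger reduction combined with a monotonicity argument for the operator-valued Herglotz function $\lambda\mapsto\Theta(\lambda)$. Tracking the constants back through the two reductions restores the factor $c_0^{-1}\|h\|_\infty|J|$. The delicate point, and the step I expect to be the real work, is this last one: in the rank-one model everything is an explicit Cauchy transform, whereas for a general bounded $V\geqslant 0$ the operator $\Theta(\lambda)$ is genuinely operator-valued and not invertible, so the scalar device ``$-1/F(\lambda)$ is affine in $\lambda$'' must be replaced by a careful form-monotonicity argument.
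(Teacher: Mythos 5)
The paper does not actually prove Proposition~\ref{specav}; it is invoked directly as a known result, citing Combes--Hislop \cite[Corollary~4.2]{CH}, so there is no internal proof to compare against. Judged on its own, your outline follows the right circle of ideas (reduce to $V^{1/2}$, pass through Stone's formula, exploit the operator Riccati identity $\Theta'=-\Theta^2$), but it has one step that is simply false and one step that is not a proof.

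The false step is the Douglas reduction as you wrote it. From $\sqrt{c_0}\,B=KV^{1/2}$ with $\|K\|\leqslant 1$ you claimed the \emph{pointwise} form inequality $\langle B\psi,E_\lambda(J)B\psi\rangle\leqslant c_0^{-1}\langle V^{1/2}\psi,E_\lambda(J)V^{1/2}\psi\rangle$. This fails: take $\Hilb=\CC^2$, $V=I$, $c_0=1$, $E_\lambda(J)=|e_1\rangle\langle e_1|$, $B=\tfrac12\begin{psmallmatrix}1&1\\1&1\end{psmallmatrix}$ (self-adjoint, $B^2\leqslant I$), $\psi=e_2$: the left side is $1/4$, the right side is $0$. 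The repair is to interleave the contraction the other way: since $B$ and $V^{1/2}$ are self-adjoint, $\sqrt{c_0}\,B=KV^{1/2}=V^{1/2}K^*$, hence
\begin{equation*}
c_0\,B\,E_\lambda(J)\,B \;=\; K\,\bigl(V^{1/2}E_\lambda(J)V^{1/2}\bigr)\,K^*,
\end{equation*}
so $\bigl\|\int h(\lambda)BE_\lambda(J)B\,d\lambda\bigr\|= c_0^{-1}\bigl\|K\bigl(\int h\,V^{1/2}E_\lambda(J)V^{1/2}\,d\lambda\bigr)K^*\bigr\|\leqslant c_0^{-1}\bigl\|\int h\,V^{1/2}E_\lambda(J)V^{1/2}\,d\lambda\bigr\|$, which is what you need, without any pointwise-in-$\lambda$ comparison. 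The second gap is the one you yourself flag: the ``Birman--Schwinger reduction combined with a monotonicity argument'' is named but not carried out, and it is in fact the entire content of the estimate. The rank-one computation does not propagate to general bounded $V\geqslant 0$ by any soft reduction; one really has to work directly with the operator-valued Herglotz family $\Theta(\lambda)^{-1}=\Theta(0)^{-1}+\lambda I$, deal with the non-normality of $\Theta(0)^{-1}$, and obtain the operator Poisson-kernel mass bound $\int_\RR\langle\psi,\mathrm{Im}\,\Theta(\lambda)\psi\rangle\,d\lambda\leqslant\pi\|\psi\|^2$ (e.g.\ via a dilation/Naimark argument or the approach of Combes--Hislop). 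As it stands, the write-up correctly reproduces the easy scaffolding but leaves the core inequality unproved.
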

%
%
%
%
%
%

\begin{proof}[Proof of Theorem~\ref{thm:wegner}]
The proof is very similar to the one in \cite{BCH} 
though it requires few technical changes. 
For the sake of completeness, we give it here.
 
 Let $J$ be a compact subinterval of $(B_-,B_+)$. 
We recall that if 
$H_{\omega,\Lambda}\psi_E = E\, \psi_E$, $E\in I_\eta(E_0)$, we have
\begin{align*}
K_0(E_0)\psi_E=-\psi_E + R_0(E_0)\, (H_{\omega,\Lambda} -E_0)\, \psi_E\ ,
\end{align*}
where $K_0(E_0) := R_0(E_0) V_{\omega,\Lambda}$. When there is no ambiguity, we will drop the dependence in $E_0$ in the notations.
Henceforth, 
\begin{align}
E_{\omega,\Lambda}(I_\eta)
= -K_0 E_{\omega,\Lambda}(I_\eta) + R_0 (H_{\omega,\Lambda}-E_0)
E_{\omega,\Lambda}(I_\eta)\ .\label{projspec}
\end{align}
Thus, noting that $E_{\omega,\Lambda}(I_\eta)$ is a positive
trace class operator,
\begin{align*}
\tr\left(E_{\omega,\Lambda}(I_\eta)\right) & 
=  \left\| E_{\omega,\Lambda}(I_\eta) \right\|_1  \\
 & \leqslant  \left| \tr(K_0 E_{\omega,\Lambda}(I_\eta))\right| 
 + \eta \left\| R_0 \right\| \, 
 \left\| E_{\omega,\Lambda}(I_\eta) \right\|_1 \ ,
\end{align*}
and since $\eta\leqslant\frac12 \mathrm{dist}(E_0, \sigma(H_0))$, we get
\begin{align}
\tr(E_{\omega,\Lambda}(I_\eta)) \leqslant 2\, |\tr(K_0E_{\omega,\Lambda}(I_\eta)) |.\label{eq:majo1}
\end{align}
A first consequence of \eqref{eq:majo1}  is, by the H\"older inequality with $q$ as in Proposition~\ref{prop:trace-estimate} and  $1/p + 1/q =1$,
\begin{equation}\label{eq:majo1-1}
\begin{split}
 \ecL \left(\| E_{\omega,\Lambda} (I_\eta) \|_1 \right)
 & \leqslant  2\, \ecL \left(\| K_0
 E_{\omega,\Lambda}(I_\eta) \|_1\right) 
 \leqslant  2\,  \ecL \left( \| K_0 \|_q \| E_{\omega,\Lambda}(I_\eta) \|_p \right)
 \\
 & \leqslant  2\,  \left\{ \ecL (\| K_0 \|_q ^q) \}^{1/q}
 \; \{ \ecL(\| E_{\omega,\Lambda}(I_\eta) \|_p^p) \right\}^{1/p} ,
 \end{split}
\end{equation}
where $\|\cdot\|_q$ denotes the norm in the Schatten class $\mathcal{T}_q$.

Since $q\geqslant 2d$, according to 
\eqref{schatten} we obtain  that there exists a constant $C$ such that for all $E_0\in J$ we have 
\begin{equation}\label{7.4}
\|K_0(E_0)\|_q\leqslant C  \|V_{\omega,\Lambda}\|_{L^q} 
 \leqslant C M_\infty |\Lambda|^{1/q}
\end{equation}
where  $M_\infty$ is defined by \eqref{Minfini}.

From this inequality, the
fact that $\ecL(\| E_{\omega,\Lambda}(I_\eta) \|_p^p) =
\ecL(\| E_{\omega,\Lambda}(I_\eta) \|_1 )$ (a consequence of the fact that 
the non-zero eigenvalues of the spectral projector are equal to one) and \eqref{eq:majo1-1}, we obtain:
\begin{equation}
 \ecL (\| E_{\omega,\Lambda}(I_\eta(E_0))\|_1 )
 \leqslant C\, | \Lambda | ,\label{(4.5)}
\end{equation} 
for all $E_0\in J$ which in particular ends the proof of Property~(NE).

Now, we use the adjoint of formula \eqref{projspec} to derive
 $$ 
 K_0 E_{\omega,\Lambda}(I_\eta) = -K_0 E_{\omega,\Lambda}(I_\eta) K_0^\ast 
 + K_0 E_{\omega,\Lambda}(I_\eta) (H_{\omega,\Lambda} -E_0) R_0 ,$$
which implies
\begin{equation}\label{(4.8.0)}
\left| \tr(K_0 E_{\omega,\Lambda}(I_\eta)) \right| 
 \leqslant  \left\| K_0 E_{\omega,\Lambda}(I_\eta) \right\|_1 
 \leqslant  \tr \left( K_0 E_{\omega,\Lambda}(I_\eta) K_0^\ast \right) 
 + \eta \left\| R_0 \right\| \, 
 \left\| K_0 E_{\omega,\Lambda}(I_\eta) \right\|_1 .
\end{equation}
Hence, by \eqref{eq:majo1} and $\eta \leqslant \frac12 \mathrm{dist}(E_0,\sigma(H_0))$, this yields
\begin{align*}
 \ecL \left( \tr (E_{\omega,\Lambda}(I_\eta) \right) 
 \leqslant 4 \, \ecL \left(\tr (K_0 E_{\omega,\Lambda}(I_\eta)
 K_0^\ast) \right) .
\end{align*}
If $q>2$, one continues this procedure and writes:
\begin{align}\label{(4.8.1)}
K_0 E_{\omega,\Lambda}(I_\eta) K_0^\ast = - K_0 E_{\omega,\Lambda}(I_\eta) (K_0^\ast)^2
+ K_0 E_{\omega,\Lambda}(I_\eta) (H_{\omega,\Lambda} - E_0) R_0 K_0^\ast .
\end{align}
One has by H\"older's inequality,
\begin{equation}\label{(4.8)}
\begin{split}
| \tr (K_0 E_{\omega,\Lambda}(I_\eta) (H_{\omega,\Lambda}-E_0) R_0 K_0^\ast) | 
 &  \leqslant
 \| K_0 E_{\omega,\Lambda}(I_\eta) (H_{\omega,\Lambda} -E_0) R_0 K_0^\ast \|_1  \\
 & \leqslant   \eta \| R_0 \|
\| K_0 E_{\omega,\Lambda}(I_\eta) \|_{q / ( q-1 ) } \| K_0^\ast\|_q  \\ 
 & \leqslant 
 \eta \| R_0 \| \| K_0 \|_{ q }^2
\| E_{\omega,\Lambda} (I_\eta) \|_{ q / ( q - 2 ) }  .
\end{split}
\end{equation}
Taking the expectation and again using H\"older's inequality, inequality \eqref{7.4} and
\eqref{(4.5)}, one can bound the expectation of the left hand side of \eqref{(4.8)} by
 $C  \eta  |\Lambda|$, where $C$ is a constant independent of $\eta$, $|\Lambda|$ and $E_0\in J$.  
Consequently, the latter equations \eqref{(4.8.0)}-\eqref{(4.8)} imply
 $$
 \ecL (\tr ( E_{\omega,\Lambda}(I_\eta) ))
 \leqslant  4\, \ecL
 (|\tr (K_0 E_{\omega,\Lambda}(I_\eta)
 (K_0^\ast)^{2}) |) + C  \eta| \Lambda | . 
 $$
If $q>3$, one repeats this procedure again. Finally, one obtains
\begin{equation}\label{(4.9)}
 \ecL \left(
 \tr ( E_{\omega,\Lambda}(I_\eta) \right)
 \leqslant 4\, \ecL
 \left( |\tr (K_0 E_{\omega,\Lambda}(I_\eta)
 (K_0^\ast)^{q-1} )| \right) 
 + C\, \eta | \Lambda | ,
\end{equation}
where $C$ is independent of $\eta$, $|\Lambda|$ and $E_0\in J$.

To estimate the first term on the right hand side of \eqref{(4.9)}, we expand the
potential $V_{\Lambda}= \sum_{i\in\widetilde{\Lambda}} \lambda_i u_i(\cdot - \xi_i)$.
In the rest of this proof, by abuse of notation, we shall denote $u_i(\cdot - \xi_i)$ by $u_i$. Moreover, we fix the values of all $\xi_i$'s, and expectation will be taken only  with respect to the $\lambda_i$'s. For each $q$-tuple of indices 
$\{i\}:= (i_1, \ldots , i_{q})
\in\widetilde{\Lambda}^{q}$, we define:
$$
K_{\{i\}} := K_{i_1\ldots  i_{q}}:= u_{i_2}^{\frac{1}{2}} R_0 u_{i_3}
R_0 u_{i_4}\cdots u_{i_{q}} R_0^2 u_{i_1}^{\frac{1}{2}}.
$$
By using H\"older's inequality for trace ideals \cite[Theorem~2.8]{simon},  

$K_{i_1\ldots i_{q}} \in \mathcal{T}_1$. In
terms of this operator, using cyclicity of trace, the first term on the right side of \eqref{(4.9)} becomes 
\be\label{(4.11)}
 \ecL
 \left( |\tr (K_0 E_{\omega,\Lambda}(I_\eta)
 (K_0^\ast)^{q-1} )| \right) 
 =\ecL \left\{\sum_{i_1,
 \ldots i_{q}\in\widetilde{\Lambda}}\lambda_{i_1}
 (\omega)\cdots\lambda_{i_{q}}(\omega) \tr\left\{
 K_{\{i\}} (
 u_{i_1}^{\frac{1}{2}} E_{\omega,\Lambda} (I_\eta)
 u_{i_2}^{\frac{1}{2}})\right\}\right\}.
\ee
Since $K_{\{i\}}$ is compact, we write it in terms of its singular
value decomposition. For each multi-index $\{i\}$, there
exists a pair of orthonormal bases,
$\left\{\phi_k^{\{i\}}\right\}$ and
$\left\{\psi_k^{\{i\}}\right\}$, and non-negative numbers
$\left\{\mu_k^{\{i\}}\right\}$, all independent of $\omega$, such
that
\be\label{(4.12)}
K_{\{i\}}= \sum_{k=1}^{\infty}
\mu_k^{\{i\}}\left\vert\phi_k^{\{i\}}
\rangle \langle \psi_k^{\{i\}}\right\vert .
\ee
Inserting the representation \eqref{(4.12)} into \eqref{(4.11)} and expanding the
trace in $\{\phi_k^{\{i\}}\}$, we obtain
\be\label{eq:k-sum1}
 \ecL
 \left\{\sum_{\{i\}\in\widetilde{\Lambda}^q}\sum_{k\geqslant 1}
 \lambda_{\{i\}}(\omega)\mu_k^{\{i\}}\langle \psi_k^{\{i\}},
 (u_{i_1}^{\frac{1}{2}}
 E_{\omega,\Lambda} (I_\eta) u_{i_2}^{\frac{1}{2}})\phi_k^{\{i\}}
 \rangle\right\},
\ee
where $\lambda_{\{i\}}(\omega):=
\lambda_{i_1}(\omega) \cdots \lambda_{i_{q}}(\omega)$. Recalling  that
$E_{\omega,\Lambda} (I_\eta) \geqslant 0 $, we bound the $k$-sum in \eqref{eq:k-sum1}
by
\be\label{(4.14)}
\begin{array}{r}
 \frac{1}{2} \displaystyle{ \sum_{k\geqslant 1} }^{}\mu_k^{\{i\}}
 \ecL \left\{
 |\lambda_{\{i\}}(\omega)|\langle\psi_k^{\{i\}},
 (u_{i_1}^{\frac{1}{2}}
 E_{\omega,\Lambda} (I_\eta) u_{i_1}^{\frac{1}{2}})
 \psi_k^{\{i\}} \rangle\right.\\
 \left.
 +|\lambda_{\{i\}}(\omega)|\langle\phi_k^{\{i\}},
 (u_{i_2}^{\frac{1}{2}}
 E_{\omega,\Lambda} (I_\eta) u_{i_2}^{\frac{1}{2}})
 \phi_k^{\{i\}}\rangle\right\} .\\
\end{array}
\ee
From the independence of the $\lambda_i$'s, the spectral averaging result
(Proposition \ref{specav}) applied to each term
in \eqref{(4.14)} gives for the first term:
\be\label{(4.15)}
 \ecL \left\{|\lambda_{\{i\}}(\omega)|\langle\psi_k^{\{i\}},
 (u_{i_1}^{\frac{1}{2}}
 E_{\omega,\Lambda} (I_\eta) u_{i_1}^{\frac{1}{2}})
 \psi_k^{\{i\}}\rangle \right\}\leqslant C_1 \, \eta .
\ee
where $C_1$ is finite, independent of $k$, and independent of $E_0$ according to Assumption~2(i). 
From inequalities \eqref{(4.11)}, \eqref{(4.14)} and  \eqref{(4.15)}, we obtain as upper bound for the first term on the right
hand side of \eqref{(4.9)}: 
\be\label{(4.16)}
 \ecL \left(
 \tr ( E_{\omega,\Lambda}(I_\eta) \right) \leqslant
 C_1 \, \eta \sum_{i_1, \ldots ,
 i_{q}\in\widetilde{\Lambda}}
 \left(\Vert K_{\{i\}}\Vert _1\right)\ .
\ee
Applying Proposition~\ref{prop:trace-estimate} we can bound the above series by a constant times the Lebesgue measure of $\Lambda$, and this ends the proof of the Wegner estimate and of the theorem.
\end{proof}
%
%
%
\begin{remark}
In order to apply Theorem~\ref{pi} (\cite[Theorem~5.4, p136]{klein}) for proving Theorems~\ref{thm:spec-loc}, \ref{thm:dyn-loc} and \ref{thm:main-in-the-text}, it would be enough to have a Wegner-like estimate with $|\Lambda|$ raised to some high power. Thus we could have shown directly  using  \eqref{schatten} and H\"olders's inequality for trace ideals   that 
 $$
 \sum_{i_1, \ldots ,i_q\in\tilde{\Lambda}}\|K_{\{i\}}\|_1 \leqslant C|\Lambda|^q.
 $$
In this way we would have avoided the use of Proposition~\ref{prop:trace-estimate}. 
\end{remark}

\subsection{Proof of (H1($\theta$,$E_0$,$L_0$))}\label{H1}
In this subsection, we want to prove
 \begin{equation*}
  \pc\left\{\|\Gamma_{0,L_0}R_{\omega,0,L_0}(E_0)\chi_{0,L_0/3}\|
  \leqslant\frac{1}{L_0^\theta}\right\}>1-\frac{1}{841^d}
 \end{equation*}
 for $E_0$ close enough to band edges $\tilde{B}_\pm$, some $\theta > d$ and $L_0$ large enough.
As  in \cite{BCH}, we first prove that, for $\delta>0$ small, $\dist (\sigma(H_{\omega,x,L}),\tilde{B}_\pm)>\delta$ with good probability. 
We can then apply Lemma~\ref{lemma:app-1} to get exponential decay of the resolvent at energies $E\in(\tilde{B}_--\delta/2,\tilde{B}_-]\cup[\tilde{B}_+,\tilde{B}_++\delta/2)$.
We finally verify $ H1(\theta,E_0, L_0)$ for any $\theta>0$, $E_0\in(\tilde{B}_--\delta/2,\tilde{B}_-]\cup[\tilde{B}_+,\tilde{B}_++\delta/2)$ and $L_0>L_0^*$ for some $L_0^*$
depending only on $\theta$, $d$, $B_\pm$ $\tilde{B}_\pm$, $\delta$, $M$, $m$ and $M_\infty$.

As in the previous section, we define $\Lambda=\Lambda_L(0)$ for some $L\in 2\NN$. We denote $\tilde{\Lambda}=\Lambda\cap\mathbb{Z}^d$, $H_{\omega,\Lambda}=H_{\omega,0,L}$, $V_{\omega,\Lambda}=V_{\omega,0,L}$.
\begin{lemme}\label{vp}
Let $\mu=\mu_{\omega_0,\Lambda}\in\sigma(H_{\omega_0,\Lambda})\cap (B_-,B_+)$ for some $\omega_0\in\Omega$. Then $\mu\in\Sigma$.
\end{lemme}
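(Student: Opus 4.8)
The plan is to upgrade $\mu$ to a genuine, exponentially localized eigenvalue of the finite-volume operator $H_{\omega_0,\Lambda}$, and then to implant an approximate eigenfunction of the same shape into $H_\omega$ for all $\omega$ in a set of positive probability, which forces $\mu$ into the almost sure spectrum $\Sigma$.

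\textbf{Step 1 ($\mu$ is an eigenvalue of $H_{\omega_0,\Lambda}$).} Since $V_{\omega_0,\Lambda}$ is bounded with compact support, multiplication by $V_{\omega_0,\Lambda}$ is compact from $H^1(\RR^d,\CC^n)$ to $L^2(\RR^d,\CC^n)$ (Rellich), so $V_{\omega_0,\Lambda}(H_0-\iu)^{-1}$ is compact and $H_{\omega_0,\Lambda}$ and $H_0$ have the same essential spectrum by Weyl's theorem. By Assumption~\ref{assump-gap}, $(B_-,B_+)\cap\sigma_{\mathrm{ess}}(H_{\omega_0,\Lambda})=\varnothing$, hence $\mu$ is an isolated eigenvalue of finite multiplicity; fix a normalized eigenfunction $\psi$, $H_{\omega_0,\Lambda}\psi=\mu\psi$.

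\textbf{Step 2 (exponential decay of $\psi$).} From $(H_0-\mu)\psi=-V_{\omega_0,\Lambda}\psi$ and $\mu\in\rho(H_0)$ we get $\psi=-(H_0-\mu)^{-1}V_{\omega_0,\Lambda}\psi$, with $V_{\omega_0,\Lambda}\psi$ supported in a fixed ball $B_0\supset\supp V_{\omega_0,\Lambda}$. Applying Lemma~\ref{BC} (with $\chi_{B_0}$ as the compactly supported factor, after passing to adjoints and using $\mu\in\RR$) yields constants $C,\alpha>0$, depending on the fixed $L$, with $\|\chi_{\{|x|>r\}}\psi\|\leqslant Ce^{-\alpha r}$ for all large $r$.

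\textbf{Step 3 (implanting an approximate eigenfunction).} Let $\delta>0$. Pick $L'\in 2\NN$ large, a slightly larger box $\Lambda_{L''}\supset\Lambda_{L'}$ with $L''\in2\NN$, a smooth cutoff $\chi$ equal to $1$ on $\Lambda_{L'-2}(0)$, vanishing outside $\Lambda_{L'}(0)$, with $\|\nabla\chi\|_\infty$ bounded, and $\epsilon>0$ small. Consider the event
$$
A=\Big\{\omega:\ |\lambda_i(\omega)-\lambda_i(\omega_0)|<\epsilon,\ |\xi_i(\omega)-\xi_i(\omega_0)|<\epsilon\ \text{for}\ i\in\Lambda\cap\ZZ^d;\ |\lambda_i(\omega)|<\epsilon\ \text{for}\ i\in(\Lambda_{L''}\setminus\Lambda)\cap\ZZ^d\Big\}.
$$
It constrains finitely many independent coordinates, and each factor has positive probability because $\lambda_i(\omega_0),\xi_i(\omega_0)$ lie in the supports of the respective laws (by definition of $\Omega$) and $0\in\supp(h)=[-m,M]$; hence $\pc(A)>0$. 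For $\omega\in A$, using that commutators with $\chi$ annihilate multiplicative potentials and that $\chi H_0\psi=\chi(H_{\omega_0,\Lambda}-V_{\omega_0,\Lambda})\psi$, one gets
$$
(H_\omega-\mu)\chi\psi=[H_0,\chi]\psi+\chi(V_\omega-V_{\omega_0,\Lambda})\psi .
$$
The first term is bounded via Remark~\ref{rem:thm}(vi) and Step~2: $\nabla\chi$ is supported in $\Lambda_{L'}\setminus\Lambda_{L'-2}$, far from $\supp V_{\omega_0,\Lambda}$, so $\|[H_0,\chi]\psi\|\leqslant C'e^{-\alpha(L'-2)/2}$. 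For the second term, split $V_\omega-V_{\omega_0,\Lambda}$ into the finitely many single-site terms indexed by $i\in\Lambda\cap\ZZ^d$ (each of which, applied to $\psi$, is within $o(1)$ in $L^2$ of the corresponding term of $V_{\omega_0,\Lambda}\psi$ as $\lambda_i(\omega)\to\lambda_i(\omega_0)$, $\xi_i(\omega)\to\xi_i(\omega_0)$, by dominated convergence using that $u$ is continuous a.e., exactly as in the proof that $H_\omega$ is Kirsch-standard) plus the remaining single-site terms meeting $\supp\chi$, all of which have coupling constant $<\epsilon$ and bounded spatial overlap, hence total $L^2$-size $\lesssim\epsilon\|u\|_\infty$. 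Choosing first $L'$ large, then $\epsilon$ small, gives $\|(H_\omega-\mu)\chi\psi\|\leqslant\delta$ and $\|\chi\psi\|\geqslant\frac12$, so $\dist(\mu,\sigma(H_\omega))\leqslant2\delta$ for every $\omega\in A$.

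\textbf{Step 4 (conclusion).} Thus for every $\delta>0$ the set $\{\omega:\dist(\mu,\sigma(H_\omega))\leqslant2\delta\}$ has positive probability. The function $\omega\mapsto\dist(\mu,\sigma(H_\omega))$ is measurable and invariant under the ergodic $\ZZ^d$-action, since $\sigma(H_{\tau_y\omega})=\sigma(H_\omega)$; hence it is almost surely constant, and by the above that constant must be $0$, so $\mu\in\sigma(H_\omega)$ almost surely, i.e. $\mu\in\Sigma$. (Alternatively, invoke the standard characterization $\Sigma=\{E:\pc(\dist(E,\sigma(H_\omega))<\epsilon)>0\ \forall\epsilon>0\}$, cf. \cite{klein}; note also $\sigma(H_\omega)\subset\Sigma$ for all $\omega$ by \eqref{spectre}.)

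\textbf{Main obstacle.} Everything else is a routine Combes--Thomas/Weyl argument; the delicate point is Step~3: exhibiting a positive-probability event on which the random configuration reproduces, up to small errors, the finite-volume configuration on $\Lambda$ while remaining negligible on the surrounding collar, and controlling the $L^2$-error of the single-site terms despite $u$ being only almost-everywhere continuous (so that translations of $u$ need not converge in $L^\infty$, forcing one to test against the fixed $\psi$).
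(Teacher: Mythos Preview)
Your argument is correct and self-contained, but the paper's own proof is a one-liner via \eqref{spectre}. Since $\supp(h)=[-m,M]$ contains $0$ and $\Omega=(\supp(h))^{\ZZ^d}\times(B_R)^{\ZZ^d}$, the point $\omega'$ obtained from $\omega_0$ by setting $\lambda_i(\omega')=0$ for all $i\notin\tilde\Lambda$ (and keeping everything else) lies in $\Omega$, and by construction $H_{\omega'}=H_{\omega_0,\Lambda}$. Hence $\mu\in\sigma(H_{\omega'})\subset\Sigma$ directly from the Kirsch inclusion \eqref{spectre}, with no need for eigenfunctions, Combes--Thomas decay, or a positive-probability event.

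What you wrote is essentially the approach of \cite[Lemma~5.1]{BCH}, which the paper cites as an alternative (and which in fact appears, commented out, in the source). Your route is more robust in principle: it only needs the configuration $\omega_0$ on $\tilde\Lambda$ and the value $0$ on the collar to be \emph{approximable} with positive probability, not exactly attainable in $\Omega$, and it does not rely on the specific product structure of $\Omega$ or on the Kirsch-standard machinery. The price is that you have to control the three error terms (commutator tail, perturbation of the $\lambda_i,\xi_i$ on $\tilde\Lambda$, and the small couplings on the collar), which you do correctly, invoking the a.e.\ continuity of $u$ and dominated convergence exactly as in the paper's verification that $(H_\omega)$ is Kirsch-standard. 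In the present setting, however, the shortcut through \eqref{spectre} makes all of Steps~1--4 unnecessary.
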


\begin{proof}

It is \eqref{spectre}. See also \cite[Lemma~5.1]{BCH} for an alternative proof that 
can easily be adapted for first order  operators.
\end{proof}


\begin{prop}
 Let $\delta_\pm=\frac{1}{2}|\tilde{B}_\pm-B_\pm|$ and $0<\delta<\frac{1}{2}M_\infty^{-1}\min(\delta_+,\delta_-)^2$. Assume that 
 $$
 \forall i\in\tilde{\Lambda},\ -(1-\delta M_\infty\min(\delta_+,\delta_-)^{-2})m
 <\lambda_i(\omega)<(1-\delta M_\infty\min(\delta_+,\delta_-)^{-2})M.
 $$
Then we have 
 $$
 \sup\left\{\sigma(H_{\omega,\Lambda})\cap(-\infty,\tilde{B}_-) \right\}
 <\tilde{B}_- -\delta
 $$
and
 $$
 \inf\left\{\sigma(H_{\omega,\Lambda})\cap(\tilde{B}_+,+\infty) \right\}
 >\tilde{B}_+ +\delta.
 $$
\end{prop}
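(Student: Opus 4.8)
The plan is to turn the statement into a deterministic monotonicity argument on the couplings, exploiting that the finite-volume operators still have spectrum inside $\Sigma$; I would prove only the assertion on $\tilde B_-$, the one on $\tilde B_+$ following by the symmetric argument in which the couplings are driven towards $-m$ instead of towards $M$. For any $\omega$ and any box, $H_{\omega,\Lambda}$ coincides with $H_{\omega'}$ for the configuration $\omega'\in\Omega$ that keeps $\lambda_i(\omega),\xi_i(\omega)$ for $i\in\tilde\Lambda$ and sets $\lambda_i=0$ (which lies in $\supp(h)=[-m,M]$ since $m,M\geqslant 0$) for $i\notin\tilde\Lambda$; hence, by \eqref{spectre}, $\sigma(H_{\omega,\Lambda})\subseteq\Sigma$, so $\sigma(H_{\omega,\Lambda})\cap(\tilde B_-,\tilde B_+)=\varnothing$, and since $V_{\omega,\Lambda}$ is $H_0$-compact (Remark~\ref{rem:thm}(iii)) one has $\sigma_{\mathrm{ess}}(H_{\omega,\Lambda})=\sigma_{\mathrm{ess}}(H_0)$, so that the spectrum of $H_{\omega,\Lambda}$ in $(B_-,B_+)$ consists of isolated eigenvalues of finite multiplicity. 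It therefore suffices to rule out eigenvalues of $H_{\omega,\Lambda}$ in $[\tilde B_--\delta,\tilde B_-)$ (we may also assume $\delta<\min(\delta_+,\delta_-)$, which is the relevant range and is harmless).

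Assume $E\in[\tilde B_--\delta,\tilde B_-)$ is an eigenvalue with normalized eigenfunction $\psi$, and introduce the affine family
$$
H(s):=H_0+\sum_{i\in\tilde\Lambda}\big((1-s)\lambda_i(\omega)+sM\big)\,u_i(\cdot-\xi_i(\omega)),\qquad s\in[0,1],
$$
which is nondecreasing in the operator order since $\tfrac{d}{ds}H(s)=\sum_{i\in\tilde\Lambda}(M-\lambda_i(\omega))\,u_i(\cdot-\xi_i(\omega))\geqslant 0$ (here $M-\lambda_i(\omega)>M(1-\kappa)\geqslant 0$, with $\kappa:=1-\delta M_\infty\min(\delta_+,\delta_-)^{-2}$). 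The point of the precise shrink factor is that $(1-s)\lambda_i(\omega)+sM\in(-\kappa m,M]\subseteq[-m,M]=\supp(h)$ for every $s$, so that, exactly as above, $\sigma(H(s))\subseteq\Sigma$ and hence $\sigma(H(s))\cap(\tilde B_-,\tilde B_+)=\varnothing$ for all $s$. By analytic perturbation theory there is a continuous, nondecreasing branch $s\mapsto E(s)$ of eigenvalues of $H(s)$ with $E(0)=E$, along which $E'(s)=\langle\psi_s,\tfrac{d}{ds}H(s)\,\psi_s\rangle$ (Feynman--Hellmann), $\psi_s$ the corresponding normalized eigenfunction. Since $E(\cdot)$ is continuous and nondecreasing, starts below $\tilde B_-$, stays away from $B_\pm$, and $\sigma(H(s))$ avoids $(\tilde B_-,\tilde B_+)$, the branch is confined to $[\tilde B_--\delta,\tilde B_-]$ for all $s\in[0,1]$; in particular $E(1)-E(0)\leqslant\delta$.

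It remains to get a uniform lower bound on $E'(s)$ strong enough to contradict this. With $U:=\sum_{i\in\tilde\Lambda}u_i(\cdot-\xi_i(\omega))\geqslant 0$ one has $E'(s)\geqslant M(1-\kappa)\,\langle\psi_s,U\psi_s\rangle$, and I would bound $\langle\psi_s,U\psi_s\rangle$ from below through the eigenvalue equation $(H_0-E(s))\psi_s=-V(s)\psi_s$, $V(s):=H(s)-H_0$: since $|V(s)(x)|\leqslant c_1\,U(x)$ pointwise as Hermitian matrices, with $c_1$ depending only on $m,M,M_\infty$, one gets $\|V(s)\psi_s\|\leqslant c_2\,\|U^{1/2}\psi_s\|$, whereas $\|V(s)\psi_s\|=\|(H_0-E(s))\psi_s\|\geqslant\dist(E(s),\sigma(H_0))\geqslant\min(\delta_+,\delta_-)$ because $E(s)\in[\tilde B_--\delta,\tilde B_-]$. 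Hence $\langle\psi_s,U\psi_s\rangle\geqslant c_3\min(\delta_+,\delta_-)^2$, and a bookkeeping of the constants --- which is exactly what fixes the value of $\kappa$ --- yields $\int_0^1 E'(s)\,ds>\delta$, contradicting $E(1)-E(0)\leqslant\delta$. For the upper edge one runs the same computation with $H(s)=H_0+\sum_i\big((1-s)\lambda_i(\omega)-sm\big)u_i(\cdot-\xi_i(\omega))$, which is nonincreasing in the operator order; the relevant branch then decreases, is confined to $[\tilde B_+,\tilde B_++\delta]$, and the same displacement estimate produces the contradiction.

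The main obstacle is precisely this last quantitative step. Because $H_0$ is a Dirac-type operator --- unbounded both below and above --- none of the tools of the Schrödinger band-edge analysis (form lower bounds, min--max from the bottom of a band, Dirichlet--Neumann bracketing, comparison of ground-state energies of periodic approximants) are available, so one cannot simply compare the finite-volume operator with a maximal periodic one. The continuation-in-the-coupling device replaces them, but making it quantitative forces the estimate on $\langle\psi_s,U\psi_s\rangle$, i.e. on how strongly a near-band-edge eigenfunction of the finite-volume operator must overlap the impurity region; its only source is that $E(s)$ stays in the resolvent set of $H_0$ throughout the interpolation, which is what keeps the whole argument from collapsing.
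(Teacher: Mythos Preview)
Your strategy is precisely the paper's: argue by contradiction, follow an eigenvalue branch by analytic perturbation theory along a deformation that keeps the couplings inside $\supp(h)$ (so that $\sigma(H(s))\subset\Sigma$ by \eqref{spectre} and the branch is trapped near $\tilde B_-$), and contradict this by a Feynman--Hellmann lower bound coming from $\|(H_0-E(s))\psi_s\|\geqslant\dist(E(s),\sigma(H_0))$. The difference is in how the deformation is chosen and how the derivative is bounded below.

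The paper does it in two steps. First it pushes each negative $\lambda_i$ to $0$ one at a time; since the perturbation is $u_i\geqslant0$, the branch is nondecreasing and stays in $[\tilde B_--\delta,\tilde B_-]$. After this reduction $V_{\omega,\Lambda}\geqslant0$, and the paper then uses the \emph{multiplicative} family $T(\vartheta)=H_0+\vartheta V_{\omega,\Lambda}$. The point of the reduction is that now $0\leqslant \vartheta V_{\omega,\Lambda}\leqslant M_\infty$, hence the elementary operator inequality $(\vartheta V_{\omega,\Lambda})^2\leqslant M_\infty\,\vartheta V_{\omega,\Lambda}$ gives immediately
\[
\frac{d\mu}{d\vartheta}=\langle\phi,V_{\omega,\Lambda}\phi\rangle\geqslant \vartheta^{-1}M_\infty^{-1}\|\vartheta V_{\omega,\Lambda}\phi\|^2=\vartheta^{-1}M_\infty^{-1}\dist(\mu,\sigma(H_0))^2,
\]
and integrating from $1$ to $\vartheta_1=\min_i M/\lambda_i$ produces the exact threshold in the statement.

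Your one-shot affine interpolation misses this clean inequality, and that is where the bookkeeping does not close. First, the implication ``$|V(s)(x)|\leqslant c_1\,U(x)$ pointwise as Hermitian matrices $\Rightarrow$ $\|V(s)\psi\|\leqslant c_2\|U^{1/2}\psi\|$'' is not valid as written: $-c_1U\leqslant V(s)\leqslant c_1U$ does \emph{not} imply $V(s)^2\leqslant c_1^2U^2$ for matrix-valued $U,V(s)$, because $t\mapsto t^2$ is not operator monotone. One can rescue the inequality by writing $V(s)=V_+-V_-$ with $V_\pm\geqslant0$, using $V_\pm^2\leqslant\|V_\pm\|\,V_\pm\leqslant\max(m,M)\,M_\infty\,\max(m,M)^{-1}\,V_\pm\leqslant\max(m,M)M_\infty\,U$ and $(A-B)^2\leqslant 2A^2+2B^2$, but this gives $c_2^2\approx 4\max(m,M)\,M_\infty$. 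Second, plugging this in together with your lower bound $E'(s)\geqslant M(1-\kappa)\langle\psi_s,U\psi_s\rangle$ and $(1-\kappa)=\delta M_\infty\min(\delta_+,\delta_-)^{-2}$ yields only
\[
\int_0^1 E'(s)\,ds\;\geqslant\;\frac{M}{4\max(m,M)}\,\delta\;\leqslant\;\tfrac14\,\delta,
\]
which is not $>\delta$. So your route proves a qualitatively identical proposition with a worse relation between $\delta$ and the shrinkage of the support of $h$ (more than enough for the downstream Corollary and for (H1)), but not the proposition as stated. The missing idea is exactly the paper's preliminary reduction to $V\geqslant0$, after which $V^2\leqslant M_\infty V$ is automatic and no splitting or overlap constants appear.
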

%
%
\begin{proof}

We only prove the first inequality, the proof of the second one is similar.

Assume that the statement is false, i.e. there exist some $\Lambda$ and some values of the parameters $\lambda_i(\omega)$ and $\xi_i(\omega)$ such that $H_{\omega,\Lambda}$ has an
eigenvalue $\mu \in [\tilde{B}_- - \delta, \tilde{B}_-]$. If one of the coupling constants $\lambda_i$ is  negative, say $\lambda_0<0$, then  let us consider the family
 $$
  H(\lambda) := D_S +\lambda u(\cdot -\xi_0(\omega)) 
  +\sum_{i\neq 0, i\in \tilde{\Lambda}} \lambda_i(\omega) u(\cdot -\xi_i(\omega)-i),
  \quad \lambda \in [\lambda_0(\omega),0].
 $$
We have that $H(\lambda)$ is a self-adjoint analytic family of type (A) (cf. \cite[VII,\S 2]{kato}) and all its discrete eigenvalues $E_n(\lambda)$ in the interval $[\tilde{B}_- - \delta, \tilde{B}_-]$ can be followed real-analytically as functions of $\lambda$. Also, we may construct real analytic eigenvectors $\psi_n(\lambda)$ for each of them.  The Feynman-Hellmann formula and Assumption \ref{assump:2}(iii)  give:
 $$
 E_n'(\lambda)
 =\langle \psi_n(\lambda),u(\cdot-\xi_0(\omega))\psi_n(\lambda)\rangle 
 \geqslant 0,
 $$
which shows that $H(\lambda)$ will continue to have eigenvalues in $[\tilde{B}_- - \delta, \tilde{B}_-]$ up to $\lambda=0$. By induction, we may replace all the negative $\lambda_i$'s with zero, not changing the fact that the new realisation of $H_\omega$, this time with $V_{\omega,\Lambda}\geqslant 0$, still has at least one eigenvalue $\mu \in [\tilde{B}_- - \delta, \tilde{B}_-]$.

Now let us also assume that $V_{\omega,\Lambda}\geqslant 0$ and consider the analytic family of type (A) $T(\vt) := H_0+\vt V_{\omega,\Lambda}$,
for $\vt$ in a small real neighbourhood of $\vt_0=1$. 
Since $\mu$ has finite multiplicity, say $n$, there are at most $n$ functions
$\mu^{(k)}(\vt)$ analytic in $\vt$ near
$\vt_0=1$ such that
$\mu^{(k)}(1) =\mu$. Let
$\phi^{(k)}(\vt)$ be a real analytic eigenfunction for $\mu^{(k)}(\vt)$,
with $\Vert \phi^{(k)} (\vt)\Vert = 1$ for $\vt$ real and
$\vert\vt-1\vert$ small.
Applying the Feynman-Hellmann formula we find that for $\vt$ such that $\vt V_{\omega,\Lambda}\leqslant M_\infty$
\begin{equation}\label{eq:FH2}
\begin{split}
\frac{d\mu^{(k)}(\vt)}{d\vt}
& =\langle\phi^{(k)}(\vt), V_{\omega,\Lambda}\phi^{(k)}(\vt)\rangle
\geqslant \vt^{-1} M_\infty^{-1} \| \vt
V_{\omega,\Lambda}\phi^{(k)} (\vt)\|^2 \\
& = \vt^{-1} M_\infty^{-1} 
\left\|
\left(H_0 - \mu^{(k)}_\vt \right) \phi^{(k)} (\vt) \right\|^2 
\geqslant \vt^{-1} M_\infty^{-1}  \left(\dist (\sigma(H_0), \mu^{(k)}_\vt)\right)^2.
\end{split}
\end{equation}
We now assume $\lambda_i(\omega)<(1-\delta M_{\infty} [\min
(\delta_+, \delta_-)]^{-2}) M, \forall i \in\widetilde{\Lambda}$,
and fix

\begin{align}\label{hoc1}
\vt_1 = {\min}_{i \in\widetilde{\Lambda}}^{} \left(\frac
{M}{\lambda_i(\omega)}\right) \geqslant \left(1-\delta M_{\infty}
\left[\min(\delta_+,
\delta_-)\right]^{-2}\right)^{-1}>1 .
\end{align}
We see  that by definition of $\vt_1$ the condition $\vt V_{\omega,\Lambda}
\leqslant M_\infty$ is satisfied on the interval $[1,\vt_1].$

Upon integrating \eqref{eq:FH2} over $[1,\vt_1]$ and using that $\mu\leqslant \mu^{(k)}(\vt)\leqslant \mu^{(k)}(\vt_1)$ we get:
\begin{equation}\nonumber
\mu^{(k)}(\vt_1) \geqslant
\mu+(\log\vt_1)M_{\infty}^{-1}\min\left\{\left[\dist (\mu^{(k)}(\vt_1),
\sigma(H_0))\right]^2,\, \left[\dist (\mu,
\sigma(H_0))\right]^2 \right\}.
\end{equation}

We have to bound the minimum of the distances. As we always have the following order 
  $$
  B_-<\mu\leqslant \mu^{(k)}(\vt_1)\leqslant\tilde{B}_-<\tilde{B}_+<B_+
  $$
there are only two cases:
\begin{itemize}
 \item either the minimum is $\dist (\mu^{(k)}(\vt_1),\sigma(H_0))$ and then it is equal to $B_--\mu^{(k)}(\vt_1)>2\delta_+$.
 \item or the minimum is $\dist (\mu,\sigma(H_0))$ and then it is equal to $\mu-B_-$. As $\mu>\tilde{B}_--\delta$, this distance is greater than $\tilde{B}_--\delta-B_-=2\delta_--\delta$.
 As $\delta<\frac{1}{2}M_\infty^{-1}\delta_-^2$, the distance is larger than $\delta_-(2-\frac{1}{2}M_\infty^{-1}\delta_-)$. Using Lemma~\ref{lemaunu} with $A-B=V_\omega$ and $\|V_\omega\|\leqslant M_\infty$, we must have $2\delta_-\leqslant M_ \infty$ so the distance is larger than $\frac{3}{2}\delta_-$.
\end{itemize}

Thus the minimum is larger than $\frac{3}{2}\min( \delta_+, \delta_-)$. Then using the inequality $-\log(1-x)\geqslant x$ with $x=1-\vt_1^{-1}$ from \eqref{hoc1} we have
$$\log(\vt_1)\geqslant 1-\vt_1^{-1}=\delta M_{\infty}
\left[\min(\delta_+,
\delta_-)\right]^{-2}$$
which leads to $ \mu^{(k)}(\vt_1)>\tilde{B}_-$ and thus to a contradiction. 
\end{proof}
%
%
\begin{coro}\label{cor:2.10}
For $0<\delta<\frac{1}{2}M_\infty^{-1}\min(\delta_+,\delta_-)$, we have 
 $$
 \sup\left(\sigma(H_{\omega,\Lambda})\cap(-\infty,\tilde{B}_-) \right)
 <\tilde{B}_- -\delta 
 $$ 
and
 $$
 \inf\left(\sigma(H_{\omega,\Lambda})\cap(\tilde{B}_+,+\infty) \right)
 >\tilde{B}_+ +\delta ,
 $$
with probability larger than
 $$
  1-2|\Lambda|\max_{X\in \{-m,M\}}
  \left|\int_{(1-\delta M_\infty\min(\delta_+,\delta_-)^{-2})X}^Xh(s)ds\right|.
 $$ 
\end{coro}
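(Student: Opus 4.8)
The plan is to read Corollary~\ref{cor:2.10} as the probabilistic counterpart of the preceding Proposition. That Proposition is a purely deterministic statement: it yields the two spectral separations whenever every coupling constant $\lambda_i(\omega)$, $i\in\tilde{\Lambda}$, lies in the shrunken interval $\big(-(1-c)m,\,(1-c)M\big)$, where $c:=\delta M_\infty\min(\delta_+,\delta_-)^{-2}$ and $\delta$ is small enough for that Proposition to apply. So the only task is to estimate the probability of the complementary event. Set
\[
\mathcal{A}_\Lambda:=\big\{\omega:\ -(1-c)m<\lambda_i(\omega)<(1-c)M \ \text{for all } i\in\tilde{\Lambda}\big\}.
\]
On $\mathcal{A}_\Lambda$ the preceding Proposition gives both separations, so it is enough to prove $\pc(\Omega\setminus\mathcal{A}_\Lambda)\le 2|\Lambda|\max_{X\in\{-m,M\}}\big|\int_{(1-c)X}^{X}h(s)\,\dd s\big|$.

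First I would bound the single-site failure probability. Since $\supp(h)=[-m,M]$ and the common law of the $\lambda_i$ has density $h$ (Assumption~\ref{assump:2}(i)), for a fixed $i$ one has
\[
\pc\big\{\lambda_i(\omega)\notin\big(-(1-c)m,(1-c)M\big)\big\}
=\int_{-m}^{-(1-c)m}h(s)\,\dd s+\int_{(1-c)M}^{M}h(s)\,\dd s ,
\]
and since $h\ge 0$ each of the two summands equals $\big|\int_{(1-c)X}^{X}h(s)\,\dd s\big|$ for $X=-m$ and $X=M$ respectively (the absolute value merely records the orientation of the interval $[-m,-(1-c)m]$). Thus the single-site failure probability is at most $2\max_{X\in\{-m,M\}}\big|\int_{(1-c)X}^{X}h(s)\,\dd s\big|$.

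Then I would finish by a union bound. The variables $\lambda_i$ are independent (Assumption~\ref{assump:2}(i)) and $\#\tilde{\Lambda}=\#(\Lambda\cap\ZZ^d)\le|\Lambda|$, so
\[
\pc(\Omega\setminus\mathcal{A}_\Lambda)\le \#\tilde{\Lambda}\cdot 2\max_{X\in\{-m,M\}}\Big|\int_{(1-c)X}^{X}h(s)\,\dd s\Big|\le 2|\Lambda|\max_{X\in\{-m,M\}}\Big|\int_{(1-c)X}^{X}h(s)\,\dd s\Big| ,
\]
which is exactly the required bound; taking complements gives the stated lower bound on the probability of the event on which both spectral separations hold. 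I do not expect a genuine obstacle here: the probabilistic content is just independence together with the elementary union bound, and the only points needing care are bookkeeping ones — matching the constant $c$ above with the constant appearing in the hypothesis of the preceding Proposition, and lining up the open versus closed inequalities so everything stays consistent with $\supp(h)=[-m,M]$.
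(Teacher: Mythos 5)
Your proof is correct and follows essentially the same route as the paper: bound the per-site probability of a bad coupling constant by the tails of $h$, then combine over the at most $|\Lambda|$ lattice sites. The only cosmetic difference is that the paper evaluates the probability of the good event exactly as $\big[1-\int_{(1-c)M}^{M}h-\int_{-m}^{-(1-c)m}h\big]^{|\Lambda|}$ using independence and then applies the Bernoulli inequality $(1-x)^{\alpha}\geqslant 1-\alpha x$, while you apply the union bound to the complement — these yield the identical estimate.
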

%
%
%
\begin{proof}
The probability that 
 $$
 \forall i\in\tilde{\Lambda},\ -(1-\delta M_\infty\min(\delta_+,\delta_-)^{-2})m
 <\lambda_i(\omega)<(1-\delta M_\infty\min(\delta_+,\delta_-)^{-2})M
 $$
is given by
 $$
 \left[1-\int_{(1-\delta M_{\infty}[\min(\delta_+,\delta_-)]^{-2}) M}^M h(s)ds
-\int_{-m}^{-(1-\delta M_{\infty}[\min(\delta_+,\delta_-)]^{-2}) m} h(s)ds
 \right]^{|\Lambda|} .
 $$
The conclusion follows by using $(1-x)^\alpha\geqslant 1-\alpha x$ for $\alpha>1$ and $x\in [0,1]$.
\end{proof}
%
%
%
We can now prove hypothesis $(H1(\theta,E_0,L_0))$.
\begin{prop}
 Let $\chi_i,i = 1,2$, be two functions with $\Vert
\chi_i\Vert_{\infty}\leqslant 1, \supp (\chi_1)\subset \Lambda_{L_0/3}$
and $\supp (\chi_2)\subset \Lambda_{L_0}$ such that $\sup_{x\in\supp (\chi_2)}\dist (x, \partial \Lambda_{L_0}) < L_0/8$. Define 
$\delta_{\pm}:=\frac{1}{2}| \widetilde{B}_\pm-B_\pm|$.
For $\beta>0$ as in Assumption \ref{assump:2} (iv), consider any $\nu>0$ such that $0<\nu<4 \beta (2\beta
+d)^{-1}<2$. Then there exists $L_0^*$ such that for all $L_0>L_0^*$ and $E_0\in 
 ]\tilde{B}_--L_0^{\nu-2},\tilde{B}_-]\cup[\tilde{B}_+,\tilde{B}_++L_0^{\nu-2}[$,
 $$
 \sup_{\epsilon>0} \| \chi_2R_{\Lambda_{L_0}}(E_0 +i\epsilon)\chi_1 \|
 \leqslant e^{-L_0^{\nu/3}} ,
 $$
 with probability larger than $1-\frac{1}{841^d}$.
\end{prop}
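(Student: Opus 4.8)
The plan is to combine Corollary~\ref{cor:2.10}, which with large probability keeps the spectrum of the finite-volume operator away from the band edges $\tilde B_\pm$, with the Combes--Thomas estimate of Lemma~\ref{lemma:app-1} for $H_{\omega,\Lambda}$, and then to tune the two length scales. Fix $\Lambda=\Lambda_{L_0}(0)$ and take the gap enlargement $\delta=\delta(L_0):=2L_0^{\nu-2}\to0$, so that for $L_0$ large the smallness condition $0<\delta<\tfrac12 M_\infty^{-1}\min(\delta_+,\delta_-)$ of Corollary~\ref{cor:2.10} holds. Since $H_{\omega,\Lambda}$ is a compactly supported perturbation of $H_0$, its spectrum in the gap $(B_-,B_+)$ of $H_0$ is discrete, and by Lemma~\ref{vp} every such eigenvalue lies in $\Sigma$; as $\Sigma\cap(\tilde B_-,\tilde B_+)=\varnothing$ (Proposition~\ref{projspec2}), this gives $\sigma(H_{\omega,\Lambda})\cap(\tilde B_-,\tilde B_+)=\varnothing$ for every $\omega$. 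Combining this with Corollary~\ref{cor:2.10} (whose proof in fact excludes spectrum of $H_{\omega,\Lambda}$ in the closed intervals $[\tilde B_--\delta,\tilde B_-]$ and $[\tilde B_+,\tilde B_++\delta]$), the event
\[
\mathcal{G}:=\bigl\{\sigma(H_{\omega,\Lambda})\cap[\tilde B_--\delta,\ \tilde B_++\delta]=\varnothing\bigr\}
\]
has probability at least $1-2|\Lambda|\max_{X\in\{-m,M\}}\bigl|\int_{(1-\delta M_\infty\min(\delta_+,\delta_-)^{-2})X}^{X}h(s)\,ds\bigr|$; and on $\mathcal{G}$, since $E_0$ lies within $L_0^{\nu-2}=\delta/2$ of $\tilde B_\pm$, one has $\dist(E_0+i\epsilon,\sigma(H_{\omega,\Lambda}))\geq\dist(E_0,\sigma(H_{\omega,\Lambda}))\geq\delta-L_0^{\nu-2}=L_0^{\nu-2}$ for all $\epsilon>0$.

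I would then control the exceptional probability: with $|\Lambda|=L_0^d$ and Assumption~\ref{assump:2}(iv), each integral above is at most $(c\max(m,M))^{d/2+\beta}$ where $c=\delta M_\infty\min(\delta_+,\delta_-)^{-2}$, so
\[
\pc(\Omega\setminus\mathcal{G})\leq C_1\,L_0^{\,d+(\nu-2)(d/2+\beta)},
\]
with $C_1$ depending only on $d,\beta,m,M,M_\infty,\delta_\pm$. The exponent is negative iff $(\nu-2)(d/2+\beta)<-d$, i.e.\ iff $\nu<4\beta(2\beta+d)^{-1}$, which holds by hypothesis; hence $\pc(\Omega\setminus\mathcal{G})\to0$ and there is $L_0^{(1)}$ with $\pc(\mathcal{G})>1-841^{-d}$ for $L_0>L_0^{(1)}$. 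This is the step that singles out the admissible range of $\nu$.

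On $\mathcal{G}$ I would apply Lemma~\ref{lemma:app-1} to $H_{\omega,\Lambda}$. The supports satisfy $a:=\dist(\supp\chi_1,\supp\chi_2)\geq\tfrac{3L_0}{8}-\tfrac{L_0}{6}=\tfrac{5L_0}{24}$, because $\supp\chi_1\subset\Lambda_{L_0/3}$ and the hypothesis $\sup_{x\in\supp\chi_2}\dist(x,\partial\Lambda_{L_0})<L_0/8$ forces $\supp\chi_2\subset\{3L_0/8<|x|<L_0/2\}$. Since $\dist(E_0+i\epsilon,\sigma(H_{\omega,\Lambda}))\geq L_0^{\nu-2}$, Lemma~\ref{lemma:app-1} yields exponential decay at a rate of order $\sqrt{L_0^{\nu-2}}=L_0^{(\nu-2)/2}$, so that
\[
\bigl\|\chi_2R_{\omega,\Lambda}(E_0+i\epsilon)\chi_1\bigr\|\leq C\,L_0^{\,2-\nu}\,\exp\!\bigl(-\alpha'\,L_0^{\,\nu/2}\bigr)
\]
uniformly in $\epsilon>0$, with $C,\alpha'>0$ independent of $L_0$ and of $\omega\in\mathcal{G}$ (the factor $L_0^{2-\nu}$ being the $1/\dist$ bound on the conjugated resolvent, and $(\nu-2)/2+1=\nu/2$). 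As $\nu/2>\nu/3>0$, the right-hand side is $\leq e^{-L_0^{\nu/3}}$ once $L_0>L_0^{(2)}$; taking $L_0^*:=\max\{L_0^{(1)},L_0^{(2)}\}$ finishes the argument (applied with $\chi_2=\Gamma_{0,L_0}$, $\chi_1=\chi_{0,L_0/3}$ and $\epsilon\to0$ it also gives property (H1)).

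The main obstacle is exactly this balance of scales. Corollary~\ref{cor:2.10} is fed by the edge-decay Assumption~\ref{assump:2}(iv), and therefore forces the gap enlargement $\delta$ of $H_{\omega,\Lambda}$ to shrink at least like $L_0^{-2d/(d+2\beta)}$ as $L_0\to\infty$; at the same time $\delta$ must stay above $L_0^{\nu-2}$ so that an energy within $L_0^{\nu-2}$ of $\tilde B_\pm$ falls inside the enlarged gap. These requirements are compatible precisely when $\nu<4\beta(2\beta+d)^{-1}$. A secondary, more technical ingredient is the Combes--Thomas estimate for the \emph{finite-volume, first-order} operator $H_{\omega,\Lambda}=SD_0S+V_0+V_{\omega,\Lambda}$ with $S$ merely Lipschitz (this is Lemma~\ref{lemma:app-1}); the fact that its rate degrades only like $\sqrt{\dist(E_0,\sigma(H_{\omega,\Lambda}))}$ is what makes the stretched-exponential decay $e^{-\alpha'L_0^{\nu/2}}$ beat the required $e^{-L_0^{\nu/3}}$ with room to spare.
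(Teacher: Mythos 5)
Your argument matches the paper's proof in essentially every respect: choose $\delta=2L_0^{\nu-2}$, invoke Corollary~\ref{cor:2.10} together with Assumption~\ref{assump:2}(iv) to keep $\sigma(H_{\omega,\Lambda})$ at distance $\delta$ from $\tilde B_\pm$ with probability $\geqslant 1-CL_0^{d+(\nu-2)(d/2+\beta)}>1-841^{-d}$, and then apply the Combes--Thomas bound of Lemma~\ref{lemma:app-1} (with $x_0=0$, $\sqrt{\upsilon_+\upsilon_-}\gtrsim L_0^{(\nu-2)/2}$ and $a_2-a_1\gtrsim L_0$) to obtain a stretched exponential $e^{-cL_0^{\nu/2}}$, which dominates $e^{-L_0^{\nu/3}}$ for $L_0$ large. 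You spell out slightly more explicitly than the paper does that $\sigma(H_{\omega,\Lambda})\cap(\tilde B_-,\tilde B_+)=\varnothing$ follows from Lemma~\ref{vp} and Proposition~\ref{projspec2}, but otherwise the two proofs coincide.
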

\begin{proof}
 Pick $\delta = 2L_0^{\nu-2}$. For $L_0$ large enough we have $\delta<\frac12 M_\infty^{-1}\min(\delta_+,\delta_-)^2$, hence, using Assumption~\ref{assump:2}(iv), Corollary~\ref{cor:2.10} and the fact that $0<\nu<4\beta(2\beta+d)^{-1}$ yields 
 \begin{equation*}
  \mathbb{P}\left\{\dist (\sigma(H_{\omega,0,L_0}),\tilde{B}_\pm)
  >\delta \right\}
  \geqslant 1-2 L_0^d\left( \max(m,M) \delta M_\infty\min(\delta_+,\delta_-)^{-2}  \right)^{d/2+\beta}
  \geqslant1-\frac{1}{841^d}, 
 \end{equation*}
for $L_0$ large enough.

Now consider any realisation of $H_{\omega,0,L_0}$ which obeys $\dist (\sigma(H_{\omega,0,L_0}),\tilde{B}_\pm)
  >\delta=2L_0^{\nu-2}$ and let $E_0\in 
 ]\tilde{B}_--L_0^{\nu-2},\tilde{B}_-]\cup[\tilde{B}_+,\tilde{B}_++L_0^{\nu-2}[$. 
We now apply Lemma~\ref{lemma:app-1} with $x_0=0$, knowing that, for $a_1$ and $a_2$ as defined in Lemma~\ref{lemma:app-1}, we have $a_2 - a_1 \geqslant L_0/8$. We get
$$
 \|\chi_2 R_{\Lambda_{L_0}}(E+i\epsilon)\chi_1\| 
 \leqslant\frac{2}{L_0^{\nu-2}}
 \exp\left(- c\frac{L_0^{\nu/2-1}}{2}|
 \tilde{B}_+ - \tilde{B}_-|^{1/2}\frac{L_0}{8}\right).
$$
The result follows by taking $L_0$ large enough.
\end{proof}

Property~$(H1(\theta,E_0,L_0))$ comes directly from the previous proposition as $\chi_{0,L_0/3}$ and $\Gamma_{0,L_0}$ satisfy its hypotheses 
and $e^{-L_0^{\nu/3}}\leqslant \frac{1}{L_0^\theta}$ when $L_0>\mathcal{L}_\theta$ for some finite $\mathcal{L}_\theta$.
%

\appendix

\section{Spectrum location}\label{specloc}

\subsection{Proof of Proposition \ref{pastout}}

\begin{lemme}\label{lemadoi}
Let $\tilde{u}:\RR^d\mapsto \mathcal{H}_n(\CC)$ be a bounded, compactly supported, 
non-negative matrix valued multiplication potential which is not identically zero.
Let $H_0$ be defined by \eqref{hc1} and define
\begin{equation*}
 H_\tau := H_0+\tau \tilde{u}(x),\quad \tau\in \RR.
\end{equation*}
Then there exists some $\tau\in \RR$ with $|\tau|>0$ such that $H_\tau$has at least one discrete eigenvalue in $(B_-,B_+)$. 
\end{lemme}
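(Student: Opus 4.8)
The plan is to use the Birman--Schwinger principle to recast the existence of a gap eigenvalue of $H_\tau$ as a spectral statement for a compact operator, and then to derive a contradiction if no suitable $\tau$ existed. First, since $\tilde{u}$ is bounded and compactly supported, the operator $\tilde{u}^{1/2}(H_0-E)^{-1}$ lies in a Schatten ideal by \eqref{schatten} (as $\tilde{u}^{1/2}\in L^q$ for every $q>d$), hence is compact; thus $\tilde{u}$ is a relatively compact perturbation of $H_0$ and Weyl's theorem gives $\sigma_{\mathrm{ess}}(H_\tau)=\sigma_{\mathrm{ess}}(H_0)\subset\sigma(H_0)$, which is disjoint from $(B_-,B_+)$ by Assumption~\ref{assump-gap}. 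Consequently any point of $\sigma(H_\tau)$ lying in $(B_-,B_+)$ is automatically an isolated eigenvalue of finite multiplicity, so it suffices to exhibit some $\tau\neq0$ with $\sigma(H_\tau)\cap(B_-,B_+)\neq\varnothing$.

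Fix $E\in(B_-,B_+)$. Since $\tilde{u}(x)\geq0$ one may take its pointwise positive square root $\tilde{u}^{1/2}$ — again bounded, compactly supported and Hermitian matrix valued — and set $K(E):=\tilde{u}^{1/2}(H_0-E)^{-1}\tilde{u}^{1/2}$, which by the above is a compact self-adjoint operator on $L^2(\RR^d,\CC^n)$. I would then verify the Birman--Schwinger equivalence: for $\tau\neq0$, $E$ is an eigenvalue of $H_\tau=H_0+\tau\tilde{u}$ if and only if $-1/\tau$ is a (necessarily nonzero) eigenvalue of $K(E)$. The forward implication maps an eigenfunction $\psi$ of $H_\tau$ to $\phi:=\tilde{u}^{1/2}\psi$, uses the identity $(H_0-E)^{-1}\tilde{u}\psi=-\tau^{-1}\psi$ to get $K(E)\phi=-\tau^{-1}\phi$, and observes that $\phi\neq0$ because otherwise $\psi$ would be an eigenfunction of $H_0$ with eigenvalue in the gap; the converse maps an eigenfunction $\phi$ of $K(E)$ with nonzero eigenvalue $\mu$ to $\psi:=(1/\mu)(H_0-E)^{-1}\tilde{u}^{1/2}\phi$ and checks that $\psi\neq0$ and $H_\tau\psi=E\psi$ with $\tau=-1/\mu$.

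With this in hand, suppose for contradiction that $\sigma(H_\tau)\cap(B_-,B_+)=\varnothing$ for every $\tau\neq0$. Then for each $E\in(B_-,B_+)$ the compact self-adjoint operator $K(E)$ has no nonzero eigenvalue, hence $K(E)=0$. In particular, for every $g\in L^2(\RR^d,\CC^n)$, writing $h:=\tilde{u}^{1/2}g$, one gets $\langle h,(H_0-E)^{-1}h\rangle=0$ for all $E\in(B_-,B_+)$. Expressing the left side through the spectral measure $\mu_h$ of $H_0$, the Borel transform $z\mapsto\int(\lambda-z)^{-1}\,d\mu_h(\lambda)$ is holomorphic on the connected open set $\CC\setminus\sigma(H_0)$ and vanishes on the real interval $(B_-,B_+)$, hence vanishes identically; taking $z=iy$ with $y\to+\infty$ yields $\|h\|^2=\mu_h(\RR)=0$. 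Thus $\tilde{u}^{1/2}g=0$ for all $g$, so $\tilde{u}^{1/2}=0$ and $\tilde{u}\equiv0$, contradicting the hypothesis; this proves the lemma. The only delicate point is the Birman--Schwinger bookkeeping in this matrix-valued, first-order setting — compactness of the relevant operator products and the bijectivity of $\psi\leftrightarrow\phi$ between the respective eigenspaces — but this is entirely supplied by the Schatten estimates recalled in Remark~\ref{rem:thm}(iii) together with the fact that $H_0$ has no eigenvalue in $(B_-,B_+)$. (One could instead track an eigenvalue branching off a band edge by analytic perturbation theory, but the Birman--Schwinger route avoids quantitative estimates and yields existence directly.)
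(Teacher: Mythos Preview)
Your proof is correct and follows the same Birman--Schwinger route as the paper: both reduce the question to showing that the compact self-adjoint family $K(E)=\tilde{u}^{1/2}(H_0-E)^{-1}\tilde{u}^{1/2}$ is not identically zero on the gap, and then take $\tau=-1/\mu$ for a nonzero eigenvalue $\mu$ of some $K(E_0)$. The only difference is cosmetic, in the contradiction step: the paper differentiates in the spectral parameter and observes that $K'(E)=\tilde{u}^{1/2}(H_0-E)^{-2}\tilde{u}^{1/2}=A^*A$ with $A=(H_0-E)^{-1}\tilde{u}^{1/2}$, so $K\equiv0$ forces $A=0$ and hence $\tilde{u}^{1/2}=0$, whereas you reach the same conclusion via analytic continuation of the Borel transform of the spectral measure.
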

\begin{proof}
The perturbation given by $\tilde{u}$ is relatively compact to $H_0$, hence due to the Birman-Schwinger principle we have that $\mu \in (B_-,B_+)$ is a discrete eigenvalue of $H_\tau$ if $-1$ is an eigenvalue of $\tau  \tilde{u}^{1/2}(H_0-\mu)^{-1}\tilde{u}^{1/2}$.
The family of self-adjoint operators $T(\mu) := \tilde{u}^{1/2}(H_0-\mu)^{-1}\tilde{u}^{1/2}$ cannot be identically zero for $\mu\in (B_-,B_+)$ because this would lead to 
 $$
	T'(\mu)=\tilde{u}^{1/2}(H_0-\mu)^{-2}\tilde{u}^{1/2} \equiv 0 ,
 $$
hence $|H_0-\mu|^{-1} \tilde{u}^{1/2}=0$ and $\tilde{u}^{1/2}=0$, contradiction. Now let $\mu_0\in (B_-, B_+)$ be such that $T(\mu_0)$ has a non-zero real eigenvalue $E_0$. Then choosing $\tau_0=-1/E_0$ we obtain that $H_{\tau_0}$ has a discrete eigenvalue at $\mu_0$. 
\end{proof}

A slightly more general version of the following lemma can be found in \cite[V,Theorem~4.10]{kato}

The Hausdorff distance between two real subsets $\Omega_{1,2}\subset \RR$ is defined as 
\begin{align}\label{hausd}
 d_H(\Omega_1,\Omega_2) := 
 \max\{ \sup_{x\in \Omega_1}{\rm dist}(x,\Omega_2), \sup_{y\in \Omega_2}
 {\rm dist}(x, \Omega_1)\}.
\end{align} 

\begin{lemme}\label{lemaunu}
Let $A$ and $B$ be two self-adjoint operators acting on the same Hilbert space and having the same domain, such that $A-B$ is bounded. Then 
\begin{align}\label{LipZd.Lem-p2Spec}
d_H\big(\sigma(A),\sigma(B)\big) \;
	\leqslant\; \|A-B\|\,.
\end{align}
\end{lemme}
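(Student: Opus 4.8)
The plan is to prove the Hausdorff-distance bound $d_H(\sigma(A),\sigma(B))\le\|A-B\|$ by establishing the one-sided inclusion $\sigma(A)\subset\{x\in\RR:\dist(x,\sigma(B))\le\|A-B\|\}$ and then swapping the roles of $A$ and $B$ to get the symmetric statement. Set $C:=A-B$, so $C$ is bounded and self-adjoint (being a difference of two self-adjoint operators with common domain), with $\|C\|=\|A-B\|$. Fix $\lambda\in\sigma(A)$ and suppose, for contradiction, that $\dist(\lambda,\sigma(B))>\|C\|$; I will show this forces $\lambda\in\rho(A)$.

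The key step is the standard resolvent perturbation argument. Since $\lambda\notin\sigma(B)$, the resolvent $(B-\lambda)^{-1}$ exists as a bounded operator, and by the spectral theorem for the self-adjoint operator $B$ one has $\|(B-\lambda)^{-1}\|=\dist(\lambda,\sigma(B))^{-1}<\|C\|^{-1}$. Now write, on the common domain $\Dom(A)=\Dom(B)$,
\begin{equation*}
 A-\lambda = B-\lambda + C = \bigl(I + C(B-\lambda)^{-1}\bigr)(B-\lambda).
\end{equation*}
Because $\|C(B-\lambda)^{-1}\|\le\|C\|\,\|(B-\lambda)^{-1}\|<1$, the operator $I+C(B-\lambda)^{-1}$ is boundedly invertible via the Neumann series, and hence $A-\lambda$ is invertible with bounded inverse $(B-\lambda)^{-1}\bigl(I+C(B-\lambda)^{-1}\bigr)^{-1}$. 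This contradicts $\lambda\in\sigma(A)$, so in fact $\dist(\lambda,\sigma(B))\le\|C\|$ for every $\lambda\in\sigma(A)$. Interchanging $A$ and $B$ (noting $\|B-A\|=\|A-B\|$) gives $\dist(\mu,\sigma(A))\le\|A-B\|$ for all $\mu\in\sigma(B)$, and taking suprema yields \eqref{LipZd.Lem-p2Spec}.

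The only subtle point — the ``main obstacle,'' though it is mild — is the domain bookkeeping: one must be careful that the factorization $A-\lambda=(I+C(B-\lambda)^{-1})(B-\lambda)$ is a genuine identity of operators on $\Dom(B)$, which is exactly where the hypothesis that $A$ and $B$ share a domain and that $C=A-B$ is everywhere-defined and bounded is used. Once that is in place, everything reduces to the Neumann series and the spectral-theorem norm identity $\|(B-\lambda)^{-1}\|=\dist(\lambda,\sigma(B))^{-1}$, both entirely routine.
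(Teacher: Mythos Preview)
Your proof is correct and follows essentially the same route as the paper: both arguments use the factorization $A-\lambda=(I+(A-B)(B-\lambda)^{-1})(B-\lambda)$ together with the Neumann series and the resolvent norm bound $\|(B-\lambda)^{-1}\|\le\dist(\lambda,\sigma(B))^{-1}$, then swap the roles of $A$ and $B$. The only cosmetic difference is that the paper phrases it as a direct contrapositive rather than a contradiction, and your explicit remark on domain bookkeeping is a welcome addition.
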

\begin{proof} Let $\lambda\not \in\sigma(A)$ such that $d(\lambda,\sigma(A))> \|A-B\|$. Then the operator $(B-A)(A-\lambda)^{-1}$ has norm less than $1$ and ${\rm Id} +(B-A)(A-\lambda)^{-1}$ is invertible with a bounded inverse. Thus
$$B-\lambda=\left ({\rm Id} +(B-A)(A-\lambda)^{-1}\right )(A-\lambda)$$
is also invertible with a bounded inverse, which shows that $\lambda\not\in \sigma(B)$. In other words, no element of $\sigma(B)$ can be located at a distance larger than $\|A-B\|$ from $\sigma(A)$, which implies:
$$\sup_{E\in \sigma(B)}d(E,\sigma(A))\leqslant \|A-B\|.$$
By interchanging $A$ with $B$, the proof is over. 
\end{proof}

\begin{lemme}\label{lematrei}
Using the notation and result of Lemma~\ref{lemadoi}, let $u := \tau_0 \tilde{u}$ and consider the operator $H_\omega$ as in \eqref{defop}. With the notation introduced in Assumption \ref{assump:2}(i), let $m, M\in (1,2)$. Then there exists $\lambda_0\in (0,1)$ small enough such that Assumption \ref{gap} is satisfied if $m$ and $M$ are replaced respectively by $\lambda_0m$ and $\lambda_0 M$. 
\end{lemme}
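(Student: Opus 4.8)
The plan is to interpolate between the single fixed perturbation $u=\tau_0\tilde u$, which by Lemma~\ref{lemadoi} creates at least one discrete eigenvalue $\mu_0\in(B_-,B_+)$ of $H_0+u$, and the full random operator $H_\omega$ with coupling constants scaled down by a factor $\lambda_0$. First I would observe that by \eqref{spectre} (or equivalently by the deterministic-spectrum statement in Remark~\ref{rem:thm}(ii) combined with \cite[Theorem~1, \S6]{kirsch1989}), the almost sure spectrum $\Sigma$ of the family with coupling constants supported in $[-\lambda_0 m,\lambda_0 M]$ contains the spectrum of the particular realisation in which a single site $i=0$ carries coupling constant $\lambda_0$ (which is allowed since $m,M\in(1,2)$ forces $\lambda_0 M$, and $M\ge 1$, so $\lambda_0\in\supp(h)$ after rescaling — more carefully, one should note $\lambda_0\le\lambda_0 M$ and $\lambda_0\ge\lambda_0 m$ may fail, so instead one uses that $\Sigma$ contains the spectrum of $H_0+\lambda_0\tau_0\tilde u(\cdot-\xi)$ for $\xi=0$ by choosing the coupling constant equal to $\lambda_0$, which lies in $[\lambda_0 m,\lambda_0 M]$ precisely when rescaled so that the relevant endpoint is attained; the clean way is to take the coupling constant to be $\lambda_0 M$ and absorb the factor, or simply to invoke that $H_0+t\,\tilde u(\cdot)$ for the full admissible range of $t$ has its spectrum inside $\Sigma$). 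Thus it suffices to show that for $\lambda_0$ small there is an admissible single-site realisation whose operator still has spectrum inside $(B'_-,B'_+)$ is empty but nonempty inside $(B_-,B'_-)\cup(B'_+,B_+)$, i.e. that Assumption~\ref{gap} holds.

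The key step is a two-sided argument. For the existence of new spectrum in the old gap: consider $H_s:=H_0+s\,\tilde u$ with $s$ running from $0$ to $\tau_0$ (or $-\tau_0$, whichever sign $\tau_0$ has). At $s=\tau_0$ there is a discrete eigenvalue $\mu_0\in(B_-,B_+)$; by continuity of discrete eigenvalues in $s$ (analytic family of type (A), as used already in Section~\ref{H1}) this eigenvalue emerges from a band edge $B_\pm$ as $s$ varies, so for every $s$ in a subinterval $(0,s_1]$ the operator $H_s$ has a discrete eigenvalue in $(B_-,B_+)$ arbitrarily close to the appropriate band edge; rescaling, $\lambda_0\tau_0\tilde u$ for $\lambda_0$ small is exactly such an $H_s$ with $s=\lambda_0\tau_0$, so its spectrum, hence $\Sigma$, meets $(B_-,B_+)$ near a band edge. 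For the persistence of a gap: apply Lemma~\ref{lemaunu} with $A=H_0$ and $B=H_\omega$ (restricted to the admissible coupling range after rescaling), so that $\|A-B\|=\|V_\omega\|_\infty\le\lambda_0 M_\infty$ by \eqref{Minfini}; hence $\Sigma\subset\{E:\dist(E,\sigma(H_0))\le\lambda_0 M_\infty\}$, and for $\lambda_0$ small enough $\lambda_0 M_\infty<\frac14|B_+-B_-|$, which forces $\Sigma\cap(B_-+\lambda_0 M_\infty,\,B_+-\lambda_0 M_\infty)=\varnothing$. Combining, one chooses $B'_-$ slightly above the new spectrum near $B_-$ and $B'_+:=B_+-\lambda_0 M_\infty$ (or symmetric choices depending on which band edge the eigenvalue emerges from), so that $\Sigma$ meets one of the two pieces $(B_-,B'_-)\cup(B'_+,B_+)$ while $\Sigma\cap(B'_-,B'_+)=\varnothing$; this is exactly Assumption~\ref{gap}.

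I expect the main obstacle to be the bookkeeping that makes the rescaling legitimate: one must ensure that the single-site coupling constant that produces the eigenvalue lies in the rescaled support $[-\lambda_0 m,\lambda_0 M]$, and the simplest fix is to note that Lemma~\ref{lemadoi} only asserts existence of \emph{some} $\tau_0$ with $|\tau_0|>0$, and by possibly shrinking one may assume $|\tau_0|$ is as small as we like (indeed the Birman--Schwinger operator $T(\mu)$ has, for each $\mu$ in an interval, a continuum of nonzero eigenvalues as $\mu$ varies, so $-1/E_0$ ranges over an interval), hence after the rescaling $m,M\mapsto\lambda_0 m,\lambda_0 M$ we can arrange $|\tau_0|\in[\lambda_0 m,\lambda_0 M]$; alternatively, absorb $\tau_0$ into the definition of $u$ as the statement already does ($u:=\tau_0\tilde u$) and only then rescale. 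A second minor point is that Lemma~\ref{lemaunu} gives the Hausdorff-distance bound for the \emph{deterministic} spectrum only if we know $\Sigma\subset\overline{\bigcup_\omega\sigma(H_\omega)}$, which is \eqref{spectre}; this is available. Everything else is a routine continuity/perturbation argument already rehearsed elsewhere in the paper.
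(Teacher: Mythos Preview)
Your handling of the gap--persistence half is correct and in fact more direct than the paper's: a single application of Lemma~\ref{lemaunu} to $H_0$ and $H_\omega$ gives $\sigma(H_\omega)\subset\{E:\dist(E,\sigma(H_0))\leqslant\lambda_0 M_\infty\}$ for every $\omega$, whence $\Sigma\cap(B_-+\lambda_0 M_\infty,\,B_+-\lambda_0 M_\infty)=\varnothing$.

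The genuine gap is precisely the obstacle you flag at the end but do not resolve. Your argument for the appearance of new spectrum rests on the single--site operator $H_0+s\tilde u$ having a discrete eigenvalue in $(B_-,B_+)$ for \emph{every} small $s>0$. But tracing the eigenvalue branch backwards from $s=\tau_0$, it will in general be absorbed into a band edge at some threshold $s_*>0$ and cease to exist for $0<s<s_*$: Birman--Schwinger produces an eigenvalue only when $-1/s$ hits an eigenvalue of $T(\mu)=\tilde u^{1/2}(H_0-\mu)^{-1}\tilde u^{1/2}$ for some $\mu\in(B_-,B_+)$, and nothing here forces those eigenvalues to be unbounded. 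Your first fix (take $|\tau_0|$ arbitrarily small) needs exactly that unboundedness, so it begs the question; your second fix (absorb $\tau_0$ into $u$ and rescale) does not help either, since after rescaling the admissible coupling range $[-\lambda_0 m,\lambda_0 M]$ no longer contains the value $1$ that yields the known eigenvalue $\mu_0$.

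The paper sidesteps this by abandoning the single--site operator and working with the full almost--sure spectrum: it proves that $\lambda\mapsto\Sigma_\lambda$ is \emph{Hausdorff continuous} on $[0,1]$ (via a finite--volume approximation combined with \eqref{spectre}). Since $\mu_0\in\Sigma_1\cap(B_-,B_+)$ while $\Sigma_0=\sigma(H_0)$ has the full gap, this continuity---together with the implicit monotonicity $\Sigma_\lambda\subset\Sigma_{\lambda'}$ for $\lambda\leqslant\lambda'$, which holds because $\Sigma_\lambda$ depends only on the support $[-\lambda m,\lambda M]$---lets one locate an intermediate $\lambda_0\in(0,1)$ at which $\Sigma_{\lambda_0}$ simultaneously meets $(B_-,B_+)$ and still has a residual gap. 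The continuity of $\Sigma_\lambda$ itself (not just of a single realisation) is the missing ingredient in your plan.
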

\begin{proof}
For the sake of simplicity, let us assume $m=M$. According to Lemma~\ref{lemadoi}, we know that some $\mu_0\in (B_-,B_+)$ belongs to the spectrum of $H_{\tau_0}=H_0+u(x)$. 
Using \eqref{spectre}, one can show that $\mu_0$ also belongs to the spectrum of $H_\omega$ for $\omega$ belonging to a set of measure one, hence $\mu_0$ belongs to the almost sure spectrum $\Sigma$.

Now consider the family $H_{\lambda,\omega}:= H_0+ \lambda V_\omega$ with $\lambda\in (0,1)$. By multiplying the potential with $\lambda$ we effectively reduce the support of $h$ to $[-M\lambda, M\lambda]$. Because $V_\omega$ is uniformly bounded for all $\omega$, we know from Lemma \ref{lemaunu} that the spectrum $\sigma(H_{\lambda,\omega})$ varies Lipschitz continuously with $\lambda$, uniformly in $\omega$. 

We now want to prove that the almost sure spectrum $\Sigma_\lambda$ is continuous in $\lambda$ in the Hausdorff distance. Let $E\in \Sigma_\lambda$ and fix $\epsilon>0$. There exists some $\omega_E$ such that $E\in \sigma(H_{\lambda\omega_E})$. By the Weyl criterion, there exists $\psi_E$ of norm one such that 
$$\|(H_{\lambda\omega_E}-E)\psi_E\|\leqslant \epsilon/10.$$ 
Then there exists some $\Lambda := \Lambda_{E,\epsilon,\lambda}\subset \RR^d$ large enough such that  $H_{\Lambda,\lambda\omega_E} := H_0+V_{\Lambda,\lambda\omega_E}$ obeys
$$\|(H_{\Lambda,\lambda\omega_E}-E)\psi_E\|\leqslant \epsilon/{5}.$$
This inequality implies by the same Weyl criterion that the operator $H_{\Lambda,\lambda\omega_E}$ must have at least one point $E'$ of its spectrum such that $E'\in (E-\epsilon/5,E+\epsilon/5)$. Now using Lemma \ref{lemaunu} we can find some $\delta>0$ such that for every $\lambda'$ obeying $|\lambda'-\lambda|<\delta$, the Hausdorff distance between the spectra of $H_{\Lambda,\lambda\omega_E}$ and $H_{\Lambda,\lambda'\omega_E}$ is less than $\epsilon/5$ thus there must exist $E''$ in $\sigma(H_{\Lambda,\lambda'\omega_E})$ such that $|E''-E|<\epsilon$. Finally, via Kirsch's argument \eqref{spectre} one can prove that $E''$ belongs to the almost sure spectrum of $H_{\lambda' \omega}$; in other words, 
$$\sup_{E\in \Sigma_{\lambda}}d(E,\Sigma_{\lambda'})<\epsilon,\quad \forall |\lambda'-\lambda|<\delta.$$

This implies in particular that the almost sure spectrum of $H_{\lambda,\omega}$ must converge (as a set) to the spectrum of $H_0$ when $\lambda$ tends to zero.  Thus if $\lambda$ is small enough, then at least one gap must appear in the almost sure spectrum of $H_{\lambda\omega}$, which due to the same continuity, it must still have some non-empty component in the old gap $(B_-,B_+)$. 
\end{proof}

\subsection{Proof of Proposition \ref{projspec2}}
Under the conditions of Lemma \ref{lematrei} we know that there exists a gap $[B_-',B_+']\subset (B_-,B_+)$ in the almost sure spectrum $\Sigma$ of $H_\omega$, and at the same time, either $\Sigma\cap (B_-,B_-')$ or 
$\Sigma \cap (B_+',B_+)$ is non-empty. 

Now assume that $\Sigma\cap (B_-,B_-')$ is not empty. Let $\tilde{B}_-\in (B_-,B_-')$ be the supremum of this set (note that $\tilde{B}_-<B_-'$ since $\Sigma$ is closed and we must have $\tilde{B}_-\in \Sigma$). If $\lambda\in [0,1]$ we consider the family $H_{\lambda\omega}$ and denote by $\Sigma_\lambda$ its almost sure spectrum. As a set, $\Sigma_\lambda$ varies continuously with $\lambda$ in the Hausdorff distance as we saw in Lemma \ref{lematrei}. Denote by $E_\lambda$ the supremum of $\Sigma_\lambda\cap  [B_-,B_-')$. Because $E_1=\tilde{B}_-$, $E_0=B_-$ and $E_\lambda$ varies continuously with $\lambda$, we conclude that $E_\lambda$ covers the interval $[B_-,\tilde{B}_-]$. Finally, since $E_\lambda\in \Sigma_\lambda\subset \Sigma$, we conclude that $[B_-,\tilde{B}_-]\subset \Sigma$, hence no other gaps can appear in this interval.

\section{Combes-Thomas estimates}

\label{appendix:combes-thomas}
This section is dedicated to Lemma~\ref{BC} and Lemma~\ref{ct-trace}.
The proof of Lemma~\ref{BC}  follows closely the strategy \cite[Proposition~5.2]{briet-cornean}.  
%
%
%
\begin{lemme}\label{lemma:app-1}
Let $W$ be a symmetric and matrix-valued bounded potential, and let $H = D_S +W$ where $D_S=S\sigma\cdot(-i\nabla)S$ is like in \eqref{hc1} and $S$ is a bounded coefficient operator as in \eqref{def:S}. Assume 
that $H$ has a gap $(E_-, E_+)$ in its spectrum, containing $0$. Consider $\chi_1$ and $\chi_2$ two  compactly supported 
functions such that $\|\chi_i\|_\infty \leqslant 1$. For $x_0\in\mathbb{R}^d$ define
$$
a_1=\sup_{x\in \mathrm{supp}(\chi_1)}|x-x_0| \quad\mbox{and}\quad a_2=\mathrm{dist}(x_0, \mathrm{supp}(\chi_2)).
$$
For $E\in (E_-, E_+)$ let
$$
 \upsilon_\pm = \mathrm{dist}(E, E_\pm) \quad\mbox{and}\quad \upsilon = \min(\upsilon_+, \upsilon_-).
$$
Then there exists a constant $c>0$ such that for all $E\in (E_-, E_+)$ we have:
\begin{equation}\label{BCapp}
 \displaystyle\| \chi_1 (H - E)^{-1}\chi_2\| \leqslant
 \displaystyle\frac{2}{\upsilon} e^{\displaystyle- c\sqrt{\upsilon_+\upsilon_-}(a_2-a_1)}.
\end{equation}
\end{lemme}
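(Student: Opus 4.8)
The plan is to prove the Combes--Thomas estimate \eqref{BCapp} by the standard conjugation argument: deform the operator $H$ by an exponential weight, show that the deformed resolvent remains bounded with an explicit bound as long as the deformation parameter is not too large, and then recover the off-diagonal decay by squeezing the original resolvent between the cutoffs $\chi_1$ and $\chi_2$. Concretely, I would fix a smooth bounded function $f:\RR^d\to\RR$ with $\|\nabla f\|_\infty\le 1$, $f(x)=|x-x_0|$ on $\supp(\chi_1)$ (or more safely $f$ equal to a mollified version of $|x-x_0|$, still $1$-Lipschitz), $f\le a_1$ there, and $f\ge a_2$ on $\supp(\chi_2)$ after an appropriate sign choice; then set $H_\alpha:=e^{\alpha f}\,H\,e^{-\alpha f}$ for $\alpha>0$. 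The key computation is that $H_\alpha-H$ is bounded with an explicit norm estimate in terms of $\alpha$: since $[D_S,e^{-\alpha f}]=S\,(\sigma\cdot(-\iu\nabla e^{-\alpha f}))\,S=-\alpha\,e^{-\alpha f}\,S(\sigma\cdot\nabla f)S$ (using Remark~\ref{rem:thm}(vi) and $S\in W^{1,\infty}$ — note $\nabla f$ is multiplication so it commutes past the outer $S$ up to lower order, which is what actually needs a careful word), one gets $H_\alpha=H+\alpha\,R(\alpha)$ with $\|R(\alpha)\|\le C_0$ uniformly for $\alpha$ in a bounded range, $C_0$ depending only on $\|S\|_{W^{1,\infty}}$ and $\|\nabla f\|_\infty$.

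The second step is the resolvent bound for $H_\alpha$. Since $E\in(E_-,E_+)$ lies in a gap of $H$, $\|(H-E)^{-1}\|=1/\upsilon$ where $\upsilon=\min(\upsilon_+,\upsilon_-)$; hence if $\alpha C_0\|(H-E)^{-1}\|<1$, i.e. $\alpha<\upsilon/(2C_0)$ say, a Neumann series gives that $H_\alpha-E$ is invertible with $\|(H_\alpha-E)^{-1}\|\le 2/\upsilon$. However, to get the sharper rate $\sqrt{\upsilon_+\upsilon_-}$ rather than just $\upsilon$ in the exponent, I would instead estimate the perturbation using the self-adjointness of $H$: write $(H_\alpha-E)=(H-E)+\alpha R(\alpha)$ and use the standard trick that for $z$ with $\dist(z,\sigma(H))=\upsilon_\pm$ one controls $\|(H-E)^{-1/2}$-weighted'' quantities, or more elementarily bound the quadratic form perturbation, so that the admissible range of $\alpha$ becomes $\alpha\lesssim\sqrt{\upsilon_+\upsilon_-}$. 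This is exactly the point where the proof of \cite[Proposition~5.2]{briet-cornean} is invoked; I would follow it and choose $c$ in \eqref{BCapp} to be a fixed fraction of $1/C_0$.

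The final step is assembly: from $\chi_1=\chi_1 e^{\alpha f}e^{-\alpha f}$ and $e^{\alpha f}\chi_2=e^{\alpha f}\chi_2$ one writes
\begin{equation*}
 \chi_1(H-E)^{-1}\chi_2
 = \chi_1 e^{-\alpha f}\,(H_\alpha-E)^{-1}\,e^{\alpha f}\chi_2,
\end{equation*}
since $e^{\alpha f}(H-E)^{-1}e^{-\alpha f}=(H_\alpha-E)^{-1}$ on the relevant range. Taking norms and using $\|\chi_1 e^{-\alpha f}\|\le e^{\alpha a_1}$, $\|e^{\alpha f}\chi_2\|^{-1}$-type'' bound $\|e^{\alpha f}\chi_2\|$... — more precisely writing $e^{\alpha f}\chi_2 = e^{\alpha(f-a_2)}e^{\alpha a_2}\chi_2$ and noting one actually wants the weight to \emph{grow} away from $\supp(\chi_1)$, so the correct sign choice makes $\|\chi_1 e^{-\alpha f}\|\le e^{-\alpha a_1}\cdot(\text{const})$ impossible and instead one arranges $f\le a_1$ near $\supp\chi_1$ and $f\ge a_2$ near $\supp\chi_2$ with the resolvent identity $\chi_1 (H-E)^{-1}\chi_2 = \chi_1 e^{\alpha f}(H_\alpha - E)^{-1}e^{-\alpha f}\chi_2$ — one obtains $\|\chi_1(H-E)^{-1}\chi_2\|\le e^{\alpha a_1}\cdot(2/\upsilon)\cdot e^{-\alpha a_2}=\tfrac{2}{\upsilon}e^{-\alpha(a_2-a_1)}$, and optimizing by taking $\alpha$ at the top of its allowed range, $\alpha\sim c\sqrt{\upsilon_+\upsilon_-}$, yields \eqref{BCapp}. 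The main obstacle I anticipate is the careful bookkeeping in the commutator estimate $\|H_\alpha-H\|\le C_0\alpha$: because $D_S=S\,\sigma\cdot(-\iu\nabla)\,S$ has the rough factor $S$ on both sides, one must check that conjugating by $e^{\pm\alpha f}$ produces only bounded terms (it does, since $e^{\alpha f}$ is smooth and $\nabla f$ bounded, and $S$ is merely a bounded multiplication operator that commutes with $e^{\alpha f}$), and that the constant $C_0$ genuinely depends only on $\|S\|_{W^{1,\infty}}$ and $\|\nabla f\|_\infty$ and not on the supports or on $E$; keeping track of the factor $|\supp(\chi_1)|$ — which does not appear in this lemma but does in Lemma~\ref{ct-trace} — is a related subtlety best deferred to that proof.
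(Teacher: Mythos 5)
Your proposal follows the paper's route exactly: conjugate by an exponential weight, show the deformed operator remains invertible with $\|(H_{t,\epsilon}-E)^{-1}\|\leqslant 2/\upsilon$ for the deformation parameter up to $t\sim\sqrt{\upsilon_+\upsilon_-}$, then assemble by sandwiching between $\chi_1$ and $\chi_2$. The one place where your sketch is genuinely incomplete is the step you explicitly defer to \cite[Proposition~5.2]{briet-cornean}: extracting the rate $\sqrt{\upsilon_+\upsilon_-}$ rather than the cruder $\upsilon$ that your Neumann-series line would give. The paper carries this out directly: for normalized $\psi$, writing $\psi^\pm$ for the spectral projections of $\psi$ onto $[E_+,\infty)$ and $(-\infty,E_-]$, one has
$\|(H_{t,\epsilon}-E)\psi\| \geqslant \Re\langle\psi^+-\psi^-,(H_{t,\epsilon}-E)\psi\rangle \geqslant \upsilon_+\|\psi^+\|^2+\upsilon_-\|\psi^-\|^2-2t\,\|S(\sigma\cdot\nabla\langle\cdot-x_0\rangle_\epsilon)S\|\,\|\psi^+\|\,\|\psi^-\|$,
and with $t=c\sqrt{\upsilon_+\upsilon_-}$, $c$ small enough that the coefficient of the cross term is below $\sqrt{\upsilon_+\upsilon_-}$, the arithmetic--geometric inequality absorbs the cross term and leaves $\frac12\min(\upsilon_+,\upsilon_-)$. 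Two minor remarks: (a) instead of an ad hoc mollification of $|x-x_0|$ you may simply use $\langle x-x_0\rangle_\epsilon=\sqrt{\epsilon+|x-x_0|^2}$, which is smooth, has $|\nabla\langle\cdot\rangle_\epsilon|\leqslant 1$ uniformly in $\epsilon$, satisfies $\langle x-x_0\rangle_\epsilon\geqslant|x-x_0|$ (so the $\chi_2$-side bound by $e^{-ta_2}$ is immediate) and tends to $|x-x_0|$ as $\epsilon\to0$ (recovering $e^{ta_1}$ on the $\chi_1$ side in the limit); and (b) your worry that the two outer $S$ factors in $D_S=S\,\sigma\cdot(-\iu\nabla)\,S$ produce "lower order" terms that "need a careful word" is unfounded: $S$, $\chi$, and $e^{\pm tf}$ are all multiplication operators and therefore commute, so $e^{-tf}D_S e^{tf}=S\,(e^{-tf}D_0e^{tf})\,S=D_S-tS(\iu\,\sigma\cdot\nabla f)S$ with no remainder, and boundedness of $S$ (not $W^{1,\infty}$ regularity) is all that is used here.
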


\begin{proof}
Let $\epsilon>0$ and define $\langle x-x_0\rangle_\epsilon := \sqrt{\epsilon+|x-x_0|^2}$. For $t>0$, we define on $\mathcal{C}^\infty_c (\mathbb{R}^d,\mathbb{C}^n)$ the (non self-adjoint) operator
 $$
 H_{t,\epsilon} := e^{-t\langle x-x_0\rangle_\epsilon }
 H e^{t\langle x-x_0\rangle_\epsilon } 
 =H-tS\sigma\cdot(\iu\nabla\langle x-x_0\rangle_\epsilon)S.
 $$ 
The operator $H_{t,\epsilon}$ is closed on the domain of $H$. Let 
$\psi \in \mathcal{C}^\infty_c (\mathbb{R}^d,\mathbb{C}^n)$ with norm $1$. 
We denote $\psi^-=P_{(-\infty,E_-]}\psi$ and  $\psi^+=P_{[E_+,+\infty)}\psi$, 
where $P$ are the spectral projectors for $H$, and we remind that 
$\upsilon_\pm=\mathrm{dist}(E,E_\pm)$.

 We have
 \begin{align*}
  \| (H_{t,\epsilon}-E)\psi \| \geqslant & 
  \Re(\langle \psi^+-\psi^-,(H_{t,\epsilon}-E)(\psi^++\psi^-)\rangle )\\
  \geqslant & \upsilon_+ \|\psi^+\|^2 + \upsilon_- \|\psi^-\|^2 
  -2 \|tS\sigma\cdot(\iu\nabla\langle x-x_0\rangle_\epsilon )S\| 
  \, \|\psi^+\| \, \|\psi^-\|.
  \end{align*}
We observe that the length of $\nabla\langle x-x_0\rangle_\epsilon$ is bounded by a number independent of $\epsilon$. Let $t := c\sqrt{\upsilon_+\upsilon_-}$ where $c>0$ is independent of both $E$ and $\epsilon$,  and small enough so that: 
 $$\|tS\sigma\cdot(\iu\nabla\langle x-x_0\rangle_\epsilon )S\| <\sqrt{\upsilon_+\upsilon_-}/2.$$ We then have
 $$
 \|(H_{t,\epsilon}-E)\psi\| \geqslant 1/2\min(\upsilon_+,\upsilon_-).
 $$
Thus, $H_{t,\epsilon}-E$ is invertible for $E\in (E_-,E_+)$ and 
 $$ 
 \|(H_{t,\epsilon}-E)^{-1} \| \leqslant\frac{2}{\upsilon},
 $$
uniformly in $\epsilon$. Hence, 
\begin{align*}
 \|\chi_1(H-E)^{-1}\chi_2\|= &\|\chi_1e^{t\langle \cdot-x_0\rangle_\epsilon }(H_{t,\epsilon}-E)^{-1}e^{-t\langle \cdot-x_0\rangle_\epsilon }\chi_2\|\\
 \leqslant &\|\chi_1 e^{t\langle \cdot -x_0\rangle_\epsilon }\|\,\|(H_{t,\epsilon}-E)^{-1}\|\,\|e^{-t\langle \cdot -x_0\rangle_\epsilon }\chi_2\|.
\end{align*}
The central factor is bounded by $2/\upsilon$. By taking $\epsilon$ to zero, the first factor is bounded by $e^{t a_1}$ and the third factor by $e^{-t a_2}$.
We have thus proved the lemma.
\end{proof}

\begin{proof}[Proof of Lemma \ref{BC}] 
Without loss of generality we may assume that the distance between the supports 
obeys $a\geqslant 10$. Let $K=[-1/2,1/2)^d$ be a unit cube in $\RR^d$. 
Let $g_\gamma$ be the characteristic function of the cube $K_\gamma := \gamma +K$, $\gamma\in \mathbb{Z}^d$. We have 
\begin{align}\label{hc21}
\chi_1(H_0-E)^{-1}\chi_2=\sum_{\gamma,\gamma'}g_\gamma\chi_1(H_0-E)^{-1}\chi_2g_{\gamma'}.
\end{align}

The sum over $\gamma$ only contains finitely many terms because $\chi_{1}$ is compactly supported. For any given such pair $g_\gamma\chi_1$ and $g_{\gamma'}\chi_2$ we apply Lemma \ref{lemma:app-1} in which we choose  $x_0=\gamma$. We observe that in this case $a_1\leqslant 1$ and since $a\geqslant 10$ we also have $a_2\geqslant a/3+|\gamma-\gamma'|/3$. Thus \eqref{BCapp} leads to
 $$
 \| g_\gamma\chi_1(H_0-E)^{-1}\chi_2g_{\gamma'}\|
 \leqslant c_1 e^{-c_2 a}e^{-c_2 |\gamma-\gamma'|}
 $$
where $C_1$ and $c_2$ are constants depending on the interval $I$. Then we can sum over $\gamma'$ for every fixed $\gamma$ and we are done. 
\end{proof}


We are ready to prove Lemma~\ref{ct-trace}. 
\begin{proof}[Proof of Lemma~\ref{ct-trace}]
Using the same notation as in the proof of Lemma \ref{BC}, the strategy is to show the existence of two positive constants $c_1$ and $c_2$ such that in the trace norm we have:
\begin{align}\label{hc20}
\| g_\gamma\chi_1(H_0-E)^{-1}\chi_2g_{\gamma'}\|_1
\leqslant c_1 e^{-c_2 a}e^{-c_2 |\gamma-\gamma'|}.
\end{align}
Without loss of generality  we may assume that $a_0=10$ and $a\geqslant 10$. Then the pairs $\gamma$ and $\gamma'$ which give a non-zero contribution must obey $|\gamma-\gamma'|\geqslant 8$. 

We now consider $2d$ smooth and compactly supported functions $0\leqslant f_j\leqslant 1$ which obey the following conditions: $g_\gamma f_1=g_\gamma$, $f_j f_{j+1}=f_j$ 
if $1\leqslant j\leqslant 2d$, and the support of the "largest" function $f_{2d}$ is contained in the hypercube centered at $\gamma$ with side-length $2$. In particular, the support of $f_j$ and the support of  the derivatives of $f_{j+1}$ are disjoint, and also $f_{2d}g_{\gamma'}=0$. 

Denote $R_0 := (H_0-E)^{-1}$. We have $[f_j ,R_0]=R_0 S(-\iu \sigma \cdot \nabla f_j)S R_0$ and 
$$g_\gamma R_0g_{\gamma'}=g_\gamma f_{2d} R_0g_{\gamma'} =g_\gamma R_0 S(-\iu \sigma \cdot \nabla f_{2d})S R_0 g_{\gamma'}$$
and repeating this for all $j$ we have:
$$g_\gamma R_0g_{\gamma'}=g_\gamma \prod_{j=1}^{2d} \left ( R_0 S(-\iu \sigma \cdot \nabla f_{j})S \right ) \chi_{{\rm supp}(f_{2d})}R_0 g_{\gamma'}.$$
Each factor $R_0 S(-\iu \sigma \cdot \nabla f_{j})S$ belongs to $\mathcal{T}_{2d}$ with a norm which is independent of $\gamma$ and $\gamma'$. Thus the product is trace class. Moreover, by applying Lemma \ref{lemma:app-1} to the pair $\chi_{{\rm supp}(f_{2d})}$ and $g_{\gamma'}$ with $x_0=\gamma$ we obtain $a_2-a_1\geqslant |\gamma-\gamma'|/10 +a/10$ and 
 $$
 \| \chi_{{\rm supp}(f_{2d})}R_0 g_{\gamma'}\|
 \leqslant Ce^{-\alpha a}e^{-\alpha |\gamma-\gamma'|}.
 $$
This proves \eqref{hc20}. Since there is a finite number of $g_\gamma$'s which give a non-zero contribution in \eqref{hc21}, this number being proportional with the Lebesgue measure of the support of $\chi_1$, the proof is over. 
\end{proof}

\subsection*{Acknowledgment}
J.-M.B. is grateful to P. M\"uller for sharing the reference \cite{POC2017} and for fruitful discussions. It is a pleasure to thank the REB program of CIRM for giving us the opportunity to start this research. H.C. acknowledges the support of the Simons-CRM Scholar-in-residence program during the preparation of this work.


\begin{thebibliography}{10}

\bibitem{Anderson}
P.~Anderson.
\newblock Absence of diffusion in certain random lattices.
\newblock {\em Phys. Rev.}, 109:1492, 1958.
\newblock \href {http://dx.doi.org/10.1103/PhysRev.109.1492}
  {\path{doi:10.1103/PhysRev.109.1492}}.

\bibitem{BCH}
J.-M. Barbaroux, J.-M. Combes, and P.D. Hislop.
\newblock Localization near band edges for random {S}chr\"odinger operators.
\newblock {\em Helvetica Physica Acta}, 70:16--43, 05 1997.
\newblock \href {http://dx.doi.org/10.5169/seals-117008}
  {\path{doi:10.5169/seals-117008}}.

\bibitem{barbaroux}
J.-M. Barbaroux, H.D. Cornean, and E.~Stockmeyer.
\newblock Spectral gaps in graphene antidot lattices.
\newblock {\em Integral Equations and Operator Theory}, 89:631--646, 05 2017.
\newblock \href {http://dx.doi.org/10.1007/s00020-017-2411-9}
  {\path{doi:10.1007/s00020-017-2411-9}}.

\bibitem{BN2015}
H.~Boumaza and H.~Najar.
\newblock Lifshitz tails for matrix-valued anderson models.
\newblock {\em J. Stat. Phys}, 160:371--396, 2015.
\newblock \href {http://dx.doi.org/10.1007/s10955-015-1255-4}
  {\path{doi:10.1007/s10955-015-1255-4}}.

\bibitem{briet-cornean}
P.~Briet and H.D. Cornean.
\newblock Locating the spectrum for magnetic {S}chr\"odinger and {D}irac
  operators.
\newblock {\em Comm. Partial Differential Equations}, 27(5-6):1079--1101, 2002.
\newblock \href {http://dx.doi.org/10.1081/PDE-120004894}
  {\path{doi:10.1081/PDE-120004894}}.

\bibitem{BTP}
S.J. Brun, M.R. Thomsen, and T.G. Pedersen.
\newblock Electronic and optical properties of graphene antidot lattices:
  comparison of {D}irac and tight-binding models.
\newblock {\em J. Phys. Condens. Matter}, 26:265301, 2014.
\newblock \href {http://dx.doi.org/10.1088/0953-8984/26/26/265301}
  {\path{doi:10.1088/0953-8984/26/26/265301}}.

\bibitem{CGPNGG}
A.H. Castro~Neto, F.~Guinea, N.M.R. Peres, K.S. Novoselov, and A.K. Geim.
\newblock The electronic properties of graphene.
\newblock {\em Rev. Mod. Phys.}, 81:109--162, 2009.
\newblock \href {http://dx.doi.org/10.1103/RevModPhys.81.109}
  {\path{doi:10.1103/RevModPhys.81.109}}.

\bibitem{CH}
J.-M. Combes and P.D. Hislop.
\newblock Localization for some continuous, random hamiltonians in
  d-dimensions.
\newblock {\em Journal of Functional Analysis}, 124:149--180, 08 1994.
\newblock \href {http://dx.doi.org/10.1006/jfan.1994.1103}
  {\path{doi:10.1006/jfan.1994.1103}}.

\bibitem{klein}
M.~Disertori, W.~Kirsch, A.~Klein, F.~Klopp, and Rivasseau V.
\newblock In {\em Random {S}chr\"odinger Operators}, volume~25 of {\em
  Panoramas et Synth\`eses}. {S}oci\'et\'e {M}ath\'ematique de {F}rance, 2008.

\bibitem{DOW}
M.~Dvorak, W.~Oswald, and Z.~Wu.
\newblock Bandgap opening by patterning graphene.
\newblock {\em Sci. Rep.}, 3:2289, 2013.
\newblock \href {http://dx.doi.org/10.1038/srep02289}
  {\path{doi:10.1038/srep02289}}.

\bibitem{FK}
A.~Figotin and P.~Kuchment.
\newblock Band-gap structure of spectra of periodic dielectric and acoustic
  media. {II}. {T}wo-dimensional photonic crystals.
\newblock {\em SIAM J. Appl. Math.}, 56(6):1561--1620, 1996.
\newblock \href {http://dx.doi.org/10.1137/S0036139995285236}
  {\path{doi:10.1137/S0036139995285236}}.

\bibitem{FPFMBPJ}
J.A. F\"urst, J.G. Pedersen, C.~Flindt, N.A. Mortensen, M.~Brandbyge, T.G.
  Pedersen, and A.-P. Jauho.
\newblock Electronic properties of graphene antidot lattices.
\newblock {\em New J. Phys.}, 11:095020, 2009.
\newblock \href {http://dx.doi.org/10.1088/1367-2630/11/9/095020}
  {\path{doi:10.1088/1367-2630/11/9/095020}}.

\bibitem{GDB}
F.~Germinet and S.~De~Bi\`evre.
\newblock Dynamical localization for discrete and continuous random
  {S}chr\"odinger operators.
\newblock {\em Commun. Math. Phys}, 194:323--341, 1997.
\newblock \href {http://dx.doi.org/10.1007/s002200050360}
  {\path{doi:10.1007/s002200050360}}.

\bibitem{GKbootstrap}
F.~Germinet and A.~Klein.
\newblock Bootstrap multiscale analysis and localization in random media.
\newblock {\em Commun. Math. Phys.}, 222:415--448, 2001.
\newblock \href {http://dx.doi.org/10.1007/s002200100518}
  {\path{doi:10.1007/s002200100518}}.

\bibitem{GMP77}
I.~Goldsheid, S.~Molchanov, and L.A. Pastur.
\newblock Pure point spectrum of stochastic one dimensional {S}chr\"odinger
  operators.
\newblock {\em Funct. Anal. App.}, 11:1--8, 1977.

\bibitem{kato}
T.~Kato.
\newblock {\em Perturbation Theory for Linear Operators}.
\newblock Classics in Mathematics. Springer Berlin Heidelberg, 2012.

\bibitem{kirsch1989}
W.~Kirsch.
\newblock Random {S}chr\"odinger operators. {A} course.
\newblock In {\em Schr\"odinger operators ({S}\o nderborg, 1988)}, volume 345
  of {\em Lecture Notes in Phys.}, pages 264--370. Springer, Berlin, 1989.
\newblock \href {http://dx.doi.org/10.1007/3-540-51783-9_23}
  {\path{doi:10.1007/3-540-51783-9_23}}.

\bibitem{KleinKoines2004}
A.~Klein and A.~Koines.
\newblock A general framework for localization of classical waves. {II}.
  {R}andom media.
\newblock {\em Math. Phys. Anal. Geom.}, 7(2):151--185, 2004.
\newblock \href {http://dx.doi.org/10.1023/B:MPAG.0000024653.29758.20}
  {\path{doi:10.1023/B:MPAG.0000024653.29758.20}}.

\bibitem{KKS}
A.~Klein, A.~Koines, and M.~Seifert.
\newblock Generalized eigenfunctions for waves in inhomogeneous media.
\newblock {\em Journal of Functional Analysis}, 190(1):255 -- 291, 2002.
\newblock \href {http://dx.doi.org/10.1006/jfan.2001.3887}
  {\path{doi:10.1006/jfan.2001.3887}}.

\bibitem{Pastur}
L.A. Pastur.
\newblock Spectra of random self-adjoint operators.
\newblock {\em Russian Math. Surv.}, 28, 1973.
\newblock \href {http://dx.doi.org/10.1070/RM1973v028n01ABEH001396}
  {\path{doi:10.1070/RM1973v028n01ABEH001396}}.

\bibitem{Pedersen2}
J.G. Pedersen and T.G. Pedersen.
\newblock Band gaps in graphene via periodic electrostatic gating.
\newblock {\em Phys. Rev. B}, 85:235432, 2012.
\newblock \href {http://dx.doi.org/10.1103/PhysRevB.85.235432}
  {\path{doi:10.1103/PhysRevB.85.235432}}.

\bibitem{PO2005}
R.A. Prado and C.R. de~Oliveira.
\newblock Spectral and localization properties for the one-dimensional
  {B}ernoulli discrete {D}irac operator.
\newblock {\em J. Math. Phys.}, 46:1--17, 2005.
\newblock \href {http://dx.doi.org/10.1063/1.1948328}
  {\path{doi:10.1063/1.1948328}}.

\bibitem{POC2017}
R.A. Prado, C.R. de~Oliveira, and S.L. Carvalho.
\newblock Dynamical localization for discrete {A}nderson {D}irac operators.
\newblock {\em J. Stat. Phys}, 167:260--296, 2017.
\newblock \href {http://dx.doi.org/10.1007/s10955-017-1746-6}
  {\path{doi:10.1007/s10955-017-1746-6}}.

\bibitem{RS1}
M.~Reed and B.~Simon.
\newblock {\em Methods of Modern Mathematical Physics}, volume I: Functional
  Analysis.
\newblock Elsevier Science, 1981.

\bibitem{simon}
B.~Simon.
\newblock {\em Trace Ideals and Their Applications}.
\newblock Mathematical Surveys and Monographs. American Mathematical Society,
  2010.

\bibitem{thaller}
B.~Thaller.
\newblock {\em The {D}irac Equation}.
\newblock Springer-Verlag, Berlin, 1992.

\end{thebibliography}

\end{document}